\pdfoutput=1

%                                   _     _
%    _ __  _ __ ___  __ _ _ __ ___ | |__ | | ___
%   | '_ \| '__/ _ \/ _` | '_ ` _ \| '_ \| |/ _ \
%   | |_) | | |  __/ (_| | | | | | | |_) | |  __/
%   | .__/|_|  \___|\__,_|_| |_| |_|_.__/|_|\___|
%   |_|

%%% bug catcher
\RequirePackage[l2tabu,orthodox]{nag}

%%% document class
\documentclass[11pt,letterpaper,reqno,oneside]{article}

%%% settings
%                                   _     _      
%    _ __  _ __ ___  __ _ _ __ ___ | |__ | | ___ 
%   | '_ \| '__/ _ \/ _` | '_ ` _ \| '_ \| |/ _ \
%   | |_) | | |  __/ (_| | | | | | | |_) | |  __/
%   | .__/|_|  \___|\__,_|_| |_| |_|_.__/|_|\___|
%   |_|                                          

%    Ludvig Sinander
%    5 Dec 2016

%%% notes
% in the article class, you cannot use [] in \title and \author
% in the article class, \thanks goes directly after each authors' name

%______________________________________________________________________________

%    ____            _          
%   | __ )  __ _ ___(_) ___ ___ 
%   |  _ \ / _` / __| |/ __/ __|
%   | |_) | (_| \__ \ | (__\__ \
%   |____/ \__,_|___/_|\___|___/

%%% make sure PDF output has the right dimensions
%\pdfpagewidth=\paperwidth
%\pdfpageheight=\paperheight

%%% line spacing
%\usepackage{setspace}

%%% single space after ./!/?
\frenchspacing

%%% paragraphs
%\setlength{\parindent}{18pt}
	% indent size
%\setlength{\parskip}{1ex plus 0.5ex minus 0.2ex}
	% space between paragraphs

%%% emphasis with slanted roman text
\DeclareTextFontCommand{\emph}{\slshape}

%______________________________________________________________________________

%    ____            _                         
%   |  _ \ __ _  ___| | ____ _  __ _  ___  ___ 
%   | |_) / _` |/ __| |/ / _` |/ _` |/ _ \/ __|
%   |  __/ (_| | (__|   < (_| | (_| |  __/\__ \
%   |_|   \__,_|\___|_|\_\__,_|\__, |\___||___/
%                              |___/           

%%% fonts
\usepackage[T1]{fontenc}
\usepackage{lmodern}
\usepackage[utf8]{inputenc}

%%% babel
\usepackage[american,german,french,british]{babel}
%\usepackage{csquotes}

%%% microtype
\usepackage[activate={true,nocompatibility}]{microtype}

%%% csquotes
\usepackage{csquotes}

%%% bibliography
\usepackage[backend=biber,autolang=other,style=apa,maxcitenames=5,sortcites=false]{biblatex}
\DeclareLanguageMapping{british}{british-apa}
\DeclareLanguageMapping{american}{american-apa}
\DeclareLanguageMapping{german}{german-apa}
\DeclareLanguageMapping{french}{french-apa}
\DefineBibliographyExtras{french}{\restorecommand\mkbibnamefamily}
	% turn off stupid small caps for french

%%% maths
\usepackage{amsmath}
	% a lot of math, e.g. \equation*
\usepackage{amssymb}
	% more math, e.g. \mathbb
\usepackage{amsfonts}
	% also sometimes useful
\usepackage{amsthm}
	% even more math, e.g. theoremstyle
\usepackage[retainorgcmds]{IEEEtrantools}
	% nice equation arrays
\usepackage{mathtools}
	% e.g. \coloneqq
\usepackage{stmaryrd}
	% e.g. \Mapsto for correspondences
\usepackage{mleftright}
	% \left and \right for matrices
\usepackage{centernot}
	% center \not for binary relations
%\usepackage{amsbsy}
	% for bold math (e.g. vectors)

%%% fix \left & \right to avoid extraneous space
	% code from https://tex.stackexchange.com/questions/2607/spacing-around-left-and-right
\let\originalleft\left
\let\originalright\right
\renewcommand{\left}{\mathopen{}\mathclose\bgroup\originalleft}
\renewcommand{\right}{\aftergroup\egroup\originalright}

%%% graphics
\usepackage{graphicx}
	% basic package for figures
\usepackage{tikz,pgfplots}
\pgfplotsset{compat=1.10}
\usetikzlibrary{shapes.multipart}
\usetikzlibrary{decorations.markings}
	% for drawing
\usepackage{etoolbox}
\AtBeginEnvironment{tikzpicture}{\shorthandoff{;}}
	% make TikZ compatible with babel French (ffs)

%%% nice appendices
\usepackage[title]{appendix}

%%% datetime
\usepackage{datetime}
\newdateformat{datestyle}{ {\THEDAY} \monthname[\THEMONTH] \THEYEAR }

%%% enumerate/itemize settings
\usepackage{enumerate}
\usepackage{enumitem}
\setlist[enumerate,1]{label=(\arabic*)}
\setlist[itemize,1]{label=--}
\setlist[itemize,2]{label=--}
\setlist[itemize,3]{label=--}
\setlist[itemize,4]{label=--}

%%% other useful packages
\usepackage{xcolor}
	% colouring
\usepackage{verbatim}
	% cite code
\usepackage{gensymb}
	% extra symbols, including degrees
\usepackage{rotating}
	% sideways tables and more
\usepackage{subcaption}
	% subfigures & subtables with own captions
\usepackage{floatpag}
\floatpagestyle{plain}
	% commands for supressing page numbers in floats
%\usepackage[amsthm]{ntheorem}
	% extra customisation of theorems
\usepackage{marvosym}
	% weird symbols, e.g. \Letter

%%% hyphenation
\usepackage{hyphenat}
\hyphenation{jour-nal}
\hyphenation{uni-ver-sity}
\hyphenation{more-over}
\hyphenation{ac-ross}
\hyphenation{def-ine}
\hyphenation{def-ines}
\hyphenation{def-ined}
\hyphenation{stre-ngth-en}
\hyphenation{stre-ngth-ens}
\hyphenation{stre-ngth-ened}
\hyphenation{stre-ngth-ening}
\hyphenation{sin-gle-ton}
\hyphenation{sin-gle-tons}
\hyphenation{quasi-con-vex}
\hyphenation{quasi-con-vex-ity}
\hyphenation{quasi-con-cave}
\hyphenation{quasi-con-cav-ity}
\hyphenation{sup-er-mod-ul-ar}
\hyphenation{sup-er-mod-ul-ar-ity}
\hyphenation{quasi-sup-er-mod-ul-ar}
\hyphenation{quasi-sup-er-mod-ul-ar-ity}
\hyphenation{log-sup-er-mod-ul-ar}
\hyphenation{log-sup-er-mod-ul-ar-ity}
\hyphenation{po-set}
\hyphenation{po-sets}
\hyphenation{Bayes-ian}
\hyphenation{Sin-and-er}
\hyphenation{Masch-ler}
\hyphenation{Kol-mog-or-ov}
\hyphenation{Khin-chine}
\hyphenation{Key-nes}
\hyphenation{Keynes-ian}
\hyphenation{Cur-ello}
\hyphenation{Quar-ter-ly}
\hyphenation{Ec-ono-met-rica}
\hyphenation{Dov-er}
\hyphenation{in-con-sis-ten-cies}
\hyphenation{Bio-met-rika}

%______________________________________________________________________________

%    _____ _                                       
%   |_   _| |__   ___  ___  _ __ ___ _ __ ___  ___ 
%     | | | '_ \ / _ \/ _ \| '__/ _ \ '_ ` _ \/ __|
%     | | | | | |  __/ (_) | | |  __/ | | | | \__ \
%     |_| |_| |_|\___|\___/|_|  \___|_| |_| |_|___/

%%% theorem style
\theoremstyle{definition}
	% {definition} gives roman text
	% {plain} gives italic text

%%% environments
	% the [section] option makes theorems section-numbered
\newtheorem{theorem}{Theorem}%[section]
\newtheorem{proposition}{Proposition}%[section]
\newtheorem{lemma}{Lemma}%[section]
\newtheorem{corollary}{Corollary}%[section]
%[section]
\newtheorem{observation}{Observation}%[section]
\newtheorem{example}{Example}%[section]
\newtheorem{fact}{Fact}%[section]
\newtheorem{definition}{Definition}%[section]
%[section]
%[section]

%%% named theorem environment
\newtheoremstyle{named}
	{\topsep}					% ABOVESPACE
	{\topsep}					% BELOWSPACE
	{}							% BODYFONT
	{0pt}						% INDENT (empty value is the same as 0pt)
	{\bfseries}					% HEADFONT
	{}							% HEADPUNCT
	{5pt plus 1pt minus 1pt}	% HEADSPACE
	{\thmnote{#3}}				% CUSTOM-HEAD-SPEC
\theoremstyle{named}
\newtheorem{namedthm}{}

%%% QED symbol
\renewcommand{\qedsymbol}{$\blacksquare$}
	% solid black square for QED

%%% the word 'proof' in slshape
\usepackage{xpatch}
\xpatchcmd{\proof}{\itshape}{\proofheadfont}{}{}
\newcommand{\proofheadfont}{\slshape}

%______________________________________________________________________________

%    _ __ ___   ___  _ __ ___                   
%   | '_ ` _ \ / _ \| '__/ _ \                  
%   | | | | | | (_) | | |  __/                  
%   |_| |_| |_|\___/|_|  \___|                  
%    _ __   __ _  ___| | ____ _  __ _  ___  ___ 
%   | '_ \ / _` |/ __| |/ / _` |/ _` |/ _ \/ __|
%   | |_) | (_| | (__|   < (_| | (_| |  __/\__ \
%   | .__/ \__,_|\___|_|\_\__,_|\__, |\___||___/
%   |_|                         |___/           

%% hyperref and cleveref have to be loaded last

%%% hyperlinks
\usepackage{hyperref}
\hypersetup{
	colorlinks=true,
	linkcolor=black,
	citecolor=black,
	filecolor=black,
	urlcolor=black,
}

%%% cleveref
\usepackage[nameinlink]{cleveref}
\crefname{page}{p.}{pp.}
\crefname{equation}{equation}{equations}
\crefname{section}{section}{sections}
\crefname{subsection}{section}{sections}
\crefname{subsubsection}{section}{sections}
\crefname{appsec}{appendix}{appendices}
\crefname{supplsec}{supplemental appendix}{supplemental appendices}
\crefname{footnote}{footnote}{footnotes}
\crefname{figure}{figure}{figures}
\crefname{table}{table}{tables}
\crefname{theorem}{theorem}{theorems}
\crefname{proposition}{proposition}{propositions}
\crefname{lemma}{lemma}{lemmata}
\crefname{corollary}{corollary}{corollaries}
\crefname{remark}{remark}{remarks}
\crefname{observation}{observation}{observations}
\crefname{example}{example}{examples}
\crefname{fact}{fact}{facts}
\crefname{definition}{definition}{definitions}
\crefname{assumption}{assumption}{assumptions}
\crefname{exercise}{exercise}{exercises}
\crefname{notation}{notation}{notation}
\crefname{claim}{claim}{claims}
\crefname{conjecture}{conjecture}{conjectures}

%______________________________________________________________________________

%    ____  _                _             _       
%   / ___|| |__   ___  _ __| |_ ___ _   _| |_ ___ 
%   \___ \| '_ \ / _ \| '__| __/ __| | | | __/ __|
%    ___) | | | | (_) | |  | || (__| |_| | |_\__ \
%   |____/|_| |_|\___/|_|   \__\___|\__,_|\__|___/

%\newcommand{\underscore}{\underline{{ }{ }}}
	% better to use \_. looks weird but copy-pastes correctly

% \DeclareMathOperator*{\rank}{rank}

\DeclareMathOperator*{\tr}{tr}

\newcommand{\N}{\mathbf{N}}

\newcommand{\union}{\cup}

\newcommand{\intersect}{\cap}

\newcommand{\Union}{\bigcup}

\DeclarePairedDelimiter\abs{\lvert}{\rvert}

\DeclarePairedDelimiter\floor{\lfloor}{\rfloor}

%______________________________________________________________________________

%   __             _     _      _                
%   \ \  __      _(_) __| | ___| |__   __ _ _ __ 
%    \ \ \ \ /\ / / |/ _` |/ _ \ '_ \ / _` | '__|
%     \ \ \ V  V /| | (_| |  __/ |_) | (_| | |   
%      \_\ \_/\_/ |_|\__,_|\___|_.__/ \__,_|_|   

% this code defines \widebar
% code from http://tex.stackexchange.com/questions/16337/can-i-get-a-widebar-without-using-the-mathabx-package

\makeatletter
\let\save@mathaccent\mathaccent
\newcommand*\if@single[3]{%
	\setbox0\hbox{${\mathaccent"0362{#1}}^H$}%
	\setbox2\hbox{${\mathaccent"0362{\kern0pt#1}}^H$}%
	\ifdim\ht0=\ht2 #3\else #2\fi
	}
%The bar will be moved to the right by a half of \macc@kerna, which is computed by amsmath:
\newcommand*\rel@kern[1]{\kern#1\dimexpr\macc@kerna}
%If there's a superscript following the bar, then no negative kern may follow the bar;
%an additional {} makes sure that the superscript is high enough in this case:
\newcommand*\widebar[1]{\@ifnextchar^{{\wide@bar{#1}{0}}}{\wide@bar{#1}{1}}}
%Use a separate algorithm for single symbols:
\newcommand*\wide@bar[2]{\if@single{#1}{\wide@bar@{#1}{#2}{1}}{\wide@bar@{#1}{#2}{2}}}
\newcommand*\wide@bar@[3]{%
	\begingroup
	\def\mathaccent##1##2{%
%Enable nesting of accents:
	  \let\mathaccent\save@mathaccent
%If there's more than a single symbol, use the first character instead (see below):
	  \if#32 \let\macc@nucleus\first@char \fi
%Determine the italic correction:
	  \setbox\z@\hbox{$\macc@style{\macc@nucleus}_{}$}%
	  \setbox\tw@\hbox{$\macc@style{\macc@nucleus}{}_{}$}%
	  \dimen@\wd\tw@
	  \advance\dimen@-\wd\z@
%Now \dimen@ is the italic correction of the symbol.
	  \divide\dimen@ 3
	  \@tempdima\wd\tw@
	  \advance\@tempdima-\scriptspace
%Now \@tempdima is the width of the symbol.
	  \divide\@tempdima 10
	  \advance\dimen@-\@tempdima
%Now \dimen@ = (italic correction / 3) - (Breite / 10)
	  \ifdim\dimen@>\z@ \dimen@0pt\fi
%The bar will be shortened in the case \dimen@<0 !
	  \rel@kern{0.6}\kern-\dimen@
	  \if#31
	    \overline{\rel@kern{-0.6}\kern\dimen@\macc@nucleus\rel@kern{0.4}\kern\dimen@}%
	    \advance\dimen@0.4\dimexpr\macc@kerna
%Place the combined final kern (-\dimen@) if it is >0 or if a superscript follows:
	    \let\final@kern#2%
	    \ifdim\dimen@<\z@ \let\final@kern1\fi
	    \if\final@kern1 \kern-\dimen@\fi
	  \else
	    \overline{\rel@kern{-0.6}\kern\dimen@#1}%
	  \fi
	}%
	\macc@depth\@ne
	\let\math@bgroup\@empty \let\math@egroup\macc@set@skewchar
	\mathsurround\z@ \frozen@everymath{\mathgroup\macc@group\relax}%
	\macc@set@skewchar\relax
	\let\mathaccentV\macc@nested@a
%The following initialises \macc@kerna and calls \mathaccent:
	\if#31
	  \macc@nested@a\relax111{#1}%
	\else
%If the argument consists of more than one symbol, and if the first token is
%a letter, use that letter for the computations:
	  \def\gobble@till@marker##1\endmarker{}%
	  \futurelet\first@char\gobble@till@marker#1\endmarker
	  \ifcat\noexpand\first@char A\else
	    \def\first@char{}%
	  \fi
	  \macc@nested@a\relax111{\first@char}%
	\fi
	\endgroup
}
\makeatother

%______________________________________________________________________________

% captions next to figures
\usepackage{sidecap}
\sidecaptionvpos{figure}{c}

% \usepackage{xr}
% \externaldocument{suppl}

% cross-referencing for custom-named items
\makeatletter
\def\namedlabel#1#2{\begingroup
	#2%
	\def\@currentlabel{#2}%
	\phantomsection\label{#1}\endgroup
}
\makeatother

%%% extra shortcuts
\DeclareMathOperator*{\str}{str}
\newcommand{\rel}{\mathrel{Q}}
	\newcommand{\relp}{\mathrel{Q'}}
	\newcommand{\nrel}{\centernot{\rel}}
	
\newcommand{\pref}{\succ}
	\newcommand{\prefi}{\mathrel{\mathord{\pref}_i}}
	\newcommand{\prefeq}{\succeq}
	
\newcommand{\apref}{\prec}
\newcommand{\rank}{\mathrel{R}}
	\newcommand{\rankp}{\mathrel{\rank'}}
	\newcommand{\rankpp}{\mathrel{\rank''}}
	\newcommand{\rankarg}[1]{\mathrel{\mathord{\rank}_{#1}}}
	\newcommand{\rankparg}[1]{\mathrel{\mathord{\rank}'_{#1}}}
	\newcommand{\rankzero}{\rankarg{0}}
	\newcommand{\rankt}{\rankarg{t}}
	\newcommand{\ranktone}{\rankarg{t-1}}
	\newcommand{\ranktplusone}{\rankarg{t+1}}
	\newcommand{\rankT}{\rankarg{T}}
	\newcommand{\rankTone}{\rankarg{T-1}}

	\newcommand{\rankpT}{\rankparg{T}}

	\newcommand{\nrank}{\centernot{\rank}}
	\newcommand{\nrankp}{\mathrel{\centernot{\rank}\hspace{-0.27em}'}}
	\newcommand{\nrankpp}{\mathrel{\centernot{\rank}\hspace{-0.27em}''}}
	\newcommand{\nrankarg}[1]{\mathrel{\mathord{\nrank}_{#1}}}
	
	\newcommand{\nrankt}{\nrankarg{t}}
	\newcommand{\nranktone}{\nrankarg{t-1}}

	\newcommand{\nrankTone}{\nrankarg{T-1}}
	
	\newcommand{\rankfn}[2]{\mathrel{\mathord{\rank}^{\mathord{#1}}(\mathord{#2})}}
	\newcommand{\ranksup}[1]{\mathrel{\mathord{\rank}^{#1}}}
	\newcommand{\nranksup}[1]{\mathrel{\mathord{\nrank}^{#1}}}
	\newcommand{\rankh}{\ranksup{h}}
	\newcommand{\rankhp}{\ranksup{h'}}
	\newcommand{\rankhpp}{\ranksup{h''}}
	\newcommand{\nrankh}{\nranksup{h}}
\newcommand{\perm}{\mathrel{W}}
	\newcommand{\permp}{\mathrel{\perm'}}
	\newcommand{\permpp}{\mathrel{\perm''}}
	\newcommand{\permsup}[1]{\mathrel{\mathord{\perm}^{#1}}}
	\newcommand{\permarg}[1]{\mathrel{\mathord{\perm}_{#1}}}
	\newcommand{\permparg}[1]{\mathrel{\mathord{\perm}'_{#1}}}
\newcommand{\votei}{\mathrel{V_i}}
	\newcommand{\votepi}{\mathrel{V'_i}}

\newcommand{\sayh}{\mathrel{S^h}}
\newcommand{\strat}{\sigma}
	\newcommand{\stratp}{\strat'}
	\newcommand{\stratv}[1]{\strat_{#1}}
	\newcommand{\stratvp}[1]{\strat_{#1}'}
	\newcommand{\strati}{\stratv{i}}
	\newcommand{\stratip}{\stratvp{i}}

%%% bibliography
\addbibresource{bibl.bib}

%______________________________________________________________________________

%    _____ _ _   _
%   |_   _(_) |_| | ___
%     | | | | __| |/ _ \
%     | | | | |_| |  __/
%     |_| |_|\__|_|\___|

\title{\scshape Agenda-manipulation in ranking%
\thanks{We are deeply grateful to Eddie Dekel for many detailed comments.
We thank him,
Péter Es\H{o},
Alessandro Pavan,
John Quah and
Bruno Strulovici
for their guidance and support.
We have profited from comments and suggestions provided by
Georgy Egorov, Benny Moldovanu and Wojciech Olszewski (particularly fruitful),
and by
Nageeb Ali,
Nemanja Antić,
David Austen-Smith,
Marco Battaglini,
Steve Coate,
Tommaso Denti,
Francesc Dilmé,
Piotr Dworczak,
Jeff Ely,
Matteo Escudé,
Ben Golub,
Sergiu Hart,
Claudia Herresthal,
Matt Jackson,
Mathijs Janssen,
Stephan Lauermann,
Jiangtao Li,
Shengwu Li,
Elliot Lipnowski,
Francisco Poggi,
Marek Pycia,
Sven Rady,
Ariel Rubinstein,
Eddie Schlee,
Marciano Siniscalchi,
Rani Spiegler,
Francesco Squintani,
Jean-Marc Tallon,
Asher Wolinsky,
Andrea Galeotti (as editor),
four anonymous referees,
and audiences at Bocconi, Bonn, Cornell, Hebrew/Tel Aviv, Hong Kong, Northwestern, Oxford, Paris, Penn State, Royal Holloway, UCL, Warwick and the $18^{\text{th}}$ IO Theory Conference at UC Berkeley.
Curello gratefully acknowledges funding from the German Research Foundation (DFG) through CRC TR 224 (Project B02), as well as from the Hausdorff Center for Mathematics.}}

\author{Gregorio Curello \\ University of Bonn
\and Ludvig Sinander \\ University of Oxford}

\date{28 September 2022}

\makeatletter
	\AtBeginDocument{ \hypersetup{
		pdftitle = {Agenda-manipulation in ranking},
		pdfauthor = {Gregorio Curello and Ludvig Sinander}
		} }
\makeatother

%______________________________________________________________________________

%    ____                                        _
%   |  _ \  ___   ___ _   _ _ __ ___   ___ _ __ | |_
%   | | | |/ _ \ / __| | | | '_ ` _ \ / _ \ '_ \| __|
%   | |_| | (_) | (__| |_| | | | | | |  __/ | | | |_
%   |____/ \___/ \___|\__,_|_| |_| |_|\___|_| |_|\__|

\begin{document}

\maketitle

%%%%%%%%%%%%%%%%%%%%%%
%%%%%%%%%%%%%%%%%%%%%%
\begin{abstract}
	We study the susceptibility of committee governance (e.g. by boards of directors),
	modelled as the collective determination of a ranking of a set of alternatives,
	to manipulation of the order in which pairs of alternatives are voted on---\emph{agenda-manipulation.}
	We exhibit an agenda strategy called \emph{insertion sort}
	that allows a self-interested committee chair
	with no knowledge of how votes will be cast
	to do
	as well as if she had complete knowledge.
	Strategies with this `regret-freeness' property
	are characterised by their \emph{efficiency,}
	and by their avoidance of two intuitive errors.
	What distinguishes regret-free strategies from each other
	is how they prioritise among alternatives;
	insertion sort prioritises \emph{lexicographically.}
\end{abstract}
%%%%%%%%%%%%%%%%%%%%%%
%%%%%%%%%%%%%%%%%%%%%%

%%%%%%%%%%%%%%%%%%%%%%
%%%%%%%%%%%%%%%%%%%%%%
\section{Introduction}
\label{sec:intro}
%%%%%%%%%%%%%%%%%%%%%%
%%%%%%%%%%%%%%%%%%%%%%

In many modern organisations, governance is entrusted not to a single individual, but to a committee:
a group that reaches decisions by aggregating the individual votes of its members, usually via a majority or super-majority rule.
Public firms and charities are run by boards of directors, for example.

To govern is to make decisions: to choose one alternative from among those that are available.
These alternatives may be e.g. policies, candidates for a job, or investments,
or they could be coarser things like \emph{attributes} of policies, candidates or investments.

In practice, most governing committees meet too infrequently to be able to consider and solve each individual decision problem faced by their organisation.
Lacking the ability or time to foresee which decision problems (sets of available alternatives) will arise at any given future date,
the committee cannot specify what alternative ought to be chosen.
Committees therefore tend to set only high-level priorities,
while delegating the details of day-to-day decision-making to executive officers.
Such committee governance may be captured in stylised fashion
by thinking of the committee as determining a \emph{ranking} of all the potentially-available alternatives,
with executives instructed to choose the highest-ranked alternative from among those that turn out to be available.

As observed by Condorcet (\citeyear{Condorcet1785}),
the majority will of a committee can easily fail to be transitive:
there may be a majority for alternative $\alpha$ over $\beta$, for $\beta$ over $\gamma$, and for $\gamma$ over $\alpha$.
Whenever there are such (Condorcet) cycles, the order in which questions are considered by the committee will affect what decision is reached.
The chair of the committee can thus influence collective decisions via her control of the agenda.

We model agenda-setting by the chair
as her determining the order in which pairs of alternatives are voted on.
We assume that the committee is sovereign, so that a majority vote for $\alpha$ over $\beta$ leads to $\alpha$ being ranked above $\beta$.
Suppose that $\beta$ had already been ranked above $\gamma$.
Since rankings are by their nature transitive,
it is then necessary also that $\alpha$ be ranked above $\gamma$.
These are the rules: transitivity is imposed.

We abstract from strategic voting:
we assume that if $\alpha$ garners a majority over $\beta$ under one agenda-setting strategy of the chair,
then the same is true under every agenda-setting strategy.
This assumption buys tractability, but at a price:
it is a strong idealisation in many applied contexts,
albeit empirically reasonable in some cases.
We view this as the main limitation of our analysis.

The chair is uncertain how voters will vote on any given pair of alternatives,
so that the only way to find out whether $\alpha$ would beat $\beta$ or vice-versa
is to offer a (binding) vote on this pair.
In this uncertain world, how much influence can the chair exert via her control of the agenda?
And in doing so, what sorts of agenda-setting strategies should we expect the chair to use?

Our first finding is that agenda-setting is surprisingly powerful.
In particular, there exist \emph{regret-free} agenda-setting strategies:
no matter what the unknown voting behaviour of the committee,
the ranking reached by such a strategy
is not unambiguously worse (from the chair's perspective) than any other ranking that the chair could have reached had she perfectly known how voters would behave.

We prove this by exhibiting a regret-free strategy called \emph{insertion sort,} which works as follows.
Let's call the chair's $k^\text{th}$-favourite alternative simply `$k$'.
No votes involving $k$ are offered
until all of the alternatives that the chair
considers strictly worse than $k$ have been totally ranked.
Once that has happened,
insertion sort pits $k$ against whichever of the worse alternatives is ranked highest; if $k$ wins, then by transitivity, it is ranked above all of the worse alternatives.
If not, then $k$ is next pitted against the \emph{second-}highest-ranked of the worse alternatives; if $k$ wins, then it is ranked above all but the highest-ranked of the worse alternatives.
If not, then $k$ is pitted against the \emph{third-}highest ranked of the worse alternatives; and so on.

Having established that regret-free agenda-setting is possible,
we next ask what its qualitative properties are:
that is, what do regret-free strategies have in common?
We provide two characterisations of regret-free strategies.
The first describes regret-free agenda-setting in terms of its outcomes: regret-free strategies are exactly those that (ex post) reach rankings that satisfy an \emph{efficiency} property.
The second is a qualitative description of behaviour:
it identifies two intuitive types of \emph{error,} and shows that avoiding these is both necessary and sufficient for regret-freeness.

Finally, we ask what it is that distinguishes one regret-free strategy from another.
The answer is \emph{prioritisation:} while one regret-free strategy focusses on getting a certain pair of alternatives ranked `right' (i.e. the way around that the chair prefers), another strategy may prioritise another pair.
We show that insertion sort is characterised by \emph{lexicographic} prioritisation:
it places the chair's favourite alternative as highly as possible,
and subject to that, places her second-favourite alternative as well as possible, and so on.

%%%%%%%%%%%%%%%%%%%%%%
\subsection{Related literature}
\label{sec:intro:lit}
%%%%%%%%%%%%%%%%%%%%%%

We contribute to the agenda-manipulation literature initiated by \textcite{Farquharson1969} and \textcite{Black1958}, which asks how a committee's choice can be influenced by varying the order in which binary questions are voted on (the \emph{agenda}).
Two classes of agenda are emphasised: the \emph{amendment procedure} used in Anglo-Saxon and Scandinavian legislatures, and the \emph{successive procedure} widely used in continental Europe.
Under complete information,
for both sincere and strategic voting,
\textcite{Miller1977} and \textcite{Banks1985} characterise which alternatives an agenda-setter can induce a committee to choose using
(i) amendment agendas, (ii) successive agendas, and (iii) arbitrary agendas.%
	\footnote{Part (i) under strategic voting due to \textcite{Banks1985}; the rest are from \textcite{Miller1977}.
	See \textcite[§4.10]{Myerson1991} for a nice textbook treatment of (iii) under strategic voting.}$^,$%
	\footnote{Related work by \textcite{ApesteguiaBallesterMasatlioglu2014} and \textcite{Horan2021} axiomatises the decision rules (which specify a choice for each choice set and preference profile)
	defined by various agendas under strategic voting.}
Extensions include
super- and sub-majority voting rules \parencite{BarberaGerber2017}
and random agendas \parencite{RoesslerShelegiaStrulovici2018}.

Specifically, this paper belongs to the literature on agenda-setting under incomplete information about voters' preferences.
This literature long consisted of a single pioneering paper \parencite{OrdeshookPalfrey1988},
but has recently received interest from \textcite{KleinerMoldovanu2017}, \textcite{GershkovMoldovanuShi2017,GershkovMoldovanuShi2019te,GershkovMoldovanuShi2020} and \textcite{GershkovKleinerMoldovanuShi2020}.
We depart from these papers by considering a committee tasked not with choosing a single alternative, but rather with \emph{ranking} all of the alternatives.
We establish a link with the older literature by relating insertion sort to the amendment procedure (§\ref{sec:charac:ra_is}).

Also related is the social choice literature on \emph{ranking methods,}
meaning maps that assign to each (possibly cyclic) majority will a (transitive) ranking.
`Impossibility' results such as Arrow's (\citeyear{Arrow1950,Arrow1951,Arrow1963}) theorem assert that certain normatively appealing properties are inconsistent.
The literature beginning with \textcite{Zermelo1929,Wei1952,Kendall1955} studies ranking methods with at least \emph{some} attractive normative properties.%
	\footnote{For example, Copeland's (\citeyear{Copeland1951}) method \parencite{Rubinstein1980},
	the Kemeny--Slater method \parencite{Kemeny1959,Slater1961,YoungLevenglick1978,Young1986,Young1988}
	and the fair-bets method \parencite{Daniels1969,MoonPullman1970,SlutzkiVolij2005}. See \textcite{CharonHudry2010} and \textcite{GonzalezdiazHendrickLohmann2014} for an overview.}
Our chair's problem can be formulated as a choice among ranking methods,
but there is a feasibility constraint,
and the objective reflects the chair's self-interest.
This suggests that solutions to our chair's problem will bear little relation to the normative-motivated ranking methods in the literature;
we confirm this in \cref{suppl:ranking_method}.

Both the voting and social choice literatures
presume the existence of Condorcet cycles.
Cycles are certainly a priori plausible.%
	\footnote{For example,
	the probability of a cycle
	with five (seven) voters and five (six) alternatives
	is $39\%$ ($61\%$).
	See \textcite{Gehrlein1989} for a table of such probabilities.}
And they are empirically common,
especially when there are few voters, as on a committee.%
	\footnote{The authoritative survey of \textcite{GehrleinLepelley2011} 
	includes data on 127 elections, of which 32 had small electorates ($\leq 25$ voters).
	A Condorcet cycle existed in $29\%$ of all elections,
	and in $59\%$ of small elections (see their Table 1.1).}
	% this sample of 127 excludes the elections attributed to Dobra (1983), since it not reported how many of these had a transitive majority will.

Regret-based criteria for evaluating strategies appear in decision theory, including minimax regret \parencite{Savage1951} and `regret theory' \parencite{Bell1982,LoomesSugden1982,Fishburn1988}.
The online learning literature \parencite{Gordon1999,Zinkevich2003} studies (asymptotic) regret-freeness.

%%%%%%%%%%%%%%%%%%%%%%
\subsection{Roadmap}
\label{sec:intro:outline}
%%%%%%%%%%%%%%%%%%%%%%

We begin in §\ref{sec:applications} with two applications.
We describe the environment and basic concepts in §\ref{sec:environment},
and discuss interpretation and extensions in §\ref{sec:discussion}.
We then (§\ref{sec:is}) establish that regret-free strategies exist by exhibiting one: insertion sort.
In §\ref{sec:charac}, we characterise regret-freeness in terms of efficiency (\Cref{theorem:regretfree_efficient}) and error-avoidance (\Cref{theorem:regretfree_errors}), and show that both characterisations are tight (\Cref{proposition:regretfree_efficient_tight,proposition:regretfree_errors_tight}).
In the same section, we show that regret-free strategies differ in their prioritisation,
and characterise insertion sort in terms of its lexicographic prioritisation (\Cref{theorem:is_lexicographic}).

%%%%%%%%%%%%%%%%%%%%%%
%%%%%%%%%%%%%%%%%%%%%%
\section{Two applications}
\label{sec:applications}
%%%%%%%%%%%%%%%%%%%%%%
%%%%%%%%%%%%%%%%%%%%%%

Before describing the abstract model,
we fix ideas with two applications.

\begin{namedthm}[Hiring.]
	\label{appl:hiring_intr}
	The alternatives $\mathcal{X}$ are candidates for a job.
	Only some unknown subset of candidates would accept the job if offered it.
	The hiring committee decides the order in which offers should be made (a ranking):
	the job will be offered to the first candidate,
	then to the second if the first declined, and so on.

	Relabelling, we may instead think of the alternatives as investment projects of unknown viability.
	A firm's board (or a lower-level committee) ranks the projects,
	whereupon a manager evaluates the first project (e.g. by commissioning market research) and implements it if viable,
	and otherwise evaluates the second project and implements that if viable, and so on.

	Similarly, the committee could be a policy-making body, such as a ministerial cabinet or a parliamentary committee.
	Any given policy may turn out to be infeasible, for example due to a court ruling or political opposition.
	The committee therefore ranks the policies and tasks a bureaucrat with implementing the first policy if feasible, the second if not, and so on.
\end{namedthm}

\begin{namedthm}[Party lists.]
	\label{appl:party_intr}
	A political party's leadership committee must draw up a \emph{party list,} meaning a ranking of the party's parliamentary candidates $\mathcal{X}$.
	The $K$ top-ranked candidates will earn parliamentary seats, where $K$ is the (uncertain) number of seats won by the party in a subsequent election.

	Electoral systems of this type, called \emph{party-list proportional representation,} are used in most of the world's democracies.
	More precisely,
	we described the \emph{closed-list} variant that gives voters no sway over party lists; this system is used in several dozen states.%
		\footnote{For example,
		Argentina,
		Germany,
		Japan,
		South Africa and
		Turkey.}
	Other countries allow the electorate to influence party lists.%
		\footnote{E.g.
		Brazil,
		Indonesia,
		Iraq,
		the Netherlands and
		Ukraine.}
	In many of these, voters exert little influence on party lists in practice,%
		\footnote{In many countries, such as Indonesia and the Netherlands,
		the party list is only altered if a candidate receives a large number of personal votes.
		Furthermore, voting for an individual candidate is typically optional, and many voters do not: in Sweden, only about a quarter do \parencite{Oscarsson2019}.
		Of course there are exceptions, e.g.
		voting for individual candidates is mandatory and important in Finnish parliamentary elections.}
	making the closed-list system a reasonable idealisation.

	We may re-interpret this as a firm planning for downsizing.
	The firm will have to fire an uncertain number $K$ of its workers $\mathcal{X}$.
	The board (or a lower-level committee) plans ahead by drawing up an order in which employees will (if necessary) be let go.
\end{namedthm}

%%%%%%%%%%%%%%%%%%%%%%
%%%%%%%%%%%%%%%%%%%%%%
\section{Environment}
\label{sec:environment}
%%%%%%%%%%%%%%%%%%%%%%
%%%%%%%%%%%%%%%%%%%%%%

There is a finite set $\mathcal{X}$ of alternatives.
A \emph{ranking} of the alternatives is a binary relation $\rank$,
where $x \rank y$ reads `$x$ is ranked above $y$ according to $\rank$'.
Formally, a ranking is a binary relation 
that is transitive,
irreflexive (no alternative is ranked above itself)
and total (each distinct pair is ranked).%
	\footnote{Definitions of some standard order-theoretic terms are collected in \cref{app:st_definitions}.}
To capture the notion of a `possibly-incomplete ranking',
we use the term \emph{proto-ranking}
for relations which are transitive and irreflexive, but not necessarily total.

There is a committee and its \emph{chair.}
The committee comprises an odd number $I$ of voters, $i \in \{1,\dots,I\}$.
The committee meets to determine a ranking, and the chair sets the agenda.

%%%%%%%%%%%%%%%%%%%%%%
\subsection{Interaction}
\label{sec:environment:interaction}
%%%%%%%%%%%%%%%%%%%%%%

Initially, no alternatives are ranked.
In each period, the chair offers a vote on an as-yet unranked pair of alternatives.
The committee votes on this pair, and whichever alternative garners more votes wins. (The chair does not have a vote, though that can be allowed for: see §\ref{sec:discussion}.)
The winning alternative is ranked above the losing one.
In addition, transitivity is imposed: if alternative $x$ is ranked above $y$ and $y$ above $z$, then $x$ is considered ranked above $z$.
The chair continues to offer votes until all alternatives are ranked.

More explicitly,
the ranking decisions made by the end of a period $t$
are captured by a proto-ranking $\rankt$,
where $x \rankt y$ if $x$ has been ranked above $y$, and $x \nrankt y$ otherwise.
In period $t+1$, the chair offers a vote on an $\rankt$-unranked pair $x,y$ of distinct alternatives.
If $x$ is the winner, then the new proto-ranking $\ranktplusone$
is the transitive closure of $\mathord{\rankt} \union \{(x,y)\}$.%
	\footnote{Equivalently, $z \ranktplusone w$ iff either (1) $z \rankt w$, or (2) both (a) $z=x$ or $z \rankt x$ and (b) $y=w$ or $y \rankt w$.
	See \cref{app:pf_charac:noerror_efficient} (\Cref{observation:tr_cl}, \cpageref{observation:tr_cl}) for a proof of equivalence.}

A \emph{history} is a sequence of pairs offered for votes and a winner of each vote.
A \emph{strategy} of the chair specifies what pair to offer after each non-terminal history.
We give formal versions of these definitions in \cref{app:extra:defns}.

%%%%%%%%%%%%%%%%%%%%%%
\subsection{The majority will}
\label{sec:environment:general_will}
%%%%%%%%%%%%%%%%%%%%%%

The chair need not keep track of individual votes: all that matters for each pair of alternatives is which one wins.
This essential information is captured by the binary relation $\perm$ such that $x \perm y$ iff a majority of voters vote for $x$ over $y$ when the pair $x,y$ is offered.
We call $\perm$ the \emph{majority will.}

We consider all possible majority wills $\perm$,
meaning all \emph{total and asymmetric} relations.
Clearly only such relations need be contemplated
(for each pair of alternatives, one must win and the other lose).
Conversely, all such relations must be considered
because each of them is the majority will of \emph{some} committee,
as we show in \cref{app:extra:arise_charac}.

The \emph{outcome} of a strategy under a majority will $\perm$ is the ranking that results.
If a history is visited by a strategy $\strat$ under some majority will, then we say that it belongs to the \emph{path} of $\strat$.
We give formal definitions of outcomes and paths in \cref{app:extra:defns}.

%%%%%%%%%%%%%%%%%%%%%%
\subsection{Reachable rankings}
\label{sec:environment:W-reachability}
%%%%%%%%%%%%%%%%%%%%%%

For a majority will $\perm$,
we call a ranking \emph{$\perm$-reachable} iff it is the outcome under $\perm$ of some strategy of the chair.
This captures ex-post feasibility:
were the chair to have perfect knowledge of $\perm$,
she could achieve all and only $\perm$-reachable rankings
by offering some sequence of pairwise votes.

\setcounter{example}{0}
\begin{example}
	\label{example:123_Wreachable}
	There are three alternatives $\mathcal{X} = \{\alpha,\beta,\gamma\}$, and the majority will satisfies $\alpha \perm \gamma \perm \beta \perm \alpha$. Graphically:
	\begin{equation*}
		\vcenter{\hbox{%
		\begin{tikzpicture}[every text node part/.style={align=center}]

			\def \radius {0.8cm}
			\def \margin {30}

			% arrows
			\foreach \s in {1,2,3}
			{
			\draw[->,>=latex,thick] ({90 - 120 * (\s + 2) + \margin}:\radius)
				arc ({90 - 120 * (\s + 2) + \margin}:{90 - 120 * (\s+1)-\margin}:\radius);
			}

			% labels
			\node at ({90 - 120 * (1 + 2)}:\radius) {$\alpha$};
			\node at ({90 - 120 * (2 + 2)}:\radius) {$\beta$};
			\node at ({90 - 120 * (3 + 2)}:\radius) {$\gamma$};

		\end{tikzpicture}%
		}}
	\end{equation*}
	The ranking $\beta \rank \alpha \rank \gamma$ is $\perm$-reachable, achieved by offering $\beta,\alpha$ and $\alpha,\gamma$.
	Similarly, the rankings $\alpha \rankp \gamma \rankp \beta$ and $\gamma \rankpp \beta \rankpp \alpha$ are $\perm$-reachable.
\end{example}

% Feasibility admits a characterisation, \Cref{observation:reachability_adjacency} in \cref{app:extra:reachability_adjacency}, that we use in some of our proofs.

%%%%%%%%%%%%%%%%%%%%%%
\subsection{The chair's preferences}
\label{sec:environment:preferences}
%%%%%%%%%%%%%%%%%%%%%%

The chair has a strict preference over the alternatives $\mathcal{X}$, denoted $\pref$.
We do not fully specify the chair's preferences over rankings.
We assume only that she prefers a ranking over another whenever the former is \emph{more aligned} with her own preference over alternatives.

\begin{definition}
	\label{definition:more_aligned}
	For rankings $\pref$, $\rank$ and $\rankp$,
	we say that $\rank$ is \emph{more aligned with $\pref$ than} $\rankp$ iff for any pair $x,y \in \mathcal{X}$ of alternatives with $x \pref y$, if $x \rankp y$ then also $x \rank y$.
\end{definition}

In words, whenever a pair $x,y \in \mathcal{X}$ is ranked `right' by $\rankp$ (viz. $x \pref y$ and $x \rankp y$), it is also ranked `right' by $\rank$ (i.e. $x \rank y$).

\setcounter{example}{0}
\begin{example}[continued]
	\label{example:123_maw}
	Let the chair's preference be $\alpha \pref \beta \pref \gamma$.
	\Cref{fig:maw} depicts how all rankings are ordered by `more aligned with $\pref$ than'.
	\begin{SCfigure}
		\centering
		\begin{tikzpicture}[every text node part/.style={align=center}]

			\def \radius {2cm}
			\def \buff {8}
			\def \margin {5}

			% nodes
			\node at ({30 + 60 * 1}:{\radius}) {$\alpha \rankarg{\natural} \beta \rankarg{\natural} \gamma$};
			\node at ({30 + 60 * 2 + \buff }:\radius) {$\beta \rank \alpha \rank \gamma$};
			\node at ({30 + 60 * 3 - \buff }:\radius) {$\beta \rankarg{\sharp} \gamma \rankarg{\sharp} \alpha$};
			\node at ({30 + 60 * 4}:{\radius}) {$\gamma \rankpp \beta \rankpp \alpha$};
			\node at ({30 + 60 * 5 + \buff }:\radius) {$\gamma \rankarg{\flat} \alpha \rankarg{\flat} \beta$};
			\node at ({30 + 60 * 6 - \buff }:\radius) {$\alpha \rankp \gamma \rankp \beta$};

			% arrows
			\draw[->,>=latex,thick] ({30 + 60 * (1) + \margin*6}:{\radius})
				arc ({30 + 60 * (1) + \margin*6}:{30 + 60 * (1+1) - \margin}:{\radius});
			\draw[->,>=latex,thick] ({30 + 60 * (2) + \margin*3.5}:{\radius})
				arc ({30 + 60 * (2) + \margin*3.5}:{30 + 60 * (2+1) - \margin*3.5}:{\radius});
			\draw[->,>=latex,thick] ({30 + 60 * (3) + \margin}:{\radius})
				arc ({30 + 60 * (3) + \margin}:{30 + 60 * (3+1) - \margin*6}:{\radius});
			\draw[<-,>=latex,thick] ({30 + 60 * (4) + \margin*6}:{\radius})
				arc ({30 + 60 * (4) + \margin*6}:{30 + 60 * (4+1) - \margin}:{\radius});
			\draw[<-,>=latex,thick] ({30 + 60 * (5) + \margin*3.5}:{\radius})
				arc ({30 + 60 * (5) + \margin*3.5}:{30 + 60 * (5+1) - \margin*3.5}:{\radius});
			\draw[<-,>=latex,thick] ({30 + 60 * (6) + \margin}:{\radius})
				arc ({30 + 60 * (6) + \margin}:{30 + 60 * (6+1) - \margin*6}:{\radius});

		\end{tikzpicture}
		\caption{`More aligned with $\pref$ than' for three alternatives $\mathcal{X} = \{\alpha,\beta,\gamma\}$, where $\alpha \pref \beta \pref \gamma$.
		In this (`Hasse') diagram, there is a directed path from one ranking to another iff the former is more aligned with $\pref$.}
		\label{fig:maw}
	\end{SCfigure}
	Since $\rankpp$ ranks \emph{no} pairs `right',
	every ranking is more aligned with $\pref$ than $\rankpp$.
	The rankings $\rank$ and $\rankp$ are incomparable to each other,
	because $\rank$ ranks $\beta,\gamma$ `right' (while $\rankp$ does not),
	whereas $\rankp$ ranks $\alpha,\beta$ `right' ($\rank$ doesn't).
\end{example}

%%%%%%%%%%%%%%%%%%%%%%
\subsection{Regret-free strategies}
\label{sec:environment:chair_problem}
%%%%%%%%%%%%%%%%%%%%%%

Given a majority will $\perm$,
we call a ranking \emph{$\perm$-unimprovable} iff there is no other ranking that is both $\perm$-reachable and more aligned with $\pref$.
Explicitly,
$\rank$ is $\perm$-unimprovable exactly if
for any $\perm$-reachable ranking $\mathord{\rankp} \neq \mathord{\rank}$,
some pair $x,y \in \mathcal{X}$
is ranked `right' by $\rank$ and `wrong' by $\rankp$.

\setcounter{example}{0}
\begin{example}[continued]
	\label{example:123_unimp}
	The ranking $\rank$ is $\perm$-unimprovable,
	since neither of the other two $\perm$-reachable rankings is more aligned with $\pref$
	($\rankpp$ is \emph{less} aligned, and $\rankp$ is incomparable).
	$\rankp$ is also $\perm$-unimprovable;
	$\rankpp$ is not.
\end{example}

The chair does not know the majority will.
One would therefore expect her to face trade-offs: a strategy that does well against $\perm$ may have a regrettable outcome under some $\permp$.
A \emph{regret-free} strategy is one that has no such downside: its outcome under any majority will is unimprovable ex post.

\begin{definition}
	\label{definition:regretfree}
	A strategy is \emph{regret-free} iff for any majority will $\perm$, its outcome under $\perm$ is $\perm$-unimprovable.
\end{definition}

Regret-freeness is a highly demanding optimality property.
Our first result (\Cref{theorem:is_efficient} in §\ref{sec:is}) will be that, surprisingly, regret-free strategies exist.

%%%%%%%%%%%%%%%%%%%%%%%%%%%%%%%%%%%
%%%%%%%%%%%%%%%%%%%%%%%%%%%%%%%%%%%
\section{Discussion}
\label{sec:discussion}
%%%%%%%%%%%%%%%%%%%%%%%%%%%%%%%%%%%
%%%%%%%%%%%%%%%%%%%%%%%%%%%%%%%%%%%

Various aspects of the model deserve comment.

%%%%%%%%%%%%%%%%%%%%%%%%%%%%%%%%%%%
\subsection{\texorpdfstring{`}{'}More aligned than'}
\label{sec:environment:maw}
%%%%%%%%%%%%%%%%%%%%%%%%%%%%%%%%%%%

`More aligned than' captures \emph{unambiguous} superiority of one ranking over another with respect to the chair's preference $\pref$.%
	\footnote{It is an instance of \emph{single-crossing dominance,} a general way of comparing rankings (or preferences);
	see \textcite{CurelloSinander2022}.}
We illustrate by example:

\begin{namedthm}[Hiring {\normalfont(continued from §\ref{sec:applications})}.]
	\label{appl:hiring_maw}
	\hyperref[appl:hiring_intr]{Recall} that the top-ranked candidate in $X$ will be hired, where $X \subseteq \mathcal{X}$ is unknown.
	A ranking is more aligned with $\pref$ exactly if it hires a weakly $\pref$-better candidate at every realisation of the uncertainty $X$,
	as we show in \cref{app:extra:maw_charac}.
	Equivalently, a more aligned ranking
	is one that is preferred by every expected-utility preference consistent with $\pref$.
\end{namedthm}

This application demonstrates two general points.
First, more aligned rankings are unambiguously better for the chair, given her preference $\pref$ over the alternatives.
Secondly, these comparisons are the \emph{only} unambiguous ones:
any further comparisons would have to be based on extraneous cardinal information,
rather than on $\pref$ alone.

\begin{namedthm}[Party lists {\normalfont(continued from §\ref{sec:applications})}.]
	\label{appl:party_maw}
	\hyperref[appl:party_intr]{Recall} that the $K$ highest-ranked candidates win parliamentary seats, where $K$ is uncertain.
	If a ranking $\rank$ is more aligned with $\pref$ than $\rankp$,
	then the $k^\text{th}$ $\pref$-best candidate who wins a seat under $\rank$ is weakly $\pref$-better than the $k^\text{th}$ $\pref$-best under $\rankp$.
	The converse is true provided there is some uncertainty about which candidates can take up seats.%
		\footnote{A party cannot alter its list after submitting it,
		but circumstances may render some of its candidates ineligible for parliamentary seats.
		For example, many countries disqualify candidates convicted of a serious crime,
		and all disqualify the deceased.}
\end{namedthm}

The `more aligned than' criterion has its limitations.
No extra credit is given for ranking \emph{many} pairs of alternatives `right', for example,
nor for having pairs \emph{near the top} ranked `right'.

%%%%%%%%%%%%%%%%%%%%%%%%%%%%%%%%%%%
\subsection{Unimprovability as optimality}
\label{sec:environment:rf}
%%%%%%%%%%%%%%%%%%%%%%%%%%%%%%%%%%%

Unimprovability is the strongest ex-post optimality concept available without further assumptions about the chair's preference over rankings.%
	\footnote{It is a non-trivial optimality concept:
	we show in \cref{suppl:frac} that
	not all $\perm$-reachable rankings are $\perm$-unimprovable
	exactly if $\perm$ contains a (Condorcet) cycle,
	and that for a typical $\perm$, \emph{most} $\perm$-reachable rankings are \emph{not} $\perm$-unimprovable.}
It can therefore be thought of as optimality for a chair who is unable to make fine distinctions between rankings (which are complicated objects),
or who is content to `satisfice'.

Were we fully to specify the chair's preference over rankings, we could still break her problem under full information about $\perm$ into two parts: first, reach the frontier ($\perm$-unimprovability), then choose among the frontier rankings.

\begin{namedthm}[Hiring {\normalfont(continued)}.]
	\label{appl:hiring_unimp}
	A ranking $\rank$ is $\perm$-unimprovable
	exactly if any $\perm$-reachable ranking $\mathord{\rankp} \neq \mathord{\rank}$ hires a strictly $\pref$-worse candidate at some realisation $X \subseteq \mathcal{X}$ of uncertainty.
\end{namedthm}

%%%%%%%%%%%%%%%%%%%%%%%%%%%%%%%%%%%
\subsection{The imposition of transitivity}
\label{sec:environment:protocol}
%%%%%%%%%%%%%%%%%%%%%%%%%%%%%%%%%%%

To understand why we assume that transitivity is imposed automatically,
observe that the purpose of the interaction is to turn the will of the majority, which may contain (Condorcet) cycles, into a (by definition transitive) ranking.
Some pairs will thus necessarily be ranked by fiat.
We require that transitivity be imposed immediately after each vote because
this is necessary and sufficient for \emph{committee sovereignty,}
the requirement that $x$ be ranked above $y$ if $x$ beat $y$ in a vote.%
	\footnote{Sufficiency is obvious.
	For necessity, suppose the chair were allowed to offer $x,z$ even though $x \rankt y \rankt z$;
	then committee sovereignty is violated whenever $z$ beats $x$.}
Indeed, we contend that the protocol described in §\ref{sec:environment:interaction} is the only natural one, given that the interaction must end with a ranking:
as shown in \cref{suppl:trans_charac}, any other protocol must violate either committee sovereignty or \emph{democratic legitimacy,}
the requirement that the chair offer enough votes to give the committee a fair say.

%%%%%%%%%%%%%%%%%%%%%%%%%%%%%%%%%%%
\subsection{Fixed voting}
\label{sec:environment:strategic}
%%%%%%%%%%%%%%%%%%%%%%%%%%%%%%%%%%%

By using the majority will, we implicitly assume fixed voting: $x \perm y$ means that $x$ garners a majority over $y$ whenever $x,y$ is offered,
whatever the strategy of the chair.
This is reasonable if voters are non-strategic: if they are unsophisticated, say, or vote `expressively' (to please their constituents, for example).
Empirically, non-strategic voting appears to be the norm in many important institutions, such as the US Congress;%
	\footnote{See \textcite{Ladha1994,PooleRosenthal1997,Wilkerson1999}, as well as the survey by \textcite{GrosecloseMilyo2010}.}
whether the same is true of small groups of voters (such as committees) is unclear, however.

If voters were strategic,
then they could potentially benefit from voting in a non-fixed way,
by tailoring their behaviour to (their conjecture about) the chair's strategy.
There are limits to their gains from strategising, however:
we show in \cref{suppl:ic} that any insincerity by a voter risks producing a final ranking that is less aligned with her preference over the alternatives
than the ranking that sincere voting would have delivered.
By contrast, sincere voting carries no such risk: no strategising can ever improve on its outcome in the (strong) `more aligned' sense.

%%%%%%%%%%%%%%%%%%%%%%
%%%%%%%%%%%%%%%%%%%%%%
\section{Existence of regret-free strategies}
\label{sec:is}
%%%%%%%%%%%%%%%%%%%%%%
%%%%%%%%%%%%%%%%%%%%%%

In this section, we exhibit a regret-free strategy: \emph{insertion sort.}
We first (§\ref{sec:is:efficiency}) introduce a notion of \emph{efficiency} that implies regret-freeness,
then (in §\ref{sec:is:is}) define insertion sort and show that it is efficient (\Cref{theorem:is_efficient}).

%%%%%%%%%%%%%%%%%%%%%%%%%%%%%%%%%%%
\subsection{Efficiency}
\label{sec:is:efficiency}
%%%%%%%%%%%%%%%%%%%%%%%%%%%%%%%%%%%

A ranking $\rank$ is called \emph{$\perm$-efficient} iff
every pair on which the chair and committee agree
is ranked accordingly:
if $x \pref y$ and $x \perm y$,
then $x \rank y$.

\setcounter{example}{0}
\begin{example}[continued]
	\label{example:123_Wefficient}
	Recall the details from \cpageref{example:123_Wreachable,example:123_maw}.
	Since $\alpha \pref \beta \pref \gamma$,
	$\perm$-efficiency requires precisely that $\alpha$ be ranked above $\gamma$.
	There are three such rankings: $\pref$ itself, $\beta \rank \alpha \rank \gamma$ and $\alpha \rankp \gamma \rankp \beta$.
\end{example}

An \emph{efficient strategy}
is a strategy whose outcome under any majority will $\perm$ is $\perm$-efficient.
Efficiency matters
because it is linked with regret-freeness:

\begin{lemma}
	\label{lemma:Wefficient_Wunimprovable}
	For any majority will $\perm$, every $\perm$-efficient ranking is $\perm$-unimprovable.
\end{lemma}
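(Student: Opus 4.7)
The plan is to argue by contradiction: suppose $\rank$ is $\perm$-efficient yet not $\perm$-unimprovable, so some $\perm$-feasible ranking $\rankp \neq \rank$ is more aligned with $\pref$ than $\rank$. Call $(a,b)$ a \emph{disagreement pair} if $a \rankp b$ and $b \rank a$. My first step is to show every such pair satisfies both $a \pref b$ and $b \perm a$. If instead $b \pref a$, then $\rank$ would rank $(b,a)$ correctly by $\pref$ while $\rankp$ would not, contradicting $\rankp$ being more aligned with $\pref$ than $\rank$; hence $a \pref b$. Combined with $b \rank a$ and the $\perm$-efficiency of $\rank$ (which would otherwise force $a \rank b$), this gives $b \perm a$.

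Next, fix any disagreement pair $(a,b)$, which exists because $\rank \neq \rankp$. Since $a \rankp b$ while $b \perm a$, the pair $\{a,b\}$ cannot have been directly voted on under the strategy producing $\rankp$; the relation $a \rankp b$ must have been imposed purely by transitive closure. Unwinding that closure of the direct-vote results then furnishes a chain $a = z_0, z_1, \ldots, z_{N-1} = b$ with $N \geq 3$, in which each consecutive pair $(z_i, z_{i+1})$ was directly voted on and hence satisfies $z_i \perm z_{i+1}$. I now claim $z_i \rank z_{i+1}$ for every $i$. If $z_i \pref z_{i+1}$, this follows at once from $\perm$-efficiency. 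Otherwise $z_{i+1} \pref z_i$, and had we $z_{i+1} \rank z_i$ the pair $(z_i, z_{i+1})$ would be a disagreement pair with $z_{i+1} \pref z_i$, contradicting the first step; so $z_i \rank z_{i+1}$ holds here too. Transitivity of $\rank$ then yields $a \rank b$, contradicting $b \rank a$.

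The main obstacle is the structural step used to extract the direct-vote chain. One must read off the protocol that $\rankp$ is the transitive closure of its direct-vote edges and that those edges agree with $\perm$; this is what converts the local information ``$a \rankp b$ and $b \perm a$'' into the existence of a path from $a$ to $b$ along $\perm$-edges. Once this chain is available, the first and third steps fit neatly together: Step 1 pins down the $\pref$-orientation of each link enough that a simple case split—forward or backward in $\pref$—produces the desired $\rank$-relation on each link, and transitivity of $\rank$ closes the argument.
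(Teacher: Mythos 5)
Your proof is correct and follows essentially the same route as the paper's: both extract a $\perm$-path of $\rankp$-edges from the disagreeing pair and use $\perm$-efficiency together with more-alignment to force $\rank$ forward along every link, reaching a contradiction via transitivity of $\rank$. The only real difference is cosmetic: the paper obtains the chain by enumerating the $\rankp$-interval between the two alternatives and invoking its feasibility-adjacency observation, whereas you unwind the transitive closure of the direct vote outcomes from the history generating $\rankp$, which amounts to re-deriving that observation inline.
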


\begin{corollary}
	\label{corollary:efficient_regret-free}
	Any efficient strategy is regret-free.
\end{corollary}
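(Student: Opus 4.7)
The corollary is a one-line consequence of \Cref{lemma:Wefficient_Wunimprovable} together with the definitions of efficiency (\Cref{definition:efficient}) and regret-freeness (\Cref{definition:regretfree}). My plan is to simply chain these three facts.

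Specifically, I would fix an efficient strategy $\strat$ and an arbitrary tournament $\perm$, and let $\rank$ denote the outcome of $\strat$ under $\perm$. By \Cref{definition:efficient}, $\rank$ is $\perm$-efficient. By \Cref{lemma:Wefficient_Wunimprovable}, $\rank$ is therefore $\perm$-unimprovable. Since $\perm$ was arbitrary, \Cref{definition:regretfree} tells us that $\strat$ is regret-free.

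There is no genuine obstacle here: the hard work is already done in \Cref{lemma:Wefficient_Wunimprovable}, and the corollary is obtained by quantifying that pointwise (ranking-level) implication over all tournaments to pass from the strategy-level property `efficient' to the strategy-level property `regret-free'.
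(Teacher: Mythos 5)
Your proposal is correct and matches the paper's reasoning exactly: the paper states the corollary as an immediate consequence of \Cref{lemma:Wefficient_Wunimprovable}, and your chain---outcome is $\perm$-efficient by \Cref{definition:efficient}, hence $\perm$-unimprovable by the lemma, for every tournament $\perm$, hence regret-free by \Cref{definition:regretfree}---is precisely that argument spelled out. Nothing is missing.
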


\begin{proof}[Proof of \Cref{lemma:Wefficient_Wunimprovable}]
	Fix a majority will $\perm$, and let $\rank$ be a $\perm$-efficient ranking.
	To establish that $\rank$ is $\perm$-unimprovable,
	consider any $\perm$-reachable ranking $\mathord{\rankp} \neq \mathord{\rank}$; we must show that $\rankp$ is not more aligned with $\pref$ than $\rank$.

	Since $\mathord{\rankp} \neq \mathord{\rank}$,
	there are alternatives $x,y \in \mathcal{X}$ such that $x \rankp y$ and $y \rank x$;	
	furthermore, we may choose these to be $\rankp$-adjacent
	(i.e. $x \rankp z \rankp y$ holds for no $z \in \mathcal{X}$).
	Since $\rankp$ is $\perm$-reachable, we must have $x \perm y$. (This follows from \Cref{observation:reachability_adjacency} in \cref{app:extra:reachability_adjacency} (\cpageref{observation:reachability_adjacency}).)
	It must be that $y \pref x$,
	because otherwise the $\perm$-efficiency of $\rank$ would require that $x \rank y$.
	Thus the pair $x,y$ is ranked `right' by $\rank$ ($y \pref x$ and $y \rank x$) and `wrong' by $\rankp$ ($x \rankp y$),
	so that $\rankp$ fails to be more aligned with $\pref$ than $\rank$.
\end{proof}

% We shall establish the converse of \Cref{corollary:efficient_regret-free} in §\ref{sec:charac}:
% \emph{only} efficient strategies can be regret-free (\Cref{theorem:regretfree_efficient}).
% The converse of \Cref{lemma:Wefficient_Wunimprovable} is false.%
% 	\footnote{§\ref{sec:charac:regretfree_efficient} contains an example (\cpageref{example:1234}) of a $\perm$-unimprovable ranking that is not $\perm$-efficient.}

%%%%%%%%%%%%%%%%%%%%%%%%%%%%%%%%%%%
\subsection{Insertion sort, a regret-free strategy}
\label{sec:is:is}
%%%%%%%%%%%%%%%%%%%%%%%%%%%%%%%%%%%

\begin{namedthm}[Insertion sort.]
	\label{algorithm:is}
	Label the alternatives $\mathcal{X} \equiv \{1,\dots,n\}$ so that $1 \pref \cdots \pref n$.
	We proceed in rounds indexed by $k \in \{n-1,\dots,2,1\}$,
	starting with $k=n-1$:
	\begin{itemize}

		\item
		By the $k^\text{th}$ round,
		those alternatives which the chair considers strictly worse than alternative $k$
		(i.e. the alternatives $\{k+1,\dots,n\}$)
		have already been totally ranked.
		Now `insert' alternative $k$ into that ranking:

		\item First, pit alternative $k$ against
		the highest-ranked worse alternative.

		If $k$ won, then $\{k,\dots,n\}$ are now totally ranked (with $k$ on top).

		\item If $k$ lost,
		pit $k$ against the second-highest-ranked worse alternative.

		If $k$ won, then $\{k,\dots,n\}$ are now totally ranked (with $k$ second).

		\item If $k$ lost again,
		pit $k$ against the third-highest-ranked worse alternative.

		\dots

		\item
		Now $\{k,\dots,n\}$ are totally ranked.
		Decrease $k$ by $1$ and repeat.%
			\footnote{\label{footnote:is_class}%
			A detail: our definition of strategies requires them to specify behaviour after all histories, even those that cannot arise.
			(E.g. a strategy that first offers $x,y$ must still specify behaviour after histories where only $y,z$ are ranked.)
			Thus formally,
			insertion sort defines not a strategy,
			but rather an equivalence class of strategies having the same path.}

	\end{itemize}
\end{namedthm}

Insertion sort is illustrated in \Cref{fig:is} for the case of three alternatives.
\begin{figure}
	\centering
	\begin{tikzpicture}[yscale=2,xscale=1.4,every text node part/.style={align=center}]

		% lines 0
		\draw[-] ( 0,-0) -- (-2,-1);
		\draw[-] ( 0,-0) -- ( 2,-1);

		% lines 1
		% \draw[-] (-2,-1) -- (-3,-2);
		\draw[-] (-2,-1) -- (-1,-2);
		% \draw[-] ( 2,-1) -- ( 1,-2);
		\draw[-] ( 2,-1) -- ( 3,-2);
		\draw[-] (-2,-1) -- (-2.7,-1.7);
		\draw[-] ( 2,-1) -- ( 1.3,-1.7);

		% lines 2
		% \draw[-] (-1,-2) -- (-2,-3);
		% \draw[-] (-1,-2) -- ( 0,-3);
		% \draw[-] ( 3,-2) -- ( 2,-3);
		% \draw[-] ( 3,-2) -- ( 4,-3);
		\draw[-] (-1,-2) -- (-1.7,-2.7);
		\draw[-] (-1,-2) -- (-0.3,-2.7);
		\draw[-] ( 3,-2) -- ( 2.3,-2.7);
		\draw[-] ( 3,-2) -- ( 3.7,-2.7);

		% node 0
		\node[fill=white,draw,rounded corners,dotted] at (0,-0) {$\mathord{\rank} = \varnothing$ \\ $\hookrightarrow$ offer $2,3$};

		% nodes 1
		\node[fill=white,draw,rounded corners,dotted] at (-2,-1)
			{$2 \rank 3$ \\ $\hookrightarrow$ offer $1,2$};
		\node[fill=white,draw,rounded corners,dotted] at ( 2,-1)
			{$3 \rank 2$ \\ $\hookrightarrow$ offer $1,3$};

		% nodes 2
		% \node[fill=white,draw,rounded corners,thick] at (-3,-2)
		% 	{$1 \rank 2 \rank 3$};
		\node[fill=white,draw,rounded corners,dotted] at (-1,-2)
			{$2 \rank 1$ \& $2 \rank 3$ \\ $\hookrightarrow$ offer $1,3$};
		% \node[fill=white,draw,rounded corners,thick] at ( 1,-2)
		% 	{$1 \rank 3 \rank 2$};
		\node[fill=white,draw,rounded corners,dotted] at ( 3,-2)
			{$3 \rank 1$ \& $3 \rank 2$ \\ $\hookrightarrow$ offer $1,2$};

		% gw 0
		\node[gray,anchor=east] at (-0.9,-0.45) {$2 \perm 3$\phantom{--}};
		\node[gray,anchor=west] at ( 0.9,-0.45) {\phantom{--}$3 \perm 2$};

		% gw 1
		\node[gray,anchor=east] at (-2.55,-1.55) {$1 \perm 2$\phantom{-}};
		\node[gray,anchor=west] at (-1.50,-1.50) {\phantom{-}$2 \perm 1$};
		\node[gray,anchor=east] at ( 1.45,-1.55) {$1 \perm 3$\phantom{-}};
		\node[gray,anchor=west] at ( 2.50,-1.50) {\phantom{-}$3 \perm 1$};

		% gw 2
		\node[gray,anchor=east] at (-1.55,-2.55) {$1 \perm 3$\phantom{-}};
		\node[gray,anchor=west] at (-0.45,-2.55) {\phantom{-}$3 \perm 1$};
		\node[gray,anchor=east] at ( 2.45,-2.55) {$1 \perm 2$\phantom{-}};
		\node[gray,anchor=west] at ( 3.55,-2.55) {\phantom{-}$2 \perm 1$};

	\end{tikzpicture}
	\caption{Insertion sort with three alternatives $\mathcal{X} = \{1,2,3\}$, where the chair's preference is $1 \pref 2 \pref 3$.
	$\rank$ denotes the (evolving) proto-ranking.
	The path taken depends on the majority will $\perm$.}
	\label{fig:is}
\end{figure}%

\begin{theorem}
	\label{theorem:is_efficient}
	Insertion sort is efficient, hence regret-free.
\end{theorem}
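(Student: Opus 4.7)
My plan is to prove efficiency directly; regret-freeness then follows from \Cref{corollary:efficient_regret-free}. Label the alternatives so that $\mathcal{X} = \{1,\dots,n\}$ with $1 \pref \cdots \pref n$, fix a tournament $\perm$, and let $\rank$ denote the outcome of insertion sort under $\perm$. I must show that whenever $i \pref j$ (i.e.\ $i < j$) and $i \perm j$, we have $i \rank j$.

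I will argue by reasoning about the stage of the algorithm at which $i$ is ``inserted''. The key structural fact is that when stage $k = i$ begins, the alternatives $\{i+1,\dots,n\}$ have already been totally ranked as $x_{i+1} \rank \cdots \rank x_n$; the algorithm then pits $i$ in turn against $x_{i+1}, x_{i+2}, \dots$, stopping at the first win. Let $p$ be the index with $j = x_p$. Because $i \perm j$ by hypothesis, alternative $i$ does \emph{not} lose to $x_p$ when offered against it. Hence one of two things happens during the insertion of $i$: either (a) $i$ wins its first comparison against some $x_m$ with $m \leq p$, in which case the resulting chain has $i \rank x_m \rank \cdots \rank x_p = j$; or (b) $i$ loses to $x_{i+1}, \dots, x_{p-1}$ and then wins against $x_p$, so that $i \rank x_p = j$ directly. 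Either way, $i \rank j$ at the end of stage $k = i$.

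It remains to note that this relation is preserved to the final outcome. Subsequent stages (for $k = i-1, i-2, \dots, 1$) only \emph{insert} new alternatives into the already-ranked set $\{i,i+1,\dots,n\}$; they neither remove nor reorder pairs already ranked (each stage only applies the transitive closure of the previous proto-ranking together with the new votes). Hence $i \rank j$ persists, establishing $\perm$-efficiency of $\rank$.

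I anticipate the main obstacle to be bookkeeping rather than genuine difficulty: one must (i) formalise the claim that $\{i+1,\dots,n\}$ is totally ranked at the start of stage $k = i$ and that later stages do not disturb it (both facts are immediate from the algorithm description, though a brief induction on $n - k$ would make them rigorous), and (ii) carefully handle the edge cases where $i$ wins on its very first comparison or where $j = x_{i+1}$. Neither poses a real difficulty, so the proof reduces essentially to the dichotomy (a)/(b) above.
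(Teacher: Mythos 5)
Your proposal is correct and follows essentially the same argument as the paper: the paper also fixes $x \pref y$ with $x \perm y$, enumerates the $\pref$-worse alternatives in their ranked order, and splits into exactly your cases (a) $x$ wins earlier and transitivity gives $x \rank y$, or (b) $x$ loses until it meets $y$ and then wins since $x \perm y$. The only difference is that the paper leaves the preservation-across-stages point implicit, whereas you flag it explicitly; no substantive divergence.
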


\begin{proof}
	Fix a majority will $\perm$, and let $\rank$ be the outcome of insertion sort under $\perm$; we must show that $\rank$ is $\perm$-efficient.
	To that end, fix alternatives $x,y \in \mathcal{X}$ with $x \pref y$ and $x \perm y$; we shall prove that $x \rank y$.

	Enumerate as $z_1 \rank \cdots \rank z_K$ all of the alternatives that are strictly $\pref$-worse than $x$, and note that $z_k = y$ for some $k \leq K$.
	By definition of insertion sort, $x$ will be pitted against $z_1,z_2,\dots$ in turn until it wins a vote.
	If $x$ loses against $z_1,\dots,z_{k-1}$, then it is next pitted against $z_k=y$, and wins since $x \perm y$ by hypothesis; thus $x \rank y$.
	If instead $x$ wins against some $z_\ell$ with $\ell < k$, then $x \rank z_\ell \rank \cdots \rank z_k = y$, so $x \rank y$ by transitivity of $\rank$.
\end{proof}

Insertion sort is a standard sorting algorithm---see e.g. \textcite[§5.2.1]{Knuth1998}.
In the sorting problem,
there is a \emph{transitive} and asymmetric relation $\perm$,
and one seeks to reach a `sorted' ranking $\rank$,
meaning that $x \rank y$ whenever $x \perm y$,
by making (as few as possible) binary comparisons using $\perm$.

%%%%%%%%%%%%%%%%%%%%%%
%%%%%%%%%%%%%%%%%%%%%%
\section{Properties of regret-free strategies}
\label{sec:charac}
%%%%%%%%%%%%%%%%%%%%%%
%%%%%%%%%%%%%%%%%%%%%%

Insertion sort is regret-free, but it is not the only such strategy.
Which of its characteristics are essential for regret-freeness, and which are extraneous?

We give two answers in this section.
Our first characterisation, \Cref{theorem:regretfree_efficient} (§\ref{sec:charac:regretfree_efficient}), is in terms of outcomes: the regret-free strategies are exactly the efficient ones, i.e. those reaching a $\perm$-efficient ranking under every majority will $\perm$.
We show this characterisation to be tight (\Cref{proposition:regretfree_efficient_tight}).

The second characterisation, \Cref{theorem:regretfree_errors} (§\ref{sec:charac:regretfree_errors}), is in terms of behaviour: regret-freeness requires precisely that two intuitive errors be avoided.
We show in addition that the advice offered by \Cref{theorem:regretfree_errors} can be operationalised myopically: avoiding errors ensures that a non-error pair will always be available to be offered next (\Cref{proposition:regretfree_errors_tight}).

What distinguishes one regret-free strategy from another is prioritisation: \emph{which} pairs of alternatives are ranked `right' when not all of them can be.
We prove that \emph{lexicographic} prioritisation characterises insertion sort (\Cref{theorem:is_lexicographic}, §\ref{sec:charac:is2}):
the chair's favourite alternative is ranked as highly as possible,
her second-favourite alternative is ranked as highly as possible subject to that, and so on.
One consequence (§\ref{sec:charac:ra_is}) is that insertion sort is outcome-equivalent to recursively applying the much-studied \emph{amendment procedure.}

%%%%%%%%%%%%%%%%%%%%%%
\subsection{Characterisation of regret-free strategies: outcomes}
\label{sec:charac:regretfree_efficient}
%%%%%%%%%%%%%%%%%%%%%%

Efficiency is not only sufficient for regret-freeness,
but also necessary:

\begin{theorem}
	\label{theorem:regretfree_efficient}
	A strategy is regret-free iff it is efficient.
\end{theorem}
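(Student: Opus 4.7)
The forward direction is \Cref{corollary:efficient_regret-free}, so the plan is to prove the converse: every regret-free strategy is efficient. I would proceed by contraposition. Suppose the strategy $\strat$ fails to be efficient, witnessed by a tournament $\perm$, outcome $\rank$ of $\strat$ under $\perm$, and alternatives $x \pref y$ with $x \perm y$ but $y \rank x$. The goal is to exhibit a tournament $\perm'$ under which $\strat$'s outcome is $\perm'$-improvable.

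The first observation is that $\{x,y\}$ cannot belong to $P$, the set of pairs directly offered by $\strat$ along its path under $\perm$: otherwise $x$ would win that vote (since $x \perm y$) and be ranked above $y$, contradicting $y \rank x$. Since $\strat$'s play depends only on the realised votes on the pairs it offers, any tournament $\perm'$ that agrees with $\perm$ on $P$ yields the same history and hence the same outcome $\rank$.

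The crux is to choose $\perm'$ cleverly: let $\perm'$ agree with $\perm$ on $P$ and with the chair's preference $\pref$ on every other pair. Let $\rank^*$ denote the outcome of the insertion-sort strategy under $\perm'$, which by \Cref{theorem:is_efficient} is $\perm'$-feasible and $\perm'$-efficient. I would then verify that $\rank^*$ is more aligned with $\pref$ than $\rank$: for any $(u,v)$ with $u \pref v$ and $u \rank v$, either $\{u,v\} \in P$ (in which case $u \perm v$, since $u \rank v$ forces $u$ to have won the direct vote, so $u \perm' v$) or $\{u,v\} \notin P$ (in which case $u \perm' v$ by construction), and in either case $\perm'$-efficiency of $\rank^*$ gives $u \rank^* v$. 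Strictness follows from $(x,y)$: since $\{x,y\} \notin P$ and $x \pref y$, construction yields $x \perm' y$, so $x \rank^* y$ by efficiency, whereas $y \rank x$. Hence $\rank$ is $\perm'$-improvable via $\rank^*$, yet $\strat$'s outcome under $\perm'$ is still $\rank$, contradicting regret-freeness.

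The main obstacle is identifying the right $\perm'$. The key insight is that $\{x,y\} \notin P$ gives us license to modify $\perm$ on all pairs outside $P$ without affecting $\strat$'s play; aligning those pairs with $\pref$ is the natural choice, and then \Cref{theorem:is_efficient} supplies the strictly more aligned $\perm'$-feasible ranking $\rank^*$ needed to close the argument.
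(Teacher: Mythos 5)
Your proof is correct, and it takes a genuinely different route from the paper's. The paper establishes the `only if' direction indirectly, via a cycle of implications proved jointly with \Cref{theorem:regretfree_errors}: regret-freeness implies error-avoidance (the hard step, requiring the extension lemma and a delicate construction of a perturbed tournament and an improving ranking built from order intervals), and error-avoidance implies efficiency. You instead prove `not efficient $\Rightarrow$ not regret-free' directly: since the witnessing pair $\{x,y\}$ is never voted on along the path under $\perm$, the entire complement of the offered set $P$ can be redefined to agree with $\pref$ without disturbing $\strat$'s play, and then the outcome of insertion sort under the modified tournament $\permp$ (via \Cref{theorem:is_efficient}, which is proved independently, so there is no circularity) serves as the $\permp$-feasible, strictly more aligned ranking; your case analysis showing that every pair ranked `right' by $\rank$ is ranked `right' by $\rank^*$ is airtight, because on-path pairs are decided by votes that $\permp$ copies from $\perm$ and off-path pairs are aligned with $\pref$ by construction. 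What each approach buys: yours is a short, self-contained proof of \Cref{theorem:regretfree_efficient} that bypasses all of \cref{app:pf_charac:unimp_noerror}; the paper's longer joint argument is needed because it simultaneously delivers the behavioural characterisation in \Cref{theorem:regretfree_errors}, which your global off-path modification does not yield (pinning down an error at a specific history forces the more surgical, interval-based perturbation used there).
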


The proof, in \cref{app:pf_charac}, establishes \Cref{theorem:regretfree_efficient} jointly with \Cref{theorem:regretfree_errors} below (§\ref{sec:charac:regretfree_errors}).
The `if' direction was already established in \Cref{corollary:efficient_regret-free} (\cpageref{corollary:efficient_regret-free}).
The following gives a feel for the `only if' direction:

\setcounter{example}{1}
\begin{example}
	\label{example:1234}
	Consider alternatives $\mathcal{X} = \{\alpha,\beta,\gamma,\delta\}$ with $\alpha \pref \beta \pref \gamma \pref \delta$ and the following majority will $\perm$:%
		\footnote{Symbolically, $\alpha \perm \delta \perm \gamma \perm \beta \perm \alpha$, $\gamma \perm \alpha$ and $\beta \perm \delta$.}
	\begin{equation*}
		\vcenter{\hbox{%
		\begin{tikzpicture}[every text node part/.style={align=center}]

			\def \radius {1cm}
			\def \margin {22}
			\def \marginstr {\radius*0.35}

			% arrows
			\foreach \s in {1,2,3,4}
			{
				\draw[->,>=latex,thick] ({90 * (5-\s) + \margin}:\radius)
					arc ({90 * (5-\s) + \margin}:{90 * (6-\s)-\margin}:\radius);
			}

			% nodes
			\node at ({90 * (6-1)}:\radius) {$\alpha$};
			\node at ({90 * (6-2)}:\radius) {$\beta$};
			\node at ({90 * (6-3)}:\radius) {$\gamma$};
			\node at ({90 * (6-4)}:\radius) {$\delta$};

			\draw[<-,>=latex,thick] (0,\radius-\marginstr) -- (0,-\radius+\marginstr);
			\draw[->,>=latex,thick] (\radius-\marginstr,0) -- (-\radius+\marginstr,0);

		\end{tikzpicture}%
		}}
	\end{equation*}
	There is exactly one $\perm$-reachable ranking that ranks $\alpha$ above $\beta$, namely $\alpha \rank \delta \rank \gamma \rank \beta$.%
		\footnote{This can be verified directly. Alternatively, since there is exactly one $\perm$-path from $\alpha$ to $\beta$ (namely $(\alpha,\delta,\gamma,\beta)$), it follows by \Cref{observation:reachability_adjacency} in \cref{app:extra:reachability_adjacency} (\cpageref{observation:reachability_adjacency}).}
	Since no other $\perm$-reachable ranking ranks $\alpha$ above $\beta$, $\rank$ is $\perm$-unimprovable.
	But $\rank$ fails to be $\perm$-efficient, for it ranks $\delta$ above $\beta$ despite the fact that $\beta \perm \delta$.
	(Thus the converse of \Cref{lemma:Wefficient_Wunimprovable} (\cpageref{lemma:Wefficient_Wunimprovable}) is false.)

	Let $\strat$ be a strategy that
	(i) first offers $\beta,\gamma$,
	(ii) then, in case $\gamma$ won, offers $\gamma,\delta$, and
	(iii) next, in case $\gamma$ and then $\delta$ won, offers $\alpha,\delta$.
	This strategy has outcome $\rank$ under $\perm$, so fails to be efficient.
	To see that it is not regret-free, consider a majority will $\permp$ that differs from $\perm$ only in that $\delta \permp \alpha$.
	The outcome of $\strat$ under $\permp$ is $\delta \rankp \gamma \rankp \beta \rankp \alpha$.%
		\footnote{First $\delta \rankp \gamma \rankp \beta$ and $\delta \rankp \alpha$ are determined. Then $\alpha,\beta$ and $\alpha,\gamma$ are offered (in some order that doesn't matter), and $\alpha$ loses in both votes.}
	This ranking fails to be $\permp$-unimprovable since $\gamma \rankpp \beta \rankpp \delta \rankpp \alpha$ is $\permp$-reachable and is more aligned with $\pref$ by inspection.
	Thus $\strat$ fails to be regret-free.
\end{example}

The broader insight underlying the `only if' part of \Cref{theorem:regretfree_efficient} is that reaching a non-$\perm$-efficient outcome necessarily involves a sacrifice. (In the example, $\strat$ forgoes the opportunity to rank $\beta$ above $\delta$.)
For some majority wills, such as $\perm$, the sacrifice pays off. (It allows $\alpha$ to be ranked above $\beta$, something that no $\perm$-reachable $\perm$-efficient ranking achieves.)
But for other majority wills, such as $\permp$, no reward materialises, yielding a non-$\permp$-unimprovable outcome.

The characterisation in \Cref{theorem:regretfree_efficient} is tight, in the following sense:

\begin{proposition}
	\label{proposition:regretfree_efficient_tight}
	For any majority will $\perm$ and any $\perm$-reachable $\perm$-efficient ranking $\rank$, some regret-free strategy has outcome $\rank$ under $\perm$.
\end{proposition}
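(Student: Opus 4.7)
The plan is to construct, given the hypothesised $(\perm, \rank)$, a specific regret-free strategy $\strat$ whose outcome under $\perm$ is $\rank$. I will rely on two tools. First, \Cref{theorem:regretfree_efficient} reduces regret-freeness to efficiency. Second, as foreshadowed by the remark that Theorems~\ref{theorem:regretfree_efficient} and~\ref{theorem:regretfree_errors} are proved jointly in the appendix, I will use the error-based characterisation (\Cref{theorem:regretfree_errors} with its companion \Cref{proposition:regretfree_errors_tight}): a strategy is regret-free iff at every history on its path it offers a pair that is neither a miss-opportunity nor a take-a-risk.

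Enumerate $\rank$ as $y_1 \rank y_2 \rank \cdots \rank y_n$; by \Cref{observation:feasibility_adjacency}, $y_i \perm y_{i+1}$ for each $i < n$. I define $\strat$ recursively, distinguishing two cases at each history with current proto-ranking $\rank^h$. On path --- i.e.\ when $\rank^h \subseteq \rank$ --- I offer an unranked pair $\{y_i, y_j\}$ (with $i<j$) such that (i) $y_i \perm y_j$, which guarantees that under $\perm$ the resulting proto-ranking remains contained in $\rank$, and (ii) the pair is non-error at $\rank^h$. Off path --- i.e.\ after some vote has contradicted $\perm$ --- I offer any non-error pair, which exists by \Cref{proposition:regretfree_errors_tight} because $\strat$ has, by construction, committed no errors up to that point.

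The main obstacle is showing that the on-path prescription is feasible: at any non-terminal $\rank^h \subseteq \rank$, some unranked pair $\{y_i, y_j\}$ with $y_i \perm y_j$ is non-error. A naive choice of an adjacent pair $\{y_i, y_{i+1}\}$ --- which automatically satisfies (i) by \Cref{observation:feasibility_adjacency} --- need not be non-error: e.g., $\{y_i, y_{i+1}\}$ is a miss-opportunity at $\rank^h$ whenever some $y_k$ with $y_{i+1} \pref y_k \pref y_i$ already satisfies $y_k \rank^h y_i$. In such cases one must fall back on non-adjacent candidates $\{y_i, y_j\}$ with $y_i \perm y_j$. I plan to establish existence by contradiction: supposing every unranked pair $\{y_i, y_j\}$ with $y_i \perm y_j$ were an error at $\rank^h$, I will unpack the two error conditions to exhibit a $\perm$-feasible ranking strictly more aligned with $\pref$ than $\rank$, contradicting the $\perm$-unimprovability of $\rank$ (which holds since $\rank$ is $\perm$-efficient, by \Cref{lemma:Wefficient_Wunimprovable}). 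This contradiction argument is the hard part; it is essentially a combinatorial case analysis combining the forms of miss-opportunity and take-a-risk with the adjacency-in-$\rank$ structure of $\perm$-feasibility.

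With the existence claim in hand, $\strat$ is well-defined, error-free along every path it traces under any tournament, and hence regret-free by \Cref{theorem:regretfree_errors}. By the on-path construction, each vote under $\perm$ extends the current proto-ranking within $\rank$, so the final (total) proto-ranking under $\perm$ must equal $\rank$.
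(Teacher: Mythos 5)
Your architecture is the same as the paper's: reduce the problem to constructing, under $\perm$, a terminal history in which every offered pair is ranked by the vote as $\rank$ ranks it and no offer misses an opportunity or takes a risk, then use \Cref{proposition:regretfree_errors_tight} to continue error-free off that history and invoke \Cref{theorem:regretfree_errors}. But the entire mathematical content of the paper's proof lies in the step you defer: the claim that at each on-path proto-ranking there is an unranked pair, on which $\perm$ and $\rank$ agree, whose offer is neither a missed opportunity nor a risk. This is precisely \Cref{lemma:pf_tightness:regretfree_efficient}, whose proof occupies \cref{app:pf_tightness} and rests on the combinatorial \Cref{lemma:tightness_lemma} (a minimal-chain/cycle argument over pairs that ``make each other errors''). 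You acknowledge this is ``the hard part'' but supply only a plan, so the proposal has a genuine gap rather than a proof.

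Moreover, the plan as described is aimed at the wrong contradiction and drops the hypothesis that makes the lemma provable. The paper maintains, inductively along the constructed history, that the current proto-ranking contains no missed opportunities (this is preserved by error-free offers, \Cref{corollary:nomissed_pres}), and \Cref{lemma:pf_tightness:regretfree_efficient} is proved in contrapositive form: if every agreeing unranked pair were an error, the proto-ranking would contain a missed opportunity --- contradicting that invariant. Your version of the claim is stated for an arbitrary proto-ranking $\rank^h \subseteq \rank$, with no such invariant, and your intended contradiction is to ``exhibit a $\perm$-feasible ranking strictly more aligned with $\pref$ than $\rank$''. Since $\rank$ is $\perm$-efficient, hence $\perm$-unimprovable by \Cref{lemma:Wefficient_Wunimprovable}, no such ranking exists; so your argument can only succeed by showing the supposition is itself contradictory, and you give no mechanism for doing so --- whereas the paper's mechanism crucially combines the no-missed-opportunity invariant with both $\perm$-feasibility (via \Cref{observation:feasibility_adjacency}) and $\perm$-efficiency of $\rank$ to show that the relevant family of pairs all agree with $\perm$ and feed \Cref{lemma:tightness_lemma}. (As a minor point, your illustration of why an $\rank$-adjacent pair can be an error misstates the missed-opportunity condition: the intermediate alternative $z$ must be unranked with both members of the pair, not already ranked above one of them.) To close the gap you would either have to reproduce the paper's two-lemma argument, restoring the no-missed-opportunity hypothesis along your path, or find a genuinely new proof of the existence claim; the proposal as written does neither.
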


Thus no statement sharper than \Cref{theorem:regretfree_efficient} can be made about the outcomes under $\perm$ of regret-free strategies: every $\perm$-reachable $\perm$-efficient ranking is admissible.
We give the proof of \Cref{proposition:regretfree_efficient_tight} in \cref{suppl:pf_tightness}.
A non-trivial argument is required because a given $\perm$-reachable $\perm$-efficient ranking can be reached in many ways, not all of which form part of a regret-free strategy.%
	\footnote{To see why, return to \Cref{example:123_Wreachable} (\cpageref{example:123_Wreachable,example:123_Wefficient}).
	The $\perm$-efficient ranking $\alpha \rankp \gamma \rankp \beta$ may be reached by offering $\alpha,\gamma$ and then $\gamma,\beta$.
	But a strategy that does this cannot be regret-free, because it has outcome $\rankp$ also under the majority will $\alpha \permp \beta \permp \gamma \permp \alpha$,
	and $\rankp$ is not a $\permp$-unimprovable ranking (since the more aligned ranking $\pref$ is $\permp$-reachable).}

%%%%%%%%%%%%%%%%%%%%%%
\subsection{Characterisation of regret-free strategies: behaviour}
\label{sec:charac:regretfree_errors}
%%%%%%%%%%%%%%%%%%%%%%

To understand the behavioural content of regret-freeness,
we begin with a simple intuition for why insertion sort
is regret-free
in the case of three alternatives (refer to \Cref{fig:is} on \cpageref{fig:is}).
Suppose first that the initial vote on $2,3$ went well ($2 \perm 3$).
Offering $1,2$ next is a good idea because it affords an opportunity of a favourable imposition of transitivity: if the chair gets lucky ($1 \perm 2$), then she obtains $1 \rank 3$ `for free' from $1 \rank 2$ and $2 \rank 3$.
(Even though it may be that $3 \perm 1$.)
Offering $1,3$ would miss this opportunity.

Suppose instead that the initial vote on $2,3$ went badly ($3 \perm 2$).
Offering $1,2$ next would risk an unfavourable imposition of transitivity:
were the vote to go against her ($2 \perm 1$), then $3$ would be ranked above $1$ since $3 \rank 2$ and $2 \rank 1$.
(Even though it may be that $1 \perm 3$.)

To summarise, an intuition for why insertion sort is regret-free is that it never (1) misses an opportunity for a favourable imposition of transitivity nor (2) risks an unfavourable imposition of transitivity.
Our second characterisation formalises this intuition.

\begin{definition}
	\label{definition:errors}
	Let $\rank$ be a non-total proto-ranking, and let $x \pref y$ be unranked alternatives. (I.e. $x,y \in \mathcal{X}$ such that $x \nrank y \nrank x$.)
	\begin{enumerate}

		\item Offering $x,y$ for a vote \emph{misses an opportunity (at $\rank$)} iff there is an alternative $z \in \mathcal{X}$ such that $x \pref z \pref y$ and $y \nrank z \nrank x$.

		\item Offering $x,y$ for a vote \emph{takes a risk (at $\rank$)} iff there is an alternative $z \in \mathcal{X}$ such that either
		\begin{enumerate}

			\item $z \pref y$, $x \rank z$ and $y \nrank z$, or

			\item $x \pref z$, $z \rank y$ and $z \nrank x$.

		\end{enumerate}

	\end{enumerate}
\end{definition}

The `missed opportunity' in (1) is that $x \rank y$ (the hoped-for outcome when offering $x,y$ for a vote) could potentially have been obtained `for free' by offering votes on $x,z$ and $z,y$, via a `favourable imposition of transitivity'.
The `risk' in (2)(a) is that if the vote on $x,y$ were to go badly (so that $y \rank x$), then $y \rank z$ would follow---an `unfavourable imposition of transitivity'.
Similarly for (2)(b).

We say that a strategy \emph{never misses an opportunity} (\emph{never takes a risk}) iff it does not miss an opportunity (take a risk) on the path.

\begin{theorem}
	\label{theorem:regretfree_errors}
	A strategy is regret-free iff it never misses an opportunity or takes a risk.
\end{theorem}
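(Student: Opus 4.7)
The plan is to establish \cref{theorem:regretfree_errors} in tandem with \cref{theorem:regretfree_efficient}, demonstrating the equivalence of three properties: regret-freeness, efficiency, and never committing errors on the path. Since \cref{corollary:efficient_regret-free} gives efficient $\Rightarrow$ regret-free, it suffices to prove two implications: \textbf{(A)} if $\strat$ commits an error on its path under some tournament, then $\strat$ is not regret-free; and \textbf{(B)} if $\strat$ never commits an error, then $\strat$ is efficient.

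For \textbf{(A)}, suppose the chair following $\strat$ offers $\{a,b\}$ with $a \pref b$ at some history $h$ on the path, and this offering is an error witnessed by $z$. I would construct a tournament $\perm$ that agrees with all vote outcomes along the path to $h$ (so that $h$ remains on the path of $\strat$ under $\perm$) and exploits the error. For a missed opportunity ($a \pref z \pref b$, with $\{a,z\}$ and $\{b,z\}$ unranked in the proto-ranking $\rankp$ at $h$), set $\perm$ to form the cycle $b \perm a$, $a \perm z$, $z \perm b$ on $\{a,b,z\}$: $\strat$'s outcome under $\perm$ must rank $b$ above $a$, whereas the alternative $\perm$-feasible ranking obtained by offering $\{a,z\}$ and then $\{z,b\}$ at $h$ yields $a \rank z \rank b$, strictly more aligned with $\pref$. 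For a risk-(a) error ($z \pref b$, $a \rankp z$, $b \nrankp z$), set $b \perm a$ and $z \perm b$: $\strat$'s outcome ranks $b$ above $z$ by the transitivity $b \rank a \rank z$, whereas offering $\{b,z\}$ first at $h$ yields $z \rank b$ and hence $a \rank b$ through $a \rankp z \rank b$. Risk-(b) is symmetric. In each construction, by extending $\perm$ suitably to the remaining pairs, one ensures that the alternative ranking is $\perm$-feasible and strictly more aligned with $\pref$ than $\strat$'s outcome.

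For \textbf{(B)}, I would argue by strong induction on the number of votes elapsed that, under the no-errors hypothesis, the invariant $(I)$---for every history $h$ on the path and every pair $x \pref y$ with $y \rank x$ in the proto-ranking at $h$, we have $y \perm x$---holds. Applied to the terminal history, $(I)$ is $\perm$-efficiency. The inductive step considers a vote on $\{a,b\}$ offered without error, with $a$ winning ($a \perm b$), and a new pair $(c,d)$ satisfying $c \rank d$ but $c \nrankp d$, where $d \pref c$. By \cref{observation:tr_cl}, $(c = a$ or $c \rankp a)$ and $(b = d$ or $b \rankp d)$, giving four cases. The direct case $(c,d) = (a,b)$ gives $c \perm d$ immediately. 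In the transitivity cases where the offered pair is chair-oriented adversely---that is, $b \pref a$ when $b \rankp d$, or symmetrically when $c \rankp a$---the witness $z = d$ (respectively $z = c$) exposes a risk at the current vote, contradicting no errors.

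The main obstacle is the ``tricky'' subcases in which the offered pair is chair-oriented favourably (e.g., $a \pref b$ with $b \rankp d$ and $d \pref a \pref b$). Here the error lies not at the current vote but in an earlier one that established the pre-existing wrong relation $b \rankp d$. I would trace back to the earliest vote $V^*$ causing $b \rank d$ to hold and apply \cref{observation:tr_cl} to $V^*$: either $V^*$ was a direct vote on $\{b,d\}$---in which case the chair-between alternative $a$, unranked with both $b$ and $d$ at the history just before $V^*$ (since $a,b$ and $a,d$ remain unranked even at the later $\rankp$, and proto-rankings grow monotonically), witnesses a missed opportunity---or $V^*$ involves a further pre-existing wrong relation to be traced back. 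At each level of recursion, a chair-between witness remains available (typically an alternative paired in a strictly later vote), and the recursion terminates because the history is finite; at the base case (the first vote ever), only the direct subcase is possible, yielding the required missed opportunity.
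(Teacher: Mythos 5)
Your overall architecture (three-way equivalence, quoting \Cref{corollary:efficient_regret-free} for ``efficient $\Rightarrow$ regret-free'') matches the paper's, but both remaining implications are asserted precisely where the real work lies. In step (A), exhibiting a tournament on which the error ``bites'' locally on the triple $\{a,b,z\}$ is the easy part; the hard part, which the paper explicitly flags, is that after the erroneous vote the strategy keeps playing, and you must show that some $\perm$-feasible ranking is more aligned with $\pref$ than the strategy's \emph{entire} final outcome---i.e.\ every pair the outcome ranks ``right'' must also be ranked ``right'' by your comparison ranking. Your phrase ``by extending $\perm$ suitably to the remaining pairs'' is exactly the statement that needs proof: the paper achieves it by building the tournament to agree with a target ranking outside an order interval $[x,y]$ (\Cref{lemma:unimp_noerror:3}, via the extension lemma/Suzumura), and then proving that \emph{any} strategy's outcome must agree with that ranking outside the interval (\Cref{lemma:unimp_noerror:1}); moreover, in the risk case the pre-existing relation $a\rank z$ may have many alternatives in between, and the comparison ranking requires a nontrivial reordering of that interval (the paper's Case~2), so the ``alternative ranking obtained by offering $\{b,z\}$ first'' is not even fully specified, let alone shown feasible and dominating.

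In step (B), your invariant (I) is not inductively self-sufficient, as your own ``tricky'' case shows: a transitivity-imposed wrong pair $(a,d)$ was never voted on, so its $\perm$-orientation cannot be controlled, and you must instead rule the configuration out. The trace-back you propose is in spirit a re-derivation of the paper's actual invariant---that under error-free play the proto-ranking never contains a missed opportunity (\Cref{lemma:noerror_nocry} and \Cref{corollary:nomissed_pres})---but its key step, ``at each level of recursion a chair-between witness remains available,'' is asserted rather than proved. Already two levels down the choice of which pre-existing wrong pair to trace and which alternative serves as the witness depends on $\pref$-comparisons among the four alternatives involved (one needs, e.g., that $d\pref a$ or $b\pref c$ cannot both fail, by transitivity of $\pref$), and the claim that the witness is still unranked with both endpoints at the earlier history needs its own inductive bookkeeping. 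Until that persistence claim is formulated and proved (equivalently, until you prove preservation of the no-missed-opportunity invariant), the favourable-orientation case of (B) remains open, so the proposal as written has genuine gaps in both directions.
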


We give a joint proof of \Cref{theorem:regretfree_efficient,theorem:regretfree_errors} in \cref{app:pf_charac}.
The argument is illustrated in \Cref{fig:charac}.
\begin{figure}
	\centering
	\begin{tikzpicture}[every text node part/.style={align=center}]

		\def \radius {2cm}
		\def \margin {25}

		% nodes
		\node at ({90 + 120 * 0}:\radius) {efficient};
		\node at ({90 + 120 * 1}:\radius) {avoids \\ errors};
		\node at ({90 + 120 * 2}:\radius) {regret- \\ free};

		% arrows
		\foreach \s in {1,2,3}
		{
			\draw[<-,>=latex,thick] ({90 + 120 * (\s - 1) + \margin}:\radius)
				arc ({90 + 120 * (\s - 1) + \margin}:{90 + 120 * (\s)-\margin}:\radius);
		}

		% arrow labels
		\node[gray,anchor=south east] at ({90 + 120 * 0.5}:\radius)
			{\phantom{(p. 15) }\cref{app:pf_charac:noerror_efficient}};
		\node[gray,anchor=north     ] at ({90 + 120 * 1.5}:\radius)
			{\cref{app:pf_charac:unimp_noerror}};
		\node[gray,anchor=south west] at ({90 - 120 * 0.5}:\radius)
			{\Cref{corollary:efficient_regret-free} (\cpageref{corollary:efficient_regret-free})};

	\end{tikzpicture}
	\caption{Proof of \Cref{theorem:regretfree_efficient,theorem:regretfree_errors}.}
	\label{fig:charac}
\end{figure}%

The `only if' part of \Cref{theorem:regretfree_errors} asserts that opportunity-missing and risk-taking are errors, in the sense that committing one of them will lead to a non-$\perm$-unimprovable ranking under some $\perm$.
This is intuitive, but requires some work to show.
To appreciate why, suppose that $\strat$ offers a pair $x \pref y$ under $\perm$, and that doing so either misses an opportunity or takes a risk.
It is then easy to find another majority will $\permp$ such that the outcome $\rank$ of $\strat$ under $\permp$ ranks $y$ above $x$,
whereas some other $\permp$-reachable ranking $\rankp$ ranks $x \rankp y$.
But that is not enough:
if $\rankp$ is to be more aligned with $\pref$ than $\rank$,
then \emph{every} pair $z,w$ ranked `right' by $\rank$ ($z \pref w$ and $z \rank w$) must also be ranked `right' by $\rankp$ ($z \rankp w$).
We construct $\permp$ and $\rankp$ with these properties in \cref{app:pf_charac:unimp_noerror}.

The `if' part of \Cref{theorem:regretfree_errors} asserts that these are the \emph{only} errors.
Thus to achieve regret-freeness, it suffices to avoid missing opportunities and taking risks, separately after each history.
The proof in \cref{app:pf_charac:noerror_efficient} begins with an arbitrary non-efficient strategy $\strat$, meaning one whose outcome $\rank$ under some majority will $\perm$ ranks some pair $x,y \in \mathcal{X}$ as $y \rank x$ even though $x \pref y$ and $x \perm y$.
The pair $x,y$ cannot have been voted on (else $x$ would have prevailed), so must have been ranked by an imposition of transitivity.
We show that avoiding the two errors suffices to preclude such `unfavourable' impositions of transitivity; thus $\strat$ must have committed the one or the other.

\Cref{theorem:regretfree_errors} tells us that non-error pairs are attractive, but it does not promise that they exist.
In particular, there could conceivably be histories along which the chair has committed no errors, but is now forced to do so because no unranked pairs remain that can be offered without missing an opportunity or taking a risk.
The following rules out this scenario:

\begin{proposition}
	\label{proposition:regretfree_errors_tight}
	After any history along which the chair has not missed an opportunity or taken a risk,
	if some pairs remain unranked,
	then there is a pair which can be offered without missing an opportunity or taking a risk.
\end{proposition}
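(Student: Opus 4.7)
The plan is to prove the proposition by strong induction on $n := |\mathcal{X}|$. The base case $n = 2$ is immediate: the sole unranked pair admits no intermediate alternative to trigger either error.

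For the induction step, I would let $\bot$ denote the $\pref$-worst alternative and split on whether $\bot$ is ranked against every other alternative in $\rank$. In the first sub-case, I would restrict attention to $\mathcal{X}' := \mathcal{X} \setminus \{\bot\}$ with the restricted proto-ranking $\rank'$, first establishing via a preliminary lemma that $\rank'$ is itself no-error-reachable on the sub-problem; this step is non-trivial because transitive closures through $\bot$ in the original history can contribute to $\rank'$ without corresponding direct non-$\bot$ votes. The inductive hypothesis then furnishes a safe pair $(x, y)$ with $x, y \neq \bot$. I would verify $(x, y)$ remains safe in $\rank$ by inspecting the new considerations at $z = \bot$: missing-opportunity and risk~(a) both require $\bot \pref y$, which is impossible; risk~(b) via $z = \bot$ would force $x \rank \bot \rank y$ and hence $x \rank y$ by transitivity, contradicting the unrankedness of $(x, y)$.

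In the second sub-case, I would construct a safe pair $(x, \bot)$ by choosing $x$ unranked with $\bot$ such that: (i) every $z$ strictly $\pref$-between $x$ and $\bot$ is already $\rank$-above $x$ or $\rank$-below $\bot$; (ii) $\{z : x \rank z\} \subseteq \{z : \bot \rank z\}$; and (iii) every $\rank$-superior of $\bot$ that is $\pref$-worse than $x$ is also a $\rank$-superior of $x$. Conditions~(i)--(iii) are precisely what is needed to negate the three error types for $(x, \bot)$; the crux is showing that some $x$ meets all three simultaneously, which I would argue via a structural analysis of the no-error history.

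The hardest step will be Case~B, particularly when $\bot$ has $\rank$-inferiors. A pathological $\rank$ could in principle block every candidate $x$ from satisfying (i)--(iii) at once, but the no-error property of past play constrains $\rank$ sufficiently to preclude this: intuitively, had any past vote missed an opportunity or taken a risk, $\rank$ could contain a ``trap'' configuration blocking all choices of $x$. A natural fallback, should the direct $(x, \bot)$ construction fail in some sub-configuration, is to exhibit a safe pair not involving $\bot$ by applying the inductive hypothesis to an appropriately restricted sub-problem.
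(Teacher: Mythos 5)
Your reduction of the safety of a pair $\{x,\bot\}$ to conditions (i)--(iii) is accurate, and your Case~A lifting argument (only $z=\bot$ can create new errors, and risk~(b) via $\bot$ is excluded by transitivity once $\bot$ is fully ranked) is sound. But the proposal has two genuine gaps. First, Case~A rests on the unproven ``preliminary lemma'' that the restriction of $\rank$ to $\mathcal{X}\setminus\{\bot\}$ is itself reachable by error-free play in the sub-problem; you flag it as non-trivial and give no argument. It is also avoidable: the invariant worth inducting on is ``$\rank$ contains no missed opportunity'' (a pair with $y \rank x$, $x \pref z \pref y$, $y \nrank z \nrank x$), which passes to restrictions trivially---but then you would owe instead the fact that error-free play preserves this invariant (the paper's \Cref{corollary:nomissed_pres}), which you also do not supply.

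Second, and fatally, Case~B is only a restatement of what must be proved, and neither of your proposed routes works in general. A safe pair containing $\bot$ need not exist: with $\mathcal{X}=\{1,2,3\}$, $1 \pref 2 \pref 3$, and the error-free history consisting of one vote on $\{2,3\}$ won by $2$ (so $2 \rank 3$), the only unranked pair containing $\bot=3$, namely $\{1,3\}$, misses an opportunity (witness $z=2$), and the unique safe pair is $\{1,2\}$. The fallback---lift a safe pair from the sub-problem without $\bot$---is unsound precisely when $\bot$ is not fully ranked: after one vote on $\{2,3\}$ won by $3$ (so $3 \rank 2$), the pair $\{1,2\}$ is safe in the sub-problem on $\{1,2\}$ but takes a risk in the full problem (witness $z=3$: $1 \pref 3$, $3 \rank 2$, $3 \nrank 1$). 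So a case-dependent choice between (and beyond) your two constructions is needed, and supplying it is exactly the combinatorial content of the proposition. The paper avoids induction on $\abs{\mathcal{X}}$ altogether: assuming every unranked pair is an error, it follows the ``$\{z,w\}$ makes $\{x,y\}$ an error'' pointers to a minimal cycle (\Cref{lemma:tightness_lemma}) and extracts from it a missed opportunity already present in $\rank$, contradicting the preservation result above.
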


The proof is in \cref{suppl:pf_tightness}.
\Cref{proposition:regretfree_errors_tight} shows that the characterisation in \Cref{theorem:regretfree_errors} of regret-free behaviour is tight:
	for any $\perm$ and sequence of pairs that is error-free under $\perm$,
	some regret-free strategy offers this sequence under $\perm$.%
		\footnote{In particular, a strategy can be constructed which offers this sequence under $\perm$ and commits no errors under other majority wills.
		Such a strategy is regret-free by \Cref{theorem:regretfree_errors}.}

\Cref{theorem:regretfree_errors}, augmented by \Cref{proposition:regretfree_errors_tight}, allows us to give the following simple `myopic' advice to the chair.
After each history, inspect the current proto-ranking to identify an unranked pair of alternatives that would not miss an opportunity or take a risk; that such a pair exists is guaranteed by \Cref{proposition:regretfree_errors_tight}.
Offer any such pair for a vote.
By \Cref{theorem:regretfree_errors}, the outcome will be $\perm$-unimprovable whatever the majority will $\perm$.

%%%%%%%%%%%%%%%%%%%%%%
\subsection{Priorities, and a characterisation of insertion sort}
\label{sec:charac:is2}
%%%%%%%%%%%%%%%%%%%%%%

We now show that regret-free strategies differ (only) in how they \emph{prioritise,} and that insertion sort is characterised by its \emph{lexicographic} prioritisation.

Given a majority will $\perm$,
call a pair $x \pref y$ an \emph{agreement pair} iff $x \perm y$,
and a \emph{disagreement pair} otherwise.
All regret-free strategies rank the agreement pairs `right' (i.e. $x \rank y$)
by \Cref{theorem:regretfree_efficient}.
A disagreement pair
is ranked `wrong' (as $y \rank x$) if voted on,
and ranked `right' if instead ranked by transitivity,
since regret-free strategies permit only \emph{favourable} impositions of transitivity by \Cref{theorem:regretfree_errors}.

What distinguishes regret-free strategies is thus precisely
which disagreement pairs are voted on and which are held back.
Held-back pairs have more opportunities to be ranked `right' by transitivity.

\setcounter{example}{0}
\begin{example}[continued]
	\label{example:123_priorities}
	Recall the details from \cpageref{example:123_Wreachable,example:123_maw}.
	The disagreement pairs are $\alpha,\beta$ and $\beta,\gamma$.
	Insertion sort prioritises the former:
	its outcome is $\alpha \rankp \gamma \rankp \beta$.
	Other regret-free strategies prioritise $\beta,\gamma$ instead,
	having outcome $\beta \rank \alpha \rank \gamma$.%
		\footnote{Some regret-free strategy has outcome $\rank$ under $\perm$
		by \Cref{proposition:regretfree_efficient_tight} (p. \pageref{proposition:regretfree_efficient_tight}).
		An example is `reverse insertion sort':
		in its $k^\text{th}$ round, the chair's top $k-1$ alternatives have already been ranked,
		and now her $k^\text{th}$-favourite is pitted in turn against the lowest-ranked of these $\pref$-better alternatives,
		against the second-lowest ranked,
		and so on until it loses a vote.}
\end{example}

To understand the priorities of insertion sort,
label the alternatives $\mathcal{X} \equiv \{1,\dots,n\}$ so that $1 \pref \cdots \pref n$.
Insertion sort leaves alternative $1$ for last,
by totally ranking the alternatives $\{2,\dots,n\}$ before offering any votes involving $1$.
This intuitively maximises opportunities for favourable impositions of transitivity involving $1$,
and so optimises the placement of this alternative.

Apart from alternative $1$, alternative $2$ is left for last by insertion sort,
suggesting that this alternative is placed optimally, subject to optimal placement of alternative $1$.
This suggests that insertion sort prioritises lexicographically,
by placing alternative $1$ as well as possible,
and subject to that placing alternative $2$ as well as possible,
and so on.
We shall formalise this intuition.

A simple cardinal measure of how well a strategy $\strat$
places a given alternative $x \in \mathcal{X}$
is \emph{how many} $\pref$-worse alternatives $y$ are ranked below $x$, i.e.
\begin{equation*}
	N^\strat_x(\mathord{\perm})
	\coloneqq \abs*{ \left\{ y \in \mathcal{X} :
	\text{$x \pref y$ and $x \rankfn{\strat}{\perm} y$}
	\right\} } ,
\end{equation*}
where $\rankfn{\strat}{\perm}$ denotes the outcome of $\strat$ under majority will $\perm$.
We call a strategy $\strat$ \emph{better for $x$} than another strategy $\stratp$ iff
$N^\strat_x$ first-order stochastically dominates $N^{\stratp}_x$, i.e.
\begin{equation*}
	\abs*{ \left\{ \mathord{\perm} :
	N^\strat_x(\mathord{\perm}) \geq k
	\right\} }
	\geq \abs{ \{ \mathord{\perm} :
	N^{\stratp}_x(\mathord{\perm}) \geq k
	\} } 
	\quad \text{for each $k \in \{1,\dots,n-1\}$.}
\end{equation*}
If a strategy $\strat$ belongs to a set $\Sigma$ of strategies
and is better for $x$ than any other $\stratp \in \Sigma$,
then we say that $\strat$ is \emph{best for alternative $x$ among $\Sigma$.}

Call a strategy \emph{lexicographic} iff
among all strategies, it is best for the chair's favourite alternative;
among such strategies, it is best for her second-favourite;
among \emph{such} strategies, it is best for her third-favourite;
etc.
Say that two strategies are \emph{outcome-equivalent} iff
they have the same outcome under $\perm$,
for every majority will $\perm$.

\begin{theorem}
	\label{theorem:is_lexicographic}
	A strategy is outcome-equivalent to insertion sort iff it is lexicographic.
\end{theorem}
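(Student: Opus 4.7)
I would prove the theorem via two claims:
(A) insertion sort is lexicographic, and
(B) any lexicographic strategy is outcome-equivalent to insertion sort.
The theorem follows: outcome-equivalent strategies induce identical distributions of $N_k$ for each $k$, so (A) yields the ``only if'' direction, while (B) directly gives ``if''.
I would prove (A) and (B) simultaneously by strong induction on $n \coloneqq \abs*{\mathcal{X}}$, labelling $\mathcal{X} = \{1, \dots, n\}$ with $1 \pref \cdots \pref n$. The cases $n \leq 2$ are trivial.

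The heart of the inductive step is the FOSD optimality of insertion sort for alternative $1$. Write $\pi_1^\strat(\perm) \coloneqq n - N_1^\strat(\perm)$ for the position of $1$ in $\strat$'s output under $\perm$. Using \Cref{proposition:is_amendment} (insertion sort equals recursive amendment), one can compute
\begin{equation*}
    \abs*{\left\{ \mathord{\perm} \in \mathcal{W}
    : \pi_1^{\mathrm{IS}}(\perm) \leq p \right\}}
    = 2^{\binom{n}{2}} \bigl( 1 - 2^{-p} \bigr)
    \quad\text{for each $p \in \{1, \dots, n-1\}$,}
\end{equation*}
by conditioning on the sub-tournament on $\{2, \dots, n\}$: insertion sort places $1$ at position $\leq p$ precisely when $1$ beats one of the top-$p$ alternatives in the (recursive) IS-sort of $\{2, \dots, n\}$, an event of conditional probability $1 - 2^{-p}$ under uniform extensions. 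For any other strategy $\strat$, I would prove the reverse inequality $\abs*{\{\perm : \pi_1^\strat(\perm) \leq p\}} \leq 2^{\binom{n}{2}}(1 - 2^{-p})$.
Conditioning again on the sub-tournament and analysing $\strat$'s binary decision tree on the $1$-involving pits, each pit $\{1, y\}$ offered on a history pre-determined by the sub-tournament goes against $1$ with probability $1/2$, and accumulating $p$ such losses forces $\pi_1^\strat \geq p + 1$ (each direct loss contributes an alternative above $1$; transitivity can only add further alternatives above $1$, never remove them).

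The remaining steps are structural. Equality in the best-for-$1$ bound pins down how $\strat$ treats alternative $1$: $\strat$ must defer all $1$-pits until $\{2, \dots, n\}$ is totally ranked, and then pit $1$ against the top-ranked element, the second-top, and so on until $1$ wins—any other order or earlier pitting of $1$ strictly degrades the FOSD distribution of $\pi_1^\strat$. The sub-strategy on $\{2, \dots, n\}$ is otherwise unconstrained, and the further requirements of being best-for-$k$ for $k \in \{2,\dots,n-1\}$ reduce exactly to the lex property of this sub-strategy on $\{2, \dots, n\}$ with chair preference $2 \pref \cdots \pref n$. Applying the inductive hypothesis to this reduced problem closes the induction for both (A) and (B). The \textbf{main obstacle} is the FOSD bound on an arbitrary $\strat$: the delicate point is controlling the effect of transitivity, which may place $1$ below alternatives that $\strat$ never directly pits against $1$, thus worsening $\pi_1^\strat$ beyond what the raw count of direct losses would dictate, yet always in the direction that preserves the desired inequality.
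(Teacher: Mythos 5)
Your overall skeleton (FOSD bound for alternative $1$, equality analysis, then recursion) is in the spirit of the paper's argument, and your computation of the insertion-sort distribution and the upper bound for arbitrary strategies is essentially sound. But there is a genuine gap at the structural step: the claim that attaining the best-for-$1$ bound forces $\strat$ to \emph{defer all $1$-pits until $\{2,\dots,n\}$ is totally ranked} is false. The amendment strategy of \Cref{proposition:is_amendment} is outcome-equivalent to insertion sort, hence has the identical distribution of $N_1$ and attains your bound, yet with four alternatives it offers $\{3,4\}$, then pits the winner against $2$, then against $1$ --- at the moment of the $1$-pit the pair $\{2,3\}$ is still unranked. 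So "earlier pitting of $1$" does not strictly degrade the distribution, and your purported characterisation of the equality case would wrongly exclude strategies that your own claim (A) plus \Cref{proposition:is_amendment} certify as lexicographic; the proposal is internally inconsistent at this point. The correct structural consequence is weaker: the $m^{\text{th}}$ vote involving $1$ must be against the alternative that ends up $m^{\text{th}}$-highest among $\{2,\dots,n\}$ in the \emph{final} outcome, with arbitrary interleaving of votes among $\{2,\dots,n\}$ permitted (this is the paper's class $\Sigma_1$, and the analogous $\Sigma_k$ for $k\geq 2$, whose definition explicitly references the outcome ranking rather than the state of the proto-ranking at the time of the pit).

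This error also breaks the proposed induction on $n$: because $1$-pits may be interleaved with, and conditioned on, votes among $\{2,\dots,n\}$, and because the relevant order of the $1$-pits is pinned down only by the ex-post ranking of $\{2,\dots,n\}$, there is no well-defined "sub-strategy on $\{2,\dots,n\}$" to which the inductive hypothesis can be applied. The paper avoids this by inducting over alternatives rather than over $n$: it shows (i) that within $\Sigma_{k-1}$, being best for $k$ is equivalent to membership in $\Sigma_k$ (\Cref{lemma:best_Sigmak}, using the events $F_\ell$ and \hyperref[lemma:worst-case]{Lemma $\natural$}, the latter needed for $k\geq 2$ to rule out pits against $\pref$-better alternatives), and (ii) separately, that $\Sigma_{n-2}$ is exactly the outcome-equivalence class of insertion sort (\Cref{lemma:is_oe_charac}) --- a step your sketch also glosses over, since even the corrected structural condition is stated in terms of outcomes and does not immediately yield outcome-equivalence. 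To repair your proof you would need to replace the "defer until totally ranked" claim with the $\Sigma_k$-type condition and then supply both (i) and (ii).
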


The proof is in \cref{app:pf_proposition:is2}.

%%%%%%%%%%%%%%%%%%%%%%%%%%%%%%%%%%%
\subsection{Insertion sort and the amendment procedure}
\label{sec:charac:ra_is}
%%%%%%%%%%%%%%%%%%%%%%%%%%%%%%%%%%%

The following procedure figures prominently in the literature (see §\ref{sec:intro:lit}):

\begin{namedthm}[Amendment procedure.]
	\label{algorithm:amendment}
	Label the alternatives $\mathcal{X} \equiv \{1,\dots,n\}$ so that $1 \pref \cdots \pref n$.
	First, pit $n-1$ against $n$.
	Then pit $n-2$ against the winner.
	Then pit $n-3$ against the previous round's winner.
	And so on.
	Call the winner of the final round the \emph{final winner.}
\end{namedthm}

The amendment procedure is designed to choose a single alternative.
In particular, the final winner is ranked top, since any other alternative either lost a vote to it, or lost a vote to an alternative that lost to it, etc.
The natural way to obtain a ranking is to apply amendment recursively:

\begin{namedthm}[Recursive amendment.]
	\label{algorithm:recursive_amendment}
	Label the alternatives $\mathcal{X} \equiv \{1,\dots,n\}$ so that $1 \pref \cdots \pref n$.
	First, run the amendment procedure on $\mathcal{X}$, and write $y_1$ for the final winner.
	Next, run the amendment procedure on $\mathcal{X} \setminus \{y_1\}$, writing $y_2$ for the final winner.%
		\footnote{If the amendment procedure demands a vote on a pair $x,y$ that is already ranked as (wlog) $x \rank y$,
		treat this as offering $x,y$ and obtaining outcome $x \perm y$.}
	Then run the amendment procedure on $\mathcal{X} \setminus \{y_1,y_2\}$ to obtain a final winner $y_3$.
	And so on.
	The resulting ranking is $y_1 \rank \cdots \rank y_{n-1} \rank y_n$, where $y_n$ denotes the unique element of $\mathcal{X} \setminus \{y_1,\dots,y_{n-1}\}$.%
		\footnote{Like insertion sort,
		recursive amendment describes only a path,
		so formally defines an equivalence class of strategies;
		refer to \cref{footnote:is_class} on \cpageref{footnote:is_class}.}
\end{namedthm}

Recursive amendment is distinct from insertion sort.%
	\footnote{E.g. if $\mathcal{X} = \{1,2,3,4\}$ with $1 \pref 2 \pref 3 \pref 4$,
	$1 \perm 4 \perm 3$ and $4 \perm 2 \perm 3$,
	then insertion sort offers $3,4$, then $2,4$, then $2,3$, then $1,4$,
	whereas recursive amendment offers $3,4$, then $2,4$, then $1,4$, then $2,3$.
	The two strategies \emph{do} coincide when there are three alternatives.}
Nonetheless:

\begin{proposition}
	\label{proposition:is_amendment}
	Recursive amendment is outcome-equivalent to insertion sort.
\end{proposition}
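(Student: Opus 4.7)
The plan is to induct on $n := \abs*{\mathcal{X}}$, with the base case $n \leq 2$ trivial. Fix a tournament $\perm$; I shall show that the insertion-sort strategy and an arbitrary amendment strategy have the same outcome under $\perm$.

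First I would establish that the top-ranked alternatives of the two outcomes coincide. Writing $t_k$ for the top of the ranking produced by insertion sort on the restricted subproblem $\{k, \dots, n\}$ (so $t_n = n$ and $t_{n-1}$ is the winner of $\{n-1, n\}$), the insertion-sort algorithm immediately yields the recursion $t_k = k$ if $k \perm t_{k+1}$ and $t_k = t_{k+1}$ otherwise, since round $k$ begins by pitting $k$ against the current top $t_{k+1}$ of $\{k+1, \dots, n\}$. But this recursion is precisely what the amendment algorithm computes; hence $t_1$ equals the final winner $y_1$ of the first amendment application, which is the top of the amendment strategy's outcome.

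Second, setting $j := y_1$, I would show that the sub-rankings below $j$ coincide. By definition of the recursive-amendment procedure, the sub-ranking of the amendment outcome on $\mathcal{X} \setminus \{j\}$ is the outcome of recursive amendment on $\mathcal{X} \setminus \{j\}$, which by the inductive hypothesis equals the outcome of insertion sort on $\mathcal{X} \setminus \{j\}$. So it suffices to prove the auxiliary claim that the restriction of insertion sort's outcome on $\mathcal{X}$ to $\mathcal{X} \setminus \{j\}$ equals insertion sort's outcome on $\mathcal{X} \setminus \{j\}$.

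For the auxiliary claim, the rounds of insertion sort on $\mathcal{X}$ that process $k \in \{j+1, \dots, n-1\}$ operate on $\{k+1, \dots, n\} \subseteq \mathcal{X} \setminus \{j\}$ and hence coincide with the corresponding rounds of insertion sort on $\mathcal{X} \setminus \{j\}$. For rounds $k \in \{1, \dots, j-1\}$, the key fact is that $j \perm k$---a consequence of $y_1 = j$, since the recursion from the first step shows that $j$ must beat each such $k$ in order to remain on top. The first vote of round $k$ in insertion sort on $\mathcal{X}$ thus pits $k$ against the current top $j$ and is lost; the subsequent votes then pit $k$ against the successive $\rank$-top alternatives of $\{k+1, \dots, n\} \setminus \{j\}$, exactly reproducing the votes of round $k$ in insertion sort on $\mathcal{X} \setminus \{j\}$. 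Formal verification proceeds by downward induction on $k$. The main obstacle is this bookkeeping, namely verifying that throughout rounds $k < j$ of insertion sort on $\mathcal{X}$ the alternative $j$ remains on top, so that at every such moment the second-highest alternative coincides with the current top of the $j$-less sub-ranking.
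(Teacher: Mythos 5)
Your first step is fine: the top of insertion sort's outcome and the final winner of the first amendment run satisfy the same recursion $t_k = k$ if $k \perm t_{k+1}$, else $t_k=t_{k+1}$, so they coincide; and your auxiliary claim that deleting the top $j$ from insertion sort's outcome yields insertion sort's outcome on $\mathcal{X}\setminus\{j\}$ is correct and provable along the lines you sketch (using $j \perm k$ for all $k$ with $k \pref j$). The gap is on the amendment side. The assertion that ``by definition of the recursive-amendment procedure, the sub-ranking of the amendment outcome on $\mathcal{X}\setminus\{j\}$ is the outcome of recursive amendment on $\mathcal{X}\setminus\{j\}$'' is not definitional, and it is where the real work lies. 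The later amendment runs are not fresh runs on $\mathcal{X}\setminus\{y_1\}$ under the restricted general will: they inherit the proto-ranking built up in the first run, which contains pairs ranked only by an imposition of transitivity (e.g.\ $z$ above $w$ because $z$ dethroned a champion that had earlier beaten $w$), and by the rule in the definition of recursive amendment, a demanded pair that is already ranked is \emph{not} re-voted but treated as if the vote had gone according to the existing ranking---which may disagree with $\perm$. So before you can invoke the inductive hypothesis you must show that the continuation after removing $y_1$ produces the same ranking as a from-scratch recursive-amendment run on $\mathcal{X}\setminus\{y_1\}$ under $\perm$ restricted; equivalently, that every already-ranked pair demanded in a later run is ranked consistently with $\perm$ (or at least that any such forced pair never affects the final outcome). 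This claim is true, but it is of the same character and difficulty as the insertion-sort restriction claim that you rightly flag as the ``main obstacle''; it cannot be had for free, and as written your induction does not go through.

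For comparison, the paper sidesteps this bookkeeping entirely: it introduces the class $\Sigma_{n-2}$ of strategies that, for each alternative $j$, pit $j$ successively against the top, second-from-top, \dots\ of the final ranking of $\{j+1,\dots,n\}$ until $j$ wins, proves (\Cref{lemma:is_oe_charac}) that this class is exactly the set of outcome-equivalents of insertion sort, and then observes that amendment strategies lie in $\Sigma_{n-2}$. If you want to keep your direct induction, the missing lemma about reused pairs in later amendment runs is the piece you need to supply.
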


This follows from one of the lemmata in our proof of \Cref{theorem:is_lexicographic}:

\begin{proof}
	Recall from \cref{app:pf_proposition:is2:overview} (\cpageref{definition:Sigmak}) the set $\Sigma_{n-2}$ of strategies.
	Clearly recursive amendment belongs to $\Sigma_{n-2}$.
	Thus by \Cref{lemma:is_oe_charac} (\cpageref{lemma:is_oe_charac}),
	recursive amendment is outcome-equivalent to insertion sort.
\end{proof}

In the sorting context,
recursive amendment is called \emph{selection sort---}see e.g. \textcite[§5.2.3]{Knuth1998}.

%%%%%%%%%%%%%%%%%%%%%%%%%%%%%%%%%%%
%%%%%%%%%%%%%%%%%%%%%%%%%%%%%%%%%%%
\section{Concluding remarks}
\label{sec:concl}
%%%%%%%%%%%%%%%%%%%%%%%%%%%%%%%%%%%
%%%%%%%%%%%%%%%%%%%%%%%%%%%%%%%%%%%

We close by considering some alternative interpretations of our model and of our results,
as well as a few extensions.

%%%%%%%%%%%%%%%%%%%%%%%%%%%%%%%%%%%
\subsection{Broader interpretations}
\label{sec:concl:interp}
%%%%%%%%%%%%%%%%%%%%%%%%%%%%%%%%%%%

The model admits interpretations in which $\perm$ arises from something other than majority voting by a committee.
It could be the (expressed) preference of a single individual, for example:
in this case, cycles reflect inconsistencies in his judgment,
and a crafty advisor (`the chair') exploits these by asking him to make pairwise comparisons in a well-chosen order.

In practice, the chair may be more constrained in her choice of agenda-setting strategy.
In that case, our analysis provides an upper bound on how well she can do,
and gives a qualitative indication of what incremental adjustments to agenda-setting might most benefit her.

Finally, there need not literally be a committee chair:
more generally, our results speak to how institutional or procedural design can influence collective decisions,
in the direction of a(ny) given priority $\pref$.

%%%%%%%%%%%%%%%%%%%%%%%%%%%%%%%%%%%
\subsection{Extensions}
\label{sec:concl:extensions}
%%%%%%%%%%%%%%%%%%%%%%%%%%%%%%%%%%%

Varying the committee's rules merely alters $\perm$,
and so our analysis continues to apply.
For example, we can accommodate a super-majority voting rule,
an even electorate $I \in \N$, and abstentions.
To determine $\perm$, we need only specify which alternative wins in case of an indecisive vote, e.g. by assuming that there is a status quo ranking that prevails in such cases.

A more substantial variation is to permit the chair (sometimes) to decide how two alternatives are to be ranked following an indecisive vote on them.
This happens if the chair has a vote, for example.
We extend all of our results to allow for this in \cref{suppl:indecisive}.

Our definition of a strategy rules out conditioning on who voted how in the past, since a history records only which alternative in each pair wins (rather than the full vote tally).%
	\footnote{Of course, this is only a restriction if the chair can observe individual votes.}
This is merely to avoid uninteresting complications:
we show in \cref{suppl:non-simple} that our results hold for `extended strategies' that can condition on past votes cast.

%______________________________________________________________________________

%       _                               _ _
%      / \   _ __  _ __   ___ _ __   __| (_) ___ ___  ___
%     / _ \ | '_ \| '_ \ / _ \ '_ \ / _` | |/ __/ _ \/ __|
%    / ___ \| |_) | |_) |  __/ | | | (_| | | (_|  __/\__ \
%   /_/   \_\ .__/| .__/ \___|_| |_|\__,_|_|\___\___||___/
%           |_|   |_|

% \pagebreak
\begin{appendices}

\renewcommand*{\thesubsection}{\Alph{subsection}}

\crefalias{section}{appsec}
\crefalias{subsection}{appsec}
\crefalias{subsubsection}{appsec}
%%%%%%%%%%%%%%%%%%%%%%
%%%%%%%%%%%%%%%%%%%%%%
\section*{Appendices}
\label{app}
\addcontentsline{toc}{section}{Appendices}
%%%%%%%%%%%%%%%%%%%%%%
%%%%%%%%%%%%%%%%%%%%%%

%%%%%%%%%%%%%%%%%%%%%%
\subsection{Standard definitions}
\label{app:st_definitions}
%%%%%%%%%%%%%%%%%%%%%%

This appendix collects the definitions of order-theoretic concepts used in this paper.
Let $\mathcal{A}$ be a non-empty set, and $\rel$ a binary relation on it.
Recall that $\rel$ is formally a subset of $\mathcal{A} \times \mathcal{A}$, and that `$a \rel b$' is shorthand for `$(a,b) \in \mathord{\rel}$'.
For $a,b \in \mathcal{A}$,
we write $a \nrel b$ iff it is not the case that $a \rel b$.
For $a,b \in \mathcal{A}$ such that $a \rel b$, we denote by $[a,b]_{\mathord{\rel}}$ the \emph{$\rel$-order interval}
\begin{equation*}
	[a,b]_{\mathord{\rel}}
	\coloneqq \{a,b\} \union \{c \in \mathcal{A}: a \rel c \rel b \} .
\end{equation*}
Distinct $a,b \in \mathcal{A}$ with $a \rel b$
are \emph{$\rel$-adjacent} iff $[a,b]_{\mathord{\rel}} = \{a,b\}$.

$\rel$ is
\emph{reflexive} (\emph{irreflexive}) iff $a \rel a$ ($a \nrel a$) for every $a \in \mathcal{A}$,
\emph{total} iff $a \rel b$ or $b \rel a$ for any distinct $a,b \in \mathcal{A}$,
\emph{complete} iff it is reflexive and total,
\emph{asymmetric} iff $a \rel b$ implies $b \nrel a$ for $a,b \in \mathcal{A}$, and
\emph{transitive} iff $a \rel b \rel c$ implies $a \rel c$ for $a,b,c \in \mathcal{A}$.

The \emph{transitive closure} of $\rel$ is the smallest (in the sense of set inclusion) transitive relation that contains $\rel$.%
	\footnote{Every relation possesses a transitive closure because the maximal relation $\mathcal{A} \times \mathcal{A}$ is transitive and the intersection of transitive relations is transitive.}
The \emph{strict part} of $\rel$ is the binary relation
$\str \mathord{\rel}$ such that $a \mathrel{\str \mathord{\rel}} b$ iff $a \rel b$ and $b \nrel a$.
For two binary relations $\rel$ and $\relp$ on $\mathcal{A}$, $\relp$ is an \emph{extension} of $\rel$ iff both $\mathord{\rel} \subseteq \mathord{\relp}$ and $\str \mathord{\rel} \subseteq \str \mathord{\relp}$.

%%%%%%%%%%%%%%%%%%%%%%
\subsection{Additional material for §\ref{sec:environment}}
\label{app:extra}
%%%%%%%%%%%%%%%%%%%%%%

This appendix complements the exposition of the environment in §\ref{sec:environment}.
We
show that all and only total and asymmetric relations can be majority wills (§\ref{app:extra:arise_charac}),
formally define histories, strategies, outcomes and paths (§\ref{app:extra:defns}), 
and provide characterisations of $\perm$-reachability (§\ref{app:extra:reachability_adjacency}) and of `more aligned with than' (§\ref{app:extra:maw_charac}).

%%%%%%%%%%%%%%%%%%%%%%
\subsubsection{Which binary relations are majority wills? (§\ref{sec:environment:general_will})}
\label{app:extra:arise_charac}

In this appendix, we prove that all and only total and asymmetric relations are legitimate majority wills.
A \emph{voting behaviour} $\votei$ specifies for each pair $x,y \in \mathcal{X}$ whether voter $i$ will vote for $x$ ($x \votei y$) or for $y$ ($y \votei x$).
Formally:

\begin{definition}
	\label{definition:voting_profile}
	A \emph{voting behaviour} is a total and asymmetric relation.
	A \emph{voting profile} is a collection $( \mathord{\votei} )_{i=1}^I$ of voting behaviours.
\end{definition}

The majority will of a voting profile $( \mathord{\votei} )_{i=1}^I$ is the relation $\perm$ such that $x \perm y$ iff $x \votei y$ for a majority of voters $i \in \{1,\dots,I\}$.
The following shows that all (and only) total and asymmetric relations are the majority will of some voting profile $( \mathord{\votei} )_{i=1}^I$, even if we insist that each voter's behaviour $\votei$ be transitive (meaning that it can be rationalised as sincere):

\begin{fact}
	\label{fact:arise_charac}
	For a binary relation $\perm$ on $\mathcal{X}$, the following are equivalent:
	\begin{enumerate}

		\item \label{fact:arise_charac:totas}
		$\perm$ is total and asymmetric.

		\item \label{fact:arise_charac:voting}
		For every $I \in \N$ odd, $\perm$ is the majority will of some profile $(\mathord{\votei})_{i=1}^I$ of voting behaviours.

		\item \label{fact:arise_charac:sincere_voting}
		For some $I \in \N$ odd, $\perm$ is the majority will of some profile $(\mathord{\votei})_{i=1}^I$ of \emph{transitive} voting behaviours.

	\end{enumerate}
\end{fact}

\begin{proof}
	It is immediate that \ref{fact:arise_charac:voting} and \ref{fact:arise_charac:sincere_voting} (separately) imply \ref{fact:arise_charac:totas}.
	To see that \ref{fact:arise_charac:totas} implies \ref{fact:arise_charac:voting}, simply observe that a total and asymmetric relation $\perm$ is the majority will of the voting profile $(\mathord{\votei})_{i=1}^I = (\mathord{\perm})_{i=1}^I$ for any $I \in \N$.
	The fact that \ref{fact:arise_charac:totas} implies \ref{fact:arise_charac:sincere_voting} follows from McGarvey's (\citeyear{Mcgarvey1953}) theorem.
\end{proof}

%%%%%%%%%%%%%%%%%%%%%%
\subsubsection{Formal definitions (§\ref{sec:environment:interaction}--§\ref{sec:environment:general_will})}
\label{app:extra:defns}

In this appendix, we formally define histories, strategies, outcomes and paths.
A history is a sequence $( (x_t,y_t) )_{t=1}^T$, where $\{x_t,y_t\}$ is the pair offered in period $t$, and $x_t$ is the winner (i.e. $x_t \perm y_t$):

\begin{definition}
	\label{definition:histories}
	There is exactly one \emph{history of length $0$} (the `empty history').
	A \emph{history of length $T \in \N$} is a sequence $( (x_t,y_t) )_{t=1}^T$ of ordered pairs of alternatives satisfying $x_t \neq y_t$ and $x_t \nranktone y_t \nranktone x_t$ for each $t \in \{1,\dots,T\}$, where
	$\mathord{\rankzero} \coloneqq \varnothing$
	and $\mathord{\rankt} \coloneqq \tr \left( \mathord{\ranktone} \union \{(x_t,y_t)\} \right)$ for each $t \in \{1,\dots,T\}$.
	(Here `$\tr$' denotes the transitive closure.)
	The history is \emph{terminal} iff $\mathord{\rankT}$ is total.
\end{definition}

Let $\mathcal{H}_T$ be the set of all non-terminal histories of length $T \geq 0$, and write $\mathcal{H} \coloneqq \Union_{T=0}^\infty \mathcal{H}_T$ for all non-terminal histories.
	% \footnote{Histories can only be so long:
	% $\mathcal{H}_T$ is empty for (and only for) $T \geq \abs*{\mathcal{X}} \times ( \abs*{\mathcal{X}} - 1 ) / 2$.}
A strategy offers, after each non-terminal history, a pair of alternatives that are unranked at that history:

\begin{definition}
	\label{definition:strategies}
	A \emph{strategy of the chair} is a map $\mathord{\strat} : \mathcal{H} \to 2^{\mathcal{X}}$ such that for each non-terminal history $h \in \mathcal{H}$,
	we have $\mathord{\strat}(h) = \{x,y\}$ for some alternatives $x,y \in \mathcal{X}$
	such that $( h, (x,y) )$ and $( h, (y,x) )$ are histories.
\end{definition}

A strategy $\mathord{\strat}$ and a majority will $\perm$ generate a terminal history $( (x_t,y_t) )_{t=1}^T$ as follows:
for each $t \in \{1,\dots,T\}$,
$(x_t,y_t)$ is given by
\begin{equation*}
	\{x_t,y_t\} \coloneqq \mathord{\strat}\left( ((x_s,y_s))_{s=1}^{t-1} \right)
	\quad \text{and} \quad
	x_t \perm y_t .
\end{equation*}
This history is associated with a sequence of proto-rankings $(\mathord{\rankt})_{t=0}^T$, as outlined in \Cref{definition:histories} above.%
	\footnote{Namely, $\mathord{\rankzero} = \varnothing$ and $\mathord{\rankt} = \tr \left( \mathord{\ranktone} \union \{(x_t,y_t)\} \right) = \tr \left( \Union_{s=1}^t \{(x_s,y_s)\} \right)$ for $t \geq 1$.}

\begin{definition}
	\label{definition:outcome}
	The \emph{outcome} of a strategy $\mathord{\strat}$ under a majority will $\perm$ is the ranking $\rankT$ associated with the terminal history they generate.
\end{definition}

A strategy and a majority will also generate a sequence of non-terminal histories---namely, all truncations of their generated terminal history.
If a non-terminal history is generated by $\strat$ and some majority will, we say that it belongs to the \emph{path} of $\strat$.

%%%%%%%%%%%%%%%%%%%%%%
\subsubsection{A characterisation of \texorpdfstring{$\boldsymbol{\perm}$-reachability}{W-reachability} (§\ref{sec:environment:W-reachability})}
\label{app:extra:reachability_adjacency}

This appendix contains a characterisation of reachability used in our proofs.

\begin{observation}
	\label{observation:reachability_adjacency}
	Let $\perm$ be a majority will, and $\rank$ a ranking.
	The following are equivalent:
	\begin{enumerate}

		\item \label{item:observation:reachability_adjacency:feas}
		$\rank$ is $\perm$-reachable.

		\item \label{item:observation:reachability_adjacency:adj}
		For any $\rank$-adjacent $x,y \in \mathcal{X}$ with $x \rank y$, we have $x \perm y$.

	\end{enumerate}
\end{observation}

Condition \ref{item:observation:reachability_adjacency:adj} admits a graph-theoretic interpretation. Think of $\perm$ as a directed graph with vertices $\mathcal{X}$ and a directed edge from $x$ to $y$ iff $x \perm y$ (as in \Cref{example:123_Wreachable} on \cpageref{example:123_Wreachable}), and think of a ranking $\rank$ as a sequence of alternatives: the highest-ranked, the second-highest-ranked, and so on.%
	\footnote{Formally: identify $\rank$ with the sequence $(x_k)_{k=1}^{\abs*{\mathcal{X}}}$ such that $x_1 \rank x_2 \rank \cdots \rank x_{\abs*{\mathcal{X}}}$.}
Condition \ref{item:observation:reachability_adjacency:adj} requires precisely that $\rank$ be a directed path in $\perm$.

\begin{proof}
	\emph{\ref{item:observation:reachability_adjacency:adj} implies \ref{item:observation:reachability_adjacency:feas}:}
	Let $\rank$ satisfy \ref{item:observation:reachability_adjacency:adj}.
	Then $\rank$ is the outcome under $\perm$ of any strategy that offers a vote on each $\rank$-adjacent pair of alternatives.

	\emph{\ref{item:observation:reachability_adjacency:feas} implies \ref{item:observation:reachability_adjacency:adj}:}
	Let $\rank$ be $\perm$-reachable, and let $x,y \in \mathcal{X}$ be $\rank$-adjacent with $x \rank y$.
	By $\perm$-reachability, there is a strategy $\strat$ whose outcome under $\perm$ is $\rank$.
	Along its induced history, it is determined that $x \rank y$.
	Since $x,y$ are $\rank$-adjacent, this cannot be via an imposition of transitivity.
	So it must occur in a vote on $\{x,y\}$, in which $x$ wins---thus $x \perm y$.
\end{proof}

%%%%%%%%%%%%%%%%%%%%%%
\subsubsection{A characterisation of \texorpdfstring{`}{'}more aligned with than' (§\ref{sec:environment:preferences})}
\label{app:extra:maw_charac}

In this appendix, we provide a characterisation of `more aligned with than', and use it to prove the claims made in §\ref{sec:environment:preferences} (\cpageref{appl:hiring_maw}) about the applications.

\begin{observation}
	\label{observation:maw_charac}
	For rankings $\pref$, $\rank$ and $\rankp$, the following are equivalent:%
		\footnote{This is an instance of the Milgrom--Shannon (\citeyear{MilgromShannon1994}) comparative statics theorem:
		viewing $(\mathcal{X},\mathord{\pref})$ as an ordered set of actions and $\mathord{\rank},\mathord{\rankp}$ as (strict) preferences,
		\ref{item:maw_charac:maw} says that $\rank$ single-crossing dominates $\rankp$,
		and \ref{item:maw_charac:subs} says that $\rank$ chooses higher actions than $\rankp$ does.}
	\begin{enumerate}

		\item \label{item:maw_charac:maw}
		$\rank$ is more aligned with $\pref$ than $\rankp$.

		\item \label{item:maw_charac:subs}
		For every non-empty set $X \subseteq \mathcal{X}$, the $\rank$-highest alternative in $X$ is identical to or $\pref$-better than the $\rankp$-highest alternative in $X$.

	\end{enumerate}
\end{observation}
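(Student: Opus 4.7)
The plan is to establish the two implications separately, working directly from the definitions of ``more aligned with than'' and of ``$\rank$-highest alternative in $X$,'' without invoking any external machinery. The essence is a two-point case-analysis, so no real obstacle is expected; the main thing to watch is keeping track of when $\pref$, $\rank$ and $\rankp$ are being used as total/asymmetric.

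For the implication \ref{item:maw_charac:maw} $\Rightarrow$ \ref{item:maw_charac:subs}, I would fix a non-empty $X \subseteq \mathcal{X}$ and let $x$ denote the $\rank$-highest and $y$ the $\rankp$-highest element of $X$ (both exist and are unique since $\rank$ and $\rankp$ are rankings on the finite set $\mathcal{X}$). If $x = y$ we are done, so assume $x \neq y$. By definition of highest, $x \rank y$ and $y \rankp x$. Since $\pref$ is a ranking, either $x \pref y$ (which is what we want) or $y \pref x$. In the latter case, the pair $(y,x)$ satisfies $y \pref x$ and $y \rankp x$, so by more-aligned-ness of $\rank$ we would also have $y \rank x$, contradicting $x \rank y$ (rankings are asymmetric). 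Hence $x \pref y$.

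For the converse \ref{item:maw_charac:subs} $\Rightarrow$ \ref{item:maw_charac:maw}, I would fix $x, y \in \mathcal{X}$ with $x \pref y$ and $x \rankp y$ and verify $x \rank y$ by applying \ref{item:maw_charac:subs} to the two-element set $X = \{x, y\}$. Since $x \rankp y$, the $\rankp$-highest element of $X$ is $x$. Condition \ref{item:maw_charac:subs} then requires that the $\rank$-highest element of $X$ be identical to or $\pref$-better than $x$; but $x \pref y$ means $x$ is already $\pref$-best in $X$, so the $\rank$-highest element of $X$ must be $x$ itself, i.e.\ $x \rank y$.

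The argument is essentially the two-point specialisation of the Milgrom--Shannon single-crossing characterisation cited in the footnote, and each direction occupies only a few lines; no induction, extension lemma, or appeal to $\perm$-feasibility is needed.
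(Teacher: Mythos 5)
Your proof is correct and follows essentially the same route as the paper's: both directions reduce to the same two-point comparison, with the paper phrasing each implication as a contrapositive while you argue directly (using a small contradiction in the first direction and the set $X=\{x,y\}$ in the second). There is no substantive difference in content or difficulty.
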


\begin{proof}
	\emph{\ref{item:maw_charac:maw} implies \ref{item:maw_charac:subs}:}
	We prove the contra-positive.
	Suppose that $\mathord{\rank},\mathord{\rankp}$ do not satisfy \ref{item:maw_charac:subs}, so that there is a non-empty $X \subseteq \mathcal{X}$ whose $\rank$-highest element $x$ is (strictly) $\pref$-worse than its $\rankp$-highest element $x'$.
	Then $\rankp$ ranks $x,x'$ `right' ($x' \pref x$ and $x' \rankp x$) whereas $\rank$ ranks them `wrong' ($x \rank x'$), so $\rank$ is not more aligned with $\pref$ than $\rankp$.

	\emph{\ref{item:maw_charac:subs} implies \ref{item:maw_charac:maw}:}
	We prove the contra-positive.
	Suppose that $\rank$ is not more aligned with $\pref$ than $\rankp$, so that there are alternatives $x,x' \in \mathcal{X}$ with $x' \pref x$, $x' \rankp x$ and $x \rank x'$.
	Then the $\rank$-highest alternative in $X \coloneqq \{x,x'\}$ is (strictly) $\pref$-worse than the $\rankp$-highest.
\end{proof}

\begin{namedthm}[Hiring {\normalfont(continued)}.]
	\label{appl:hiring_pf}
	We claimed that a more aligned ranking is precisely one that hires a weakly $\pref$-better candidate at every realisation of uncertainty.
	This follows immediately from \Cref{observation:maw_charac}.
\end{namedthm}

\begin{namedthm}[Party lists {\normalfont(continued)}.]
	\label{appl:party_pf}
	Enrich the model so that only a random subset $X \subseteq \mathcal{X}$ of candidates is available.%
		\footnote{The grand set $X=\mathcal{X}$ can occur with high probability, if desired.}
	We claim that $\rank$ is more aligned with $\pref$ than $\rankp$
	iff for every realisation $(K,X)$ and every $k \leq K$, the $k^\text{th}$ $\pref$-best candidate hired by $\rank$ is weakly $\pref$-better than the $k^\text{th}$ $\pref$-best hired by $\rankp$.

	To prove the `only if' part, let $\rank$ be more aligned with $\pref$ than $\rankp$, and fix an $X \subseteq \mathcal{X}$ and a $K$.
	Assume without loss of generality that $K \leq \abs*{X}$.
	Label the candidates $\{x_1,\dots,x_K\}$ hired by $\rank$ under $X$ so that $x_1 \rank \cdots \rank x_K$, and similarly write $x_1' \rankp \cdots \rankp x_K'$ for those hired by $\rankp$.
	We must show that $x_k \prefeq x_k'$ for every $k \in \{1,\dots,K\}$.
	We proceed by strong induction on $k$.
	The base case $k=1$ is immediate from \Cref{observation:maw_charac}.

	For the induction step, suppose for some $k \in \{2,\dots,K\}$ that $x_\ell \prefeq x_\ell'$ for every $\ell<k$.
	Define $Y \coloneqq X \setminus \{x_1,\dots,x_{k-1},x_1',\dots,x_{k-1}'\}$.
	We have $x_\ell \prefeq x_\ell' \pref x_k'$ for every $\ell<k$ by the induction hypothesis.
	It follows that $x_k' \in Y$, and hence that $x_k'$ is the $\rankp$-highest alternative in $Y$.
	If $x_k \in Y$, then $x_k$ is the $\rank$-highest alternative in $Y$, whence $x_k \prefeq x_k'$ by \Cref{observation:maw_charac}.
	If instead $x_k \notin Y$, then $x_k = x_\ell' \pref x_k'$ for some $\ell<k$.

	For the `if' part, we prove the contra-positive.
	Suppose that $\rank$ is not more aligned with $\pref$ than $\rankp$.
	Then by \Cref{observation:maw_charac}, there is a subset $X' \subseteq \mathcal{X}$ such that
	$\rank$ hires a strictly $\pref$-worse candidate than $\rankp$
	at the realisation $(K,X) = (1,X')$ of uncertainty.
\end{namedthm}

%%%%%%%%%%%%%%%%%%%%%%
\subsection{Proof of \texorpdfstring{\Cref{theorem:regretfree_efficient,theorem:regretfree_errors}}{Theorems \ref{theorem:regretfree_efficient} and \ref{theorem:regretfree_errors}} (§\ref{sec:charac},
\texorpdfstring{\cpageref{theorem:regretfree_efficient,theorem:regretfree_errors}}{pp. \pageref{theorem:regretfree_efficient} and \pageref{theorem:regretfree_errors}})}
\label{app:pf_charac}
%%%%%%%%%%%%%%%%%%%%%%

We prove \Cref{theorem:regretfree_efficient,theorem:regretfree_errors} jointly, in the manner depicted in \Cref{fig:charac} (\cpageref{fig:charac}).
We already showed that efficiency implies regret-freeness (\Cref{corollary:efficient_regret-free}, \cpageref{corollary:efficient_regret-free}).
We shall establish the other two parts in §\ref{app:pf_charac:noerror_efficient} and §\ref{app:pf_charac:unimp_noerror}.

%%%%%%%%%%%%%%%%%%%%%%
\subsubsection{Proof that error-avoiding strategies are efficient}
\label{app:pf_charac:noerror_efficient}
%%%%%%%%%%%%%%%%%%%%%%

The proof relies on two intermediate results, \Cref{lemma:noerror_nocry} and \Cref{corollary:nomissed_pres} below.
We first require an abstract fact about the transitive closure operation:

\begin{observation}
	\label{observation:tr_cl}
	Consider a proto-ranking $\rank$ and unranked alternatives $x,y \in \mathcal{X}$ (i.e. $x \nrank y \nrank x$).
	Let $\rankp$ be the transitive closure of $\mathord{\rank} \union \{(z,w)\}$, and suppose that $y \rankp x$.
	Then (a) either $y \rank z$ or $y=z$, and (b) either $w \rank x$ or $w=x$.
\end{observation}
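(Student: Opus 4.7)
My plan is to prove the observation by analysing a witnessing path in the graph $\mathord{\rank} \cup \{(z,w)\}$ whose transitive closure creates the relation $y \rankp x$.

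First I would note that the hypothesis $x \nrank y \nrank x$ implies that $y \nrank x$, so the new relation $y \rankp x$ cannot be inherited from $\rank$ alone. Consequently, when unrolling the definition of transitive closure, there must exist a finite sequence $y = v_0, v_1, \dots, v_n = x$ (with $n \geq 1$) such that $(v_k, v_{k+1}) \in \mathord{\rank} \cup \{(z,w)\}$ for each $k < n$, and such that $(v_k, v_{k+1}) = (z,w)$ for at least one $k$.

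For part (a), I would let $k^\ast$ be the \emph{smallest} index at which the new edge is used, so $v_{k^\ast} = z$ and every edge $(v_0, v_1), \dots, (v_{k^\ast - 1}, v_{k^\ast})$ lies in $\mathord{\rank}$. If $k^\ast = 0$ then $y = v_0 = z$; otherwise transitivity of the proto-ranking $\rank$ yields $y = v_0 \rank v_{k^\ast} = z$. For part (b), I would symmetrically take the \emph{largest} index $k^{\ast\ast}$ at which the new edge is used, so that $v_{k^{\ast\ast}+1} = w$ and every subsequent edge lies in $\mathord{\rank}$; then either $k^{\ast\ast} + 1 = n$, giving $w = x$, or transitivity of $\rank$ gives $w = v_{k^{\ast\ast}+1} \rank v_n = x$.

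There is no real obstacle beyond being careful about the degenerate cases $k^\ast = 0$ and $k^{\ast\ast} + 1 = n$, which produce the equalities $y = z$ and $w = x$ respectively rather than the strict $\rank$-comparisons. The proof-theoretic fact being used is standard: the transitive closure of $\mathord{\rank} \cup \{(z,w)\}$ adds precisely those pairs $(a,b)$ that can be connected by a path through the new edge, and because $\rank$ is already transitive, the $\rank$-segments before the first and after the last occurrence of this edge collapse to single $\rank$-comparisons (or equalities).
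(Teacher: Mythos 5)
Your proposal is correct and follows essentially the same route as the paper: extract a path from $y$ to $x$ in $\mathord{\rank} \union \{(z,w)\}$, note that the new edge must occur since $y \nrank x$ and $\rank$ is transitive, and collapse the $\rank$-segments before the first and after the last occurrence of that edge by transitivity. The paper compresses this last step into one line, while you spell out the first/last-occurrence bookkeeping and the degenerate cases $y=z$, $w=x$; the substance is identical.
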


\begin{proof}
	Since $\rankp$ is the transitive closure of $\mathord{\rank} \union \{(z,w)\}$ and $y \rankp x$,
	there must be a sequence $(z_k)_{k=1}^K$ of alternatives with
	$z_1 = y$, $z_K = x$ and
	\begin{equation*}
		(z_k,z_{k+1}) \in \mathord{\rank} \union \{(z,w)\}
		\quad \text{for every $k<K$.}
	\end{equation*}
	Since $y \nrank x$ and $\rank$ is transitive, we must have $(z_k,z_{k+1}) = (z,w)$ for some $k<K$.
	The result follows since $\rank$ is transitive.
\end{proof}

\begin{definition}
	\label{definition:missed_op}
	Let $\rank$ be a proto-ranking.
	An ordered pair of alternatives $(x,y) \in \mathcal{X}$ is a \emph{missed opportunity in $\rank$} iff
	$y \rank x$
	and there is an alternative $z \in \mathcal{X}$ such that $x \pref z \pref y$ and $y \nrank z \nrank x$.
\end{definition}

\begin{lemma}
	\label{lemma:noerror_nocry}
	Consider a proto-ranking $\rank$ that contains no missed opportunities.
	Let $x \pref y$ be alternatives with $y \nrank x$.
	Suppose that offering $\{z,w\}$ (where $z \nrank w \nrank z$) does not take a risk at $\rank$, and that doing this leads to a proto-ranking $\rankp$ such that $y \rankp x$.
	Then $\{z,w\} = \{x,y\}$.
\end{lemma}

\begin{proof}
	Let $\rank$, $x$, $y$, $z$, $w$ and $\rankp$ satisfy the hypothesis of the lemma, and assume wlog that $z \pref w$. We must show that $z=x$ and $w=y$.

	\begin{namedthm}[Claim.]
		\label{claim:noerror_nocry:zRw}
		$w \rankp z$.
	\end{namedthm}

	\begin{proof}[Proof of the \protect{\hyperref[claim:noerror_nocry:zRw]{claim}}]%
		\renewcommand{\qedsymbol}{$\square$}
		Suppose toward a contradiction that $w \nrankp z$.
		We will show that $\rank$ contains a missed opportunity.

		Since $\rankp$ is induced from $\rank$ by offering $\{z,w\}$, and $w \nrankp z$, it must be that $\rankp$ is the transitive closure of $\mathord{\rank} \union \{(z,w)\}$.
		Since $y \nrank x$ and $y \rankp x$, it follows by \Cref{observation:tr_cl} that (a) either $y \rank z$ or $y=z$, and (b) either $w \rank x$ or $w=x$.
		Now consider two cases.

		\emph{Case 1: $z \pref x$ or $z=x$.}
		We will show that $z \pref x \pref y$, $y \rank z$, and $y \nrank x \nrank z$, so that $(z,y)$ is a missed opportunity in $\rank$.
		Both $x \pref y$ and $y \nrank x$ hold by hypothesis.
		For $x \nrank z$, suppose to the contrary that $x \rank z$; then since $w \rank x$ or $w=x$ by property (b), we have $w \rank z$ by transitivity of $\rank$, a contradiction.

		To obtain $y \rank z$, observe that $z \pref y$ since by hypothesis $z \pref x$ or $z=x$, and we know that $x \pref y$ and that $\pref$ is transitive. Thus $z \neq y$, whence $y \rank z$ follows by property (a).
		To see that $z \pref x$, simply note that $z=x$ is impossible because $y \rank z$ and $y \nrank x$.

		\emph{Case 2: $x \pref z$.}
		We will show that $x \pref z \pref w$, $w \rank x$, and $w \nrank z \nrank x$, so that $(x,w)$ is a missed opportunity in $\rank$.
		Both $x \pref z \pref w$ and $w \nrank z$ hold by hypothesis.
		For $z \nrank x$, suppose to the contrary that $z \rank x$; then since $y \rank z$ or $y=z$ by property (a), we have $y \rank x$ by transitivity of $\rank$, a contradiction.

		To obtain $w \rank x$, observe that $x \pref w$ since $x \pref z \pref w$ and $\pref$ is transitive.
		Thus $w \neq x$, whence $w \rank x$ follows by property (b).
	\end{proof}%
	\renewcommand{\qedsymbol}{$\blacksquare$}

	In light of the \hyperref[claim:noerror_nocry:zRw]{claim}, $\rankp$ must be the transitive closure of $\mathord{\rank} \union \{(w,z)\}$.
	Since $y \nrank x$ and $y \rankp x$, applying \Cref{observation:tr_cl} yields that (a) either $y \rank w$ or $y=w$, and (b) either $z \rank x$ or $z=x$.

	We claim that
	\begin{equation}
		z \neq x \pref w
		\label{eq:lemma:noerror_nocry1}
	\end{equation}
	is impossible.
	Suppose toward a contradiction that \eqref{eq:lemma:noerror_nocry1} holds; we will show that offering $\{z,w\}$ takes a risk at $\rank$, i.e. that $x \pref w$, $z \rank x$ and $w \nrank x$.
	We have $x \pref w$ by \eqref{eq:lemma:noerror_nocry1}, and $z \rank x$ by \eqref{eq:lemma:noerror_nocry1} and property (a).
	To see that $w \nrank x$, suppose to the contrary that $w \rank x$; then since $y \rank w$ or $y=w$ by property (a), it follows by transitivity of $\rank$ that $y \rank x$, a contradiction.

	Now suppose toward a contradiction that $\{z,w\} \neq \{x,y\}$.
	We claim that
	\begin{equation}
		z \pref y \neq w .
		\label{eq:lemma:noerror_nocry2}
	\end{equation}
	If $z=x$, then $z \pref y$ is immediate, and $y \neq w$ follows since $\{z,w\} \neq \{x,y\}$ by hypothesis.
	Suppose instead that $z \neq x$.
	Since \eqref{eq:lemma:noerror_nocry1} cannot hold, it must be that either $w \pref x$ or $w=x$. Since $x \pref y$, it follows by transitivity of $\pref$ that $w \pref y$, so that $y \neq w$.
	Furthermore, since $z \pref w$, transitivity of $\pref$ yields $z \pref y$.
	So \eqref{eq:lemma:noerror_nocry2} holds.

	It remains to derive a contradiction from $\{z,w\} \neq \{x,y\}$, using the fact that \eqref{eq:lemma:noerror_nocry2} must hold.
	We shall show that $z \pref y$, $y \rank w$ and $y \nrank z$, so that offering $\{z,w\}$ takes a risk at $\rank$.
	We obtain $z \pref y$ from \eqref{eq:lemma:noerror_nocry2}, and $y \rank w$ from \eqref{eq:lemma:noerror_nocry2} and property (a).
	And it must be that $y \nrank z$ because $y \rank z$ together with property (b) and transitivity of $\rank$ imply the contradiction $y \rank x$.
\end{proof}

\begin{corollary}
	\label{corollary:nomissed_pres}
	Suppose that $\rank$ contains no missed opportunities, and that offering $\{z,w\}$ (where $z \nrank w \nrank z$) does not miss an opportunity or take a risk at $\rank$.
	Then the proto-ranking $\rankp$ induced by offering $\{z,w\}$ contains no missed opportunities.
\end{corollary}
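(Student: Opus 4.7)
The plan is to argue by contradiction, reducing everything to a direct application of the preceding Lemma~\ref{lemma:noerror_nocry}. Assume $\rankp$ contains a missed opportunity, witnessed by an ordered pair $(x,y)$ and an alternative $z'$ with $x \pref z' \pref y$, $y \rankp x$, and $y \nrankp z' \nrankp x$. I would then aim to show that $\{z,w\}=\{x,y\}$ and derive a contradiction from the hypothesis that offering $\{z,w\}$ does not miss an opportunity at $\rank$.

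The key preliminary observation is that $\mathord{\rank} \subseteq \mathord{\rankp}$, since $\rankp$ is obtained by adjoining one pair to $\rank$ and closing under transitivity. Consequently the non-rankings $y \nrankp z' \nrankp x$ transfer back to $\rank$: $y \nrank z' \nrank x$. Two things follow. First, $y \nrank x$, because otherwise $(x,y)$ would already be a missed opportunity in $\rank$, contradicting the standing hypothesis. Second, the same witness $z'$ shows, via Definition~\ref{definition:errors}(1) (and transitivity of $\pref$ giving $x \pref y$), that offering $\{x,y\}$ at $\rank$ misses an opportunity.

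With $y \nrank x$ established, all the hypotheses of Lemma~\ref{lemma:noerror_nocry} are in place for this $x,y$ and the actually offered pair $\{z,w\}$: $\rank$ has no missed opportunities, $x \pref y$ with $y \nrank x$, offering $\{z,w\}$ at $\rank$ avoids both errors, and the induced $\rankp$ satisfies $y \rankp x$. The lemma therefore delivers $\{z,w\} = \{x,y\}$. Combined with the second consequence above, this contradicts the assumption that offering $\{z,w\}$ does not miss an opportunity at $\rank$, completing the proof.

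Once the lemma is in hand, the argument is short and the main obstacle is only definitional bookkeeping: one must keep cleanly separated the two related notions ``$(x,y)$ is a missed opportunity in $\rank$'' (which requires $y \rank x$) and ``offering $\{x,y\}$ misses an opportunity at $\rank$'' (which does not), since both enter the argument and play distinct roles---the former is ruled out in order to obtain $y \nrank x$, the latter is established in order to contradict the no-error hypothesis on $\{z,w\}$ once Lemma~\ref{lemma:noerror_nocry} forces $\{z,w\}=\{x,y\}$.
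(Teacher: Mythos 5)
Your proposal is correct and follows essentially the same route as the paper's own proof: rule out $y \rank x$ using the missed-opportunity witness pulled back along $\mathord{\rank}\subseteq\mathord{\rankp}$, invoke Lemma~\ref{lemma:noerror_nocry} to force $\{z,w\}=\{x,y\}$, and then contradict the no-error hypothesis with the same witness $z'$. The only differences are cosmetic (order of the bookkeeping steps), so nothing further is needed.
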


\begin{proof}
	Let $\rank$, $z$, $w$ and $\rankp$ be as in the hypothesis of the lemma, and suppose toward a contradiction that there is a missed opportunity $(x,y)$ in $\rankp$.

	We claim that $y \nrank x$ and $y \rankp x$, so that \Cref{lemma:noerror_nocry} is applicable.
	It must be that $y \nrank x$, for otherwise $(x,y)$ would be a missed opportunity in $\rank$.
	That $y \rankp x$ is immediate from the fact that $(x,y)$ is a missed opportunity in $\rankp$.

	It follows by \Cref{lemma:noerror_nocry} that $\{z,w\} = \{x,y\}$.
	But since $(x,y)$ is a missed opportunity in $\rankp$, there is an alternative $z' \in \mathcal{X}$ such that $x \pref z' \pref y$ and $y \nrankp z' \nrankp x$, and thus $y \nrank z' \nrank x$ since $\mathord{\rank} \subseteq \mathord{\rankp}$.
	Thus offering $\{z,w\} = \{x,y\}$ misses an opportunity at $\rank$---a contradiction.
\end{proof}

Armed with \Cref{lemma:noerror_nocry} and \Cref{corollary:nomissed_pres}, we are ready to prove that error-avoiding strategies are efficient.

\begin{namedthm}[Proposition.]
	A strategy that never misses an opportunity or takes a risk is efficient.
\end{namedthm}

\begin{proof}
	Take a strategy $\strat$ that is not efficient, and suppose that it never misses an opportunity or takes a risk; we shall derive a contradiction.
	Since $\strat$ is not efficient, there exists a majority will $\perm$ such that the outcome $\rank$ of $\strat$ under $\perm$ fails to be $\perm$-efficient,
	which is to say that $x \pref y$, $x \perm y$ and $y \rank x$ for some alternatives $x,y \in \mathcal{X}$.

	Write $\varnothing = \mathord{\rankzero} \subseteq \mathord{\rankarg{1}} \subseteq \cdots \subseteq \mathord{\rankarg{T'}} = \mathord{\rank}$ for the sequence of proto-rankings associated with the terminal history generated by $\strat$ and $\perm$.
	Let $T \leq T'$ be the first period in which $x,y$ are ranked ($y \nrankTone x \nrankTone y$ and $y \rankT x$).
	Since $x \perm y$ and $y \rankT x$, it cannot be that $\{x,y\}$ is voted on in period $T$.

	Because $\mathord{\rankzero} = \varnothing$ contains no missed opportunities and $\strat$ never misses an opportunity or takes a risk, \Cref{corollary:nomissed_pres} promises that $\rankTone$ contains no missed opportunities.
	Thus by \Cref{lemma:noerror_nocry}, $\strat$ must offer $\{x,y\}$ in period $T$---a contradiction.
\end{proof}

%%%%%%%%%%%%%%%%%%%%%%
\subsubsection{Proof that regret-free strategies avoid errors}
\label{app:pf_charac:unimp_noerror}
%%%%%%%%%%%%%%%%%%%%%%

The proof relies on two lemmata.
For the first, recall from \cref{app:st_definitions} the notation $[x,y]_{\mathord{\rank}}$ for $\rank$-order intervals.

\begin{lemma}
	\label{lemma:unimp_noerror:1}
	Given a pair of alternatives $x,y \in \mathcal{X}$, let $\rankp$ be a ranking such that $x \rankp y$, and let $\perm$ be a majority will that agrees with $\rankp$ on every pair $\{z,w\} \nsubseteq [x,y]_{\mathord{\rankp}}$.
	Then the outcome under $\perm$ of any strategy agrees with $\rankp$ on every pair $\{z,w\} \nsubseteq [x,y]_{\mathord{\rankp}}$.
\end{lemma}

\begin{proof}
	Let $x,y$, $\rankp$ and $\perm$ satisfy the hypothesis, and let $\rank$ be the outcome under $\perm$ of some strategy of the chair.

	\begin{namedthm}[Claim.]
		\label{claim:unimp_noerror}
		If $\{z,w\} \nsubseteq [x,y]_{\mathord{\rankp}}$ satisfy (a) $z \rankp w$ and (b) either $z \nrankp x$ or $y \nrankp w$,
		then $z \rank w$.
	\end{namedthm}

	\begin{proof}[Proof of the \protect{\hyperref[claim:unimp_noerror]{claim}}]%
		\renewcommand{\qedsymbol}{$\square$}
		Assume that $z \nrankp x$; we omit the similar argument for the case $y \nrankp w$.
		Suppose toward a contradiction that $w \rank z$.
		Label the alternatives $[w,z]_{\mathord{\rank}} \equiv \{x_1,\dots,x_K\}$ so that
		\begin{equation*}
			w = x_1 \rank \cdots \rank x_K = z .
		\end{equation*}
		Since $\rank$ is $\perm$-reachable, we have $x_1 \perm \cdots \perm x_K$ by \Cref{observation:reachability_adjacency} (\cref{app:extra:reachability_adjacency}, \cpageref{observation:reachability_adjacency}).
		Suppose that $y \rankp x_k$ for every $k<K$.
		Then $\{x_k,x_{k+1}\} \nsubseteq [x,y]_{\mathord{\rankp}}$ for every $k<K$.
		Since $\rankp$ agrees with $\perm$ on pairs $\{z',w'\} \nsubseteq [x,y]_{\mathord{\rankp}}$, it follows that $x_k \rankp x_{k+1}$ for every $k < K$, whence $w \rankp z$ by transitivity of $\rankp$, contradicting (a).

		Suppose instead that $x_k \rankp y$ for some $k<K$, and let $k'$ be the smallest such $k$.
		It must be that $x \rankp w$, since otherwise $z \rankp w$ and the transitivity of $\rankp$ would produce the contradiction $z \rankp x$.
		Thus $x \rankp w$. Then $w \notin [x,y]_{\mathord{\rankp}}$, as $\{z,w\} \nsubseteq [x,y]_{\rankp}$, $z \rankp w$ and $z \nrankp x$. Hence $y \rankp w = x_1$, so that $k' > 1$.
		By definition of $k'$, we have $x_{k'} \rankp y \rankp x_{k'-1}$.
		On the one hand, the transitivity of $\rankp$ demands that $x_{k'} \rankp x_{k'-1}$.
		On the other hand, since $\{x_{k'-1},x_{k'}\} \nsubseteq [x,y]_{\mathord{\rankp}}$ (because $y \rankp x_{k'-1}$) and $x_{k'-1} \perm x_{k'}$, we must have $x_{k'-1} \rankp x_{k'}$---a contradiction.
	\end{proof}%
	\renewcommand{\qedsymbol}{$\blacksquare$}

	Now fix a pair $\{z,w\} \nsubseteq [x,y]_{\mathord{\rankp}}$ such that $z \rankp w$; we must show that $z \rank w$.
	If either $x \rankp z$ or $w \rankp y$, then $z \rank w$ follows from the \hyperref[claim:unimp_noerror]{claim}.

	Suppose instead that $z \rankp x$ and $y \rankp w$.
	Observe that $\{z,x\} \nsubseteq [x,y]_{\mathord{\rankp}}$, (a) $z \rankp x$ and (b) $y \nrankp x$.
	We may therefore apply the \hyperref[claim:unimp_noerror]{claim} to $\{z,x\}$ to obtain $z \rank x$.
	Similarly applying the \hyperref[claim:unimp_noerror]{claim} to $\{x,w\}$ yields $x \rank w$, whence $z \rank w$ follows by transitivity of $\rank$.
	%Observe that $\{x,w\} \nsubseteq [x,y]_{\mathord{\rankp}}$, (a) $x \rankp w$ by transitivity of $\rankp$, since $x \rankp y \rankp w$, and (b) $x \nrankp x$.
	%
\end{proof}

\begin{lemma}
	\label{lemma:unimp_noerror:3}
	Let $\rank$ be a proto-ranking, and let $x,y,z \in \mathcal{X}$ be such that
	\begin{equation*}
		\{x,y,z\}^2 \intersect \mathord{\rank} \subseteq \{(x,z)\} .
	\end{equation*}
	Then there exists a ranking $\mathord{\rankp} \supseteq \mathord{\rank} \union \{(x,z),(z,y)\}$ such that $[x,y]_{\mathord{\mathord{\rankp}}} = [x,z]_{\mathord{\rank}} \union \{y\}$.
\end{lemma}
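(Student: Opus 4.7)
The plan is to construct $\rankp$ by a layered arrangement of $\mathcal{X}$. Write $B \coloneqq [x,z]_{\mathord{\rank}}$ and $A \coloneqq B \cup \{y\}$, and partition $\mathcal{X} \setminus A$ into
\begin{align*}
	U &\coloneqq \{ w \in \mathcal{X} \setminus A : w \rank a \text{ for some } a \in A \} , \\
	D &\coloneqq \{ w \in \mathcal{X} \setminus A : a \rank w \text{ for some } a \in A \} , \text{ and} \\
	N &\coloneqq \mathcal{X} \setminus (A \cup U \cup D) .
\end{align*}

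Before constructing $\rankp$, I would establish two preliminaries. First, for every $c \in B$, both $y \nrank c$ and $c \nrank y$: the subcases $c \in \{x,z\}$ are the hypothesis directly, while for $c$ with $x \rank c \rank z$, either $y \rank c$ or $c \rank y$ leads by transitivity to one of $x \rank y$, $y \rank z$, or $y \rank x$, each precluded by the hypothesis. Second, $U$ and $D$ are disjoint: if $w \in U \cap D$, then $a_2 \rank w \rank a_1$ for some $a_1, a_2 \in A$; the first preliminary rules out $a_1 = y$ and $a_2 = y$, so $a_1, a_2 \in B$, and then $a_2 \rank a_1$ (by transitivity) together with the order-interval property of $B$ (i.e., $[a_2,a_1]_{\mathord{\rank}} \subseteq B$) forces $w \in B \subseteq A$, contradicting $w \notin A$.

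I would then define $\rankp$ by arranging $\mathcal{X}$ from top to bottom in the six layers $U \cup N$, $\{x\}$, $B \setminus \{x,z\}$, $\{z\}$, $\{y\}$, $D$, breaking ties within each non-singleton layer by extending the restriction of $\rank$ to a strict total order (Szpilrajn). By construction, $\rankp$ is a ranking with $x \rankp z \rankp y$, and its interval is $[x,y]_{\mathord{\rankp}} = \{x,y\} \cup (B \setminus \{x\}) = B \cup \{y\} = [x,z]_{\mathord{\rank}} \cup \{y\}$, as required.

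The main obstacle is verifying that $\rankp$ extends $\rank$, i.e., that no $\rank$-relation goes ``upward'' across the layers. I would handle the cases uniformly using three moves: the disjointness of $U$ and $D$, the preliminary about $y$, and the order-interval property of $B$. As a representative case, suppose $a \rank b$ with $a \in B$ and $b \in U$. Then $b \rank a'$ for some $a' \in A$, so $a \rank a'$ by transitivity of $\rank$. The preliminary rules out $a' = y$; and if $a' \in B$, the order-interval property yields $b \in [a,a']_{\mathord{\rank}} \subseteq B \subseteq A$, contradicting $b \notin A$. Symmetric arguments dispose of each other potentially problematic pair of layers (e.g., $a \in D$ with $b$ in any higher layer forces $a \in U$, contradicting disjointness), completing the proof.
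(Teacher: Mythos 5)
Your proof is correct, but it takes a genuinely different route from the paper's. The paper sets $A \coloneqq [x,z]_{\mathord{\rank}} \union \{y\}$, verifies that $A$ is closed under $\rank$-order intervals (using exactly your preliminary about $y$ being $\rank$-unrelated to every element of $[x,z]_{\mathord{\rank}}$), and then invokes its general extension lemma (proved via Suzumura's theorem) to extend $\mathord{\rank} \union A^2$ to a complete transitive relation in which all of $A$ forms one indifference block, finally ``breaking indifferences appropriately'' to place $y$ immediately below $z$. You instead build $\rankp$ explicitly: partition $\mathcal{X}\setminus A$ into $U$, $D$, $N$, stack the layers $U \union N$, $\{x\}$, $[x,z]_{\mathord{\rank}}\setminus\{x,z\}$, $\{z\}$, $\{y\}$, $D$, apply Szpilrajn within layers, and check case by case that no $\rank$-relation crosses layers upward (via disjointness of $U$ and $D$, the $y$-preliminary, and order-convexity of $[x,z]_{\mathord{\rank}}$). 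The trade-off: the paper's argument outsources the combinatorial work to a reusable lemma that it needs elsewhere anyway (for the extension corollary and \Cref{proposition:ic}), at the cost of leaving the final tie-breaking step implicit; your construction is self-contained modulo Szpilrajn and makes the interval identity $[x,y]_{\mathord{\rankp}} = [x,z]_{\mathord{\rank}} \union \{y\}$ immediate by inspection, at the cost of the multi-case layer verification. One small point: you invoke ``the order-interval property of $B$'' (that $a,a' \in [x,z]_{\mathord{\rank}}$ with $a \rank a'$ implies $[a,a']_{\mathord{\rank}} \subseteq [x,z]_{\mathord{\rank}}$) without proof; it is true and routine (transitivity plus asymmetry of the proto-ranking, with a short endpoint check for $a=z$ or $a'=x$), and is the analogue of the claim the paper does spell out for $A$, so you should record that one-line verification.
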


To interpret the conclusion, observe that the properties of $\rankp$ are equivalent to the following: (a) $x \rankp z$ and $[x,z]_{\mathord{\mathord{\rankp}}} = [x,z]_{\mathord{\rank}}$, and (b) $z \rankp y$ and $[z,y]_{\mathord{\rankp}} = \{z,y\}$.
	% \footnote{The formal proof of equivalence is as follows. If a ranking $\rankp$ has properties (a) and (b), then $x \rankp z \rankp y$ by (a) and (b), and
	% %
	% \begin{align*}
	% 	[x,y]_{\mathord{\rankp}}
	% 	&= [x,z]_{\mathord{\rankp}} \union [z,y]_{\mathord{\rankp}}
	% 	&&
	% 	\\
	% 	&= [x,z]_{\mathord{\rankp}} \union \{y\}
	% 	&& \text{since $[z,y]_{\mathord{\rankp}} = \{y\}$ by (b)}
	% 	\\
	% 	&= [x,z]_{\mathord{\rank}} \union \{y\}
	% 	&& \text{since $[x,z]_{\mathord{\rankp}} = [x,z]_{\mathord{\rank}}$ by (a).}
	% \end{align*}
	% %
	% If a ranking $\rankp$ has the properties in the lemma, then $x \rankp z \rankp y$, so
	% %
	% \begin{equation*}
	% 	[x,z]_{\mathord{\rankp}} \union [z,y]_{\mathord{\rankp}}
	% 	= [x,y]_{\mathord{\mathord{\rankp}}}
	% 	= [x,z]_{\mathord{\rank}} \union \{y\} .
	% \end{equation*}
	% %
	% Hence $[x,z]_{\mathord{\rankp}} = [x,z]_{\mathord{\rank}}$ (whence (a)) and $[z,y]_{\mathord{\rankp}} = \{z,y\}$ (whence (b)).}
In words, \Cref{lemma:unimp_noerror:3} runs as follows.
Suppose that a proto-ranking $\rank$ ranks $x$ above $z$, and has nothing else to say about $\{x,y,z\}$.%
	\footnote{For simplicity, neglect the case $\{x,y,z\}^2 \intersect \mathord{\rank} = \varnothing$.}
Call the (possibly empty) set of alternatives ranked below $x$ and above $z$ (i.e. $[x,z]_{\mathord{\rank}} \setminus \{x,z\}$) the `in-between set'.
The lemma asserts that $\rank$ may be extended to a ranking $\rankp$ that (i) adds nothing to the in-between set ($[x,z]_{\mathord{\rankp}} = [x,z]_{\mathord{\rank}}$) and that (ii) ranks $y$ immediately below $z$ ($z \rankp y$ and $[z,y]_{\mathord{\rankp}} = \{z,y\}$).

The proof of \Cref{lemma:unimp_noerror:3} relies on the following general extension principle.
Recall from \cref{app:st_definitions} the definition of `extension'.

\begin{namedthm}[Extension lemma.]
	\label{lemma:extension}
	Let $\rank$ be a proto-ranking, and let $A \subseteq \mathcal{X}$ be such that $[x,y]_{\mathord{\rank}} \subseteq A$ for all $x,y \in A$ with $x \rank y$.
	Then the binary relation $\mathord{\rank} \union A^2$ admits a complete and transitive extension.
\end{namedthm}

\begin{proof}[Proof of the \protect{\hyperref[lemma:extension]{extension lemma}}]
	Let $\rank$ and $A$ satisfy the hypothesis; we seek a complete and transitive extension of the relation $\mathord{\rel} \coloneqq \mathord{\rank} \union A^2$.
	By Suzumura's extension theorem,%
		\footnote{See e.g. \textcite[p. 45]{BossertSuzumura2010}.} 
	it suffices to show that for any finite sequence of alternatives $(w_k)_{k=1}^K$ such that $w_1 \rel \cdots \rel w_K$, we have either $w_1 \rel w_K$ or $w_1 \nrel w_K \nrel w_1$.
	There are two cases.

	\emph{Case 1: $w_k \rank w_{k+1}$ for every $k<K$.}
	Then $w_1 \rank w_K$ since $\rank$ is transitive (being a proto-ranking), so $w_1 \rel w_K$.

	\emph{Case 2: $\{w_k,w_{k+1}\} \subseteq A$ for some $k<K$.}
	Let $k'$ ($k''$) be the smallest (largest) such $k<K$, so that
	$w_1 \rank \cdots \rank w_{k'}$ if $k'>1$
	and $w_{k''+1} \rank \cdots \rank w_K$ if $k''<K-1$.
	Assume toward a contradiction that $w_K \rel w_1$ and $w_1 \nrel w_K$.
	Since $\{w_1,w_K\} \nsubseteq A$ (as otherwise $w_1 \rel w_K \rel w_1$), it must be that $w_K \rank w_1$
	and either $k'>1$ or $k''<K-1$.
	By transitivity of $\rank$, $w_1 \rank w_{k'}$ if $k' > 1$, and $w_{k''+1} \rank w_K$ if $k'' < K-1$;
	in either case, $\{w_K,w_1\} \subseteq [w_{k''+1},w_{k'}]_{\mathord{\rank}}$.
	Note that $[w_{k''+1},w_{k'}]_{\mathord{\rank}} \subseteq A$ since $w_{k''+1},w_{k'} \in A$ by construction.
	Therefore $\{w_K,w_1\} \subseteq A$, which implies that $w_K \rel w_1 \rel w_K$---a contradiction.
\end{proof}

\begin{proof}[Proof of \Cref{lemma:unimp_noerror:3}]
	Let a proto-ranking $\rank$ and alternatives $x,y,z \in \mathcal{X}$ satisfy the hypothesis. Define $A \coloneqq [x,z]_{\mathord{\rank}} \union \{y\}$.

	\begin{namedthm}[Claim.]
		\label{namedthm:hypothesis_verified}
		For any $x',y' \in A$ such that $x' \rank y'$, we have $[x',y']_{\mathord{\rank}} \subseteq A$.
	\end{namedthm}

	\begin{proof}[Proof of the \protect{\hyperref[namedthm:hypothesis_verified]{claim}}]%
		\renewcommand{\qedsymbol}{$\square$}
		Fix alternatives $x',y' \in A$ with $x' \rank y'$.
		By definition of $A$, it suffices to show that $\{x',y'\} \centernot{\ni} y$.
		So suppose toward a contradiction that $x' = y$; the other case is similar.
		We have $y' \neq y$ since $x' \rank y'$ and $\rank$ is irreflexive (being a proto-ranking).
		Since $y' \in A$, it follows that $y' \in [x,z]_{\mathord{\rank}}$.
		By $x' \rank y'$ and the transitivity of $\rank$, we obtain $y = x' \rank z$.
		But $y \nrank z$ by hypothesis---a contradiction.
	\end{proof}%
	\renewcommand{\qedsymbol}{$\blacksquare$}

	By the \hyperref[namedthm:hypothesis_verified]{claim}, the \hyperref[lemma:extension]{extension lemma} is applicable, so there exists a complete and transitive extension $\rel$ of the binary relation $\mathord{\rank} \union A^2$.
	Since $z' \rel w' \rel z'$ for any $z',w' \in A$, we have in particular that $w \rel y \rel w$ for any $w \in [x,z]_{\mathord{\rank}}$.
	We may therefore obtain the desired ranking $\rankp$ by appropriately breaking indifferences in $\rel$.
\end{proof}

With \Cref{lemma:unimp_noerror:1,lemma:unimp_noerror:3} in hand, we are ready to prove that regret-free strategies avoid errors.

\begin{namedthm}[Proposition.]
	A regret-free strategy never misses an opportunity or takes a risk.
\end{namedthm}

\begin{proof}
	We shall prove the contra-positive.
	To that end, fix a strategy $\strat$ and a majority will $\perm$ such that $\strat$ misses an opportunity or takes a risk under $\perm$.
	We shall construct a majority will $\permp$ such that the outcome $\rank$ of $\strat$ under $\permp$ fails to be $\permp$-unimprovable.
	In particular, we shall find a $\permp$-reachable ranking $\mathord{\rankp} \neq \mathord{\rank}$ that is more aligned with $\pref$ than $\rank$.

	Let $T$ be the first period in which $\strat$ either misses an opportunity or takes a risk under $\perm$.
	Write $\rankTone$ for the associated end-of-period-$(T-1)$ proto-ranking, and let $\{x,y\}$ be the pair offered in period $T$.

	We shall consider three cases, based on the behaviour of $\strat$ under $\perm$ in period $T$.
	By hypothesis, $\{x,y\}$ either misses an opportunity or takes a risk at $\rankTone$.
	If $\{x,y\}$ misses an opportunity, there is an alternative $z \in \mathcal{X}$ such that $x \pref z \pref y$ and $y \nrankTone z \nrankTone x$.
	It cannot be that $x \rankTone z \rankTone y$, as this would imply that $x \rankTone y$, contradicting the fact that $\{x,y\}$ is offered in period $T$.
	Thus one of the following must hold:
	\begin{enumerate}[label=(\alph*)]

		\item \label{item:pf_regret-free_noerror:a}
		$x \nrankTone z \nrankTone y$,

		\item \label{item:pf_regret-free_noerror:b}
		$x \rankTone z \nrankTone y$, or

		\item \label{item:pf_regret-free_noerror:c}
		$x \nrankTone z \rankTone y$.

	\end{enumerate}
	If instead $\{x,y\}$ takes a risk, then there is a $z \in \mathcal{X}$ such that either
	\begin{enumerate}[label=(\alph*)]

		\setcounter{enumi}{3}

		\item \label{item:pf_regret-free_noerror:d}
		$z \pref y$, $x \rankTone z$, and $y \nrankTone z$, or

		\item \label{item:pf_regret-free_noerror:e}
		$x \pref z$, $z \rankTone y$, and $z \nrankTone x$.

	\end{enumerate}

	This yields three cases, as follows.
	Case 1 is \ref{item:pf_regret-free_noerror:a}.
	Case 2 encompasses both \ref{item:pf_regret-free_noerror:b} and \ref{item:pf_regret-free_noerror:d} under the (slightly more general) hypothesis that `there exists a $z \in \mathcal{X}$ such that $z \pref y$, $z \nrankTone y \nrankTone z$, and $x \rankTone z$'.
	Finally, case 3 encompasses \ref{item:pf_regret-free_noerror:c} and \ref{item:pf_regret-free_noerror:e} under the hypothesis that `there exists a $z \in \mathcal{X}$ such that $x \pref z$, $z \nrankTone x \nrankTone z$ and $z \rankTone y$.
	Since cases 2 and 3 are analogous, we omit the proof for the latter.

	\emph{Case 1: $\exists z \in \mathcal{X}$ such that $x \pref z \pref y$ and $\{x,z,y\}^2 \intersect \mathord{\rankTone} = \varnothing$.}
	By \Cref{lemma:unimp_noerror:3}, there is a ranking $\rankp$ such that
	\begin{equation*}
		\mathord{\rankp}
		\supseteq \mathord{\rankTone} \union \{(x,z),(z,y)\}
		\quad \text{and} \quad
		[x,y]_{\mathord{\rankp}} = \{x,y,z\} .
	\end{equation*}
	Define $\permp$ to equal $\rankp$, except that $y \permp x$.
	Clearly $\permp$ is a majority will (total and asymmetric), and $\rankp$ is $\permp$-reachable by \Cref{observation:reachability_adjacency} (\cref{app:extra:reachability_adjacency}, \cpageref{observation:reachability_adjacency}) since $x,y$ are not $\rankp$-adjacent.
	Denote by $\rank$ the outcome of $\strat$ under $\permp$.
	It remains to show that $\mathord{\rank} \neq \mathord{\rankp}$, and that $\rankp$ is more aligned with $\pref$ than $\rank$.

	For the former, since $x \rankp y$, it suffices to show that $y \rank x$.
	To this end, observe that that $\mathord{\rankTone} \subseteq \mathord{\permp}$.
	Thus the history of length $T-1$ generated by $\strat$ and $\permp$ is the same as that generated by $\strat$ and $\perm$, which means in particular that $\{x,y\}$ is offered in period $T$.
	Since $y \permp x$, it follows that $y \rank x$, as desired.

	To show that $\rankp$ is more aligned with $\pref$ than $\rank$,
	observe that $\permp$ agrees with $\rankp$ on every pair $\{w,w'\} \nsubseteq \{x,y,z\} = [x,y]_{\mathord{\rankp}}$.
	It follows by \Cref{lemma:unimp_noerror:1} that $\rank$ and $\rankp$ agree on every pair $\{w,w'\} \nsubseteq \{x,y,z\}$.
	Since $x \pref z \pref y$ and $x \rankp z \rankp y$, it follows that $\rankp$ is more aligned with $\pref$ than $\rank$.

	\emph{Case 2: $\exists z \in \mathcal{X}$ such that $z \pref y$, $z \nrankTone y \nrankTone z$ and $x \rankTone z$.}
	We shall begin with an auxiliary ranking $\rankpp$, then use it to construct our majority will $\permp$ and ranking $\rankp$.
	By \Cref{lemma:unimp_noerror:3}, there is a ranking
	\begin{equation*}
		\mathord{\rankpp}
		\supseteq \mathord{\rankTone} \union \{(x,z),(z,y)\}
	\end{equation*}
	such that
	\begin{equation}
		[x,y]_{\mathord{\rankpp}}
		= [x,z]_{\mathord{\rankTone}} \union \{y\}.
		\label{eq:int_a}
	\end{equation}
	Define
	\begin{equation*}
		X
		\coloneqq \left\{
		w \in [x,z]_{\mathord{\rankTone}} \setminus \{x\} :
		w \pref y \right\} ,
	\end{equation*}
	and let $\permp$ be such that
	\begin{enumerate}[label=(\roman*)]

		\item \label{item:unimp_noerror:permp:i}
		$w \permp y$ for every $w \in X$,

		\item \label{item:unimp_noerror:permp:ii}
		$y \permp w$ for every $w \in [x,z]_{\mathord{\rankTone}} \setminus X$, and

		\item \label{item:unimp_noerror:permp:iii}
		$\permp$ agrees with $\rankpp$ on every other pair.

	\end{enumerate}
	Denote by $\rank$ the outcome of $\strat$ under the majority will $\permp$.

	Observe that
	\ref{item:unimp_noerror:permp:i} $y \nrankTone w$ for every $w \in X$
	(since otherwise $y \rankTone w \rankTone z$, contradicting the case-2 hypothesis),
	\ref{item:unimp_noerror:permp:ii} $w \nrankTone y$ for every $w \in [x,z]_{\mathord{\rankTone}} \setminus X$
	(otherwise either $x = w \rankTone y$ or $x \rankTone w \rankTone y$, whence $x \rankTone y$), and
	\ref{item:unimp_noerror:permp:iii} $\mathord{\rankTone} \subseteq \mathord{\rankpp}$.
	Thus $\mathord{\rankTone} \subseteq \mathord{\permp}$ by definition of the latter.
	It follows that the history of length $T-1$ generated by $\strat$ and $\permp$ is the same as that generated by $\strat$ and $\perm$, which means in particular that $\{x,y\}$ is offered in period $T$.
	Since $y \permp x$, we thus have $y \rank x$.

	Since $X \subseteq [x,z]_{\mathord{\rankTone}} \subseteq [x,y]_{\mathord{\rankpp}}$ (by definition of $X$ and \eqref{eq:int_a}), $\permp$ agrees with $\rankpp$ on every pair $\{w,w'\} \nsubseteq [x,y]_{\mathord{\rankpp}}$.
	It follows by \Cref{lemma:unimp_noerror:1} that $\rank$ agrees with $\rankpp$ on every pair $\{w,w'\} \nsubseteq [x,y]_{\rankpp}$.
	This, together with \eqref{eq:int_a}, $\mathord{\rankTone} \subseteq \mathord{\rank}$ and $y \rank x$,
	implies that $y,x$ are $\rank$-adjacent, whence
	\begin{equation}
		[y,z]_{\mathord{\rank}}
		= \{y\} \union [x,z]_{\mathord{\rank}}
		= \{y\} \union [x,z]_{\mathord{\rankTone}}
		= [x,y]_{\mathord{\rankpp}} .
		\label{eq:int_b}
	\end{equation}
	It follows that $X \union \{x\} \subseteq [y,z]_{\mathord{\rank}}$.

	Define $X' \coloneqq X \union \{x\}$, and label its elements $X' \equiv \{a_1,\dots,a_K\}$ so that $a_1 \rank \cdots \rank a_K$.
	Similarly label $[y,z]_{\mathord{\rank}} \setminus X' \equiv \{b_1,\dots,b_L\}$ so that $b_1 \rank \cdots \rank b_L$.%
		\footnote{$[y,z]_{\mathord{\rank}} \setminus X'$ is non-empty since $y$ belongs to it.}
	Let $\rankp$ be the ranking that
	\begin{enumerate}[label=(\Roman*)]

		\item \label{item:unimp_noerror:rankp:i}
		agrees with $\rank$ on any pair $\{w,w'\} \nsubseteq [y,z]_{\mathord{\rank}}$, and

		\item \label{item:unimp_noerror:rankp:ii}
		ranks the elements of $[y,z]_{\mathord{\rank}}$ as
		$a_1 \rankp \cdots \rankp a_K \rankp b_1 \rankp \cdots \rankp b_L .$

	\end{enumerate}

	We have now constructed a majority will $\permp$ and a ranking $\rankp$.
	Recall that $\rank$ is the outcome of $\strat$ under $\permp$.
	It remains to show that
	\begin{enumerate}[leftmargin=1.2cm]

		\item[\namedlabel{item:pf_regret-free_noerror:dist}{(dist)}]
		$\rankp$ is distinct from $\rank$,

		\item[\namedlabel{item:pf_regret-free_noerror:alig}{(alig)}]
		$\rankp$ is more aligned with $\pref$ than $\rank$, and

		\item[\namedlabel{item:pf_regret-free_noerror:feas}{(feas)}]
		$\rankp$ is $\permp$-reachable.

	\end{enumerate}

	For \ref{item:pf_regret-free_noerror:dist},
	observe that since $x \in X'$ and $y \in [y,z]_{\mathord{\rank}} \setminus X'$, we have $x = a_k \rankp b_\ell = y$ for some $k$ and $\ell$.%
		\footnote{In fact, $k=\ell=1$ since $y$ is $\rank$-highest in $[y,z]_{\mathord{\rank}}$ and (recall) $y,x$ are $\rank$-adjacent.}
	Since $y \rank x$, it follows that $\mathord{\rankp} \neq \mathord{\rank}$.

	For \ref{item:pf_regret-free_noerror:alig}, fix a pair $w,w' \in \mathcal{X}$ with $w \rank w'$ and $w' \rankp w$; we must show that $w' \pref w$.
	By definition of $\rankp$, it must be that $w' \in X' = X \union \{x\}$ and that $w \in [y,z]_{\mathord{\rank}} \setminus X'$.
	Thus $w' \pref y$ (by $x \pref y$ and the definition of $X$) and either $y=w$ or $y \pref w$, whence $w' \pref w$ by transitivity of $\pref$.

	It remains to establish \ref{item:pf_regret-free_noerror:feas}.
	To this end (recalling \Cref{observation:reachability_adjacency} in \cref{app:extra:reachability_adjacency}, \cpageref{observation:reachability_adjacency}), fix an $\rankp$-adjacent pair $w,w' \in \mathcal{X}$ with $w' \rankp w$;
	we must show that $w' \permp w$.
	There are two cases.

	\emph{Sub-case (2)(a): $\{w,w'\} \nsubseteq [y,z]_{\mathord{\rank}}$.}
	Then $w' \rank w$ by part \ref{item:unimp_noerror:rankp:i} of the definition of $\rankp$.
	Since $\rank$ agrees with $\rankpp$ on any pair $\{z',z''\} \nsubseteq [y,z]_{\mathord{\rank}}$, it follows that $w' \rankpp w$.
	It therefore suffices to show that $\permp$ agrees with $\rankpp$ on $\{w,w'\}$.
	Recalling the definition \ref{item:unimp_noerror:permp:i}--\ref{item:unimp_noerror:permp:iii} of $\permp$,
	\begin{itemize}

		\item If $\{w,w'\} \centernot{\ni} y$, then $\permp$ agrees with $\rankpp$ on $\{w,w'\}$ by \ref{item:unimp_noerror:permp:iii}.

		\item If $w' = y \in [y,z]_{\mathord{\rank}}$, then $w \notin [y,z]_{\mathord{\rank}} \supseteq [x,z]_{\mathord{\rankTone}} \supseteq X$ by \eqref{eq:int_b}, so neither \ref{item:unimp_noerror:permp:i} nor \ref{item:unimp_noerror:permp:ii} applies to the pair $\{w,y\}=\{w,w'\}$.
		Thus by \ref{item:unimp_noerror:permp:iii}, $\permp$ agrees with $\rankpp$ on $\{w,w'\}$.

		\item If $w = y \in [y,z]_{\mathord{\rank}}$, then $w' \notin [y,z]_{\mathord{\rank}} \supseteq [x,z]_{\mathord{\rankTone}} \supseteq X$ by \eqref{eq:int_b}, so neither \ref{item:unimp_noerror:permp:i} nor \ref{item:unimp_noerror:permp:ii} applies to the pair $\{y,w'\}=\{w,w'\}$.
		Thus by \ref{item:unimp_noerror:permp:iii}, $\permp$ agrees with $\rankpp$ on $\{w,w'\}$.

	\end{itemize}

	\emph{Sub-case (2)(b): $\{w,w'\} \subseteq [y,z]_{\mathord{\rank}}$.}
	Recall the definition \ref{item:unimp_noerror:permp:i}--\ref{item:unimp_noerror:permp:iii} of $\permp$.
	Recall also part \ref{item:unimp_noerror:rankp:ii} of the definition of $\rankp$.
	Observe that $y = b_1$ since $y$ is $\rank$-highest in $[y,z]_{\mathord{\rank}}$, and $y \notin X' = X \union \{x\}$.
	Furthermore, $x = a_1$ since $y,x$ are $\rank$-adjacent (recall \eqref{eq:int_b}) and $x \in X$.
	Finally, remark that $K \geq 2$ since $z \in X' = X \union \{x\}$ and $z \neq x$.

	Suppose first that $w' = y = b_1$.
	Then since $w',w$ are $\rankp$-adjacent with $w' \rankp w$, we have $w = b_2 \notin X' \supseteq X$.
	Thus part \ref{item:unimp_noerror:permp:i} does not apply to the pair $\{w,y\} = \{w,w'\}$.
	So by \eqref{eq:int_b}, part \ref{item:unimp_noerror:permp:ii} applies, yielding $w' = y \permp w$.

	Suppose instead that $w = y = b_1$.
	Then since $w',w$ are $\rankp$-adjacent with $w' \rankp w$, we have $w' = a_K \in X' = X \union \{x\}$.
	Since $x=a_1$ and $K \geq 2$, we have $w' \neq x$.
	Thus $w' \in X$, so that \ref{item:unimp_noerror:permp:i} applies to the pair $\{y,w'\} = \{w,w'\}$, yielding $w' \permp y = w$.

	Finally, suppose that $\{w,w'\} \centernot{\ni} y$.
	Then $\permp$ and $\rankpp$ agree on the pair $\{w,w'\}$ by \ref{item:unimp_noerror:permp:iii}, so it suffices to show that $w' \rankpp w$.
	Since $b_1 = y \notin \{w,w'\}$, we have either $\{w,w'\} \subseteq X'$ or $\{w,w'\} \subseteq [y,z]_{\mathord{\rank}} \setminus X'$.
	Thus $\rank$ and $\rankp$ agree on $\{w,w'\}$ by part \ref{item:unimp_noerror:rankp:ii} of the definition of $\rankp$, so that $w' \rank w$.

	Now label $[w',w]_{\mathord{\rank}} \equiv \{z_1,\dots,z_J\}$ so that $z_1 \rank \cdots \rank z_J$.
	Since $\rank$ is $\permp$-reachable, we have $z_1 \permp \cdots \permp z_J$ by \Cref{observation:reachability_adjacency} (\cref{app:extra:reachability_adjacency}, \cpageref{observation:reachability_adjacency}).
	By the hypotheses $w' \in [y,z]_{\mathord{\rank}}$ and $w' \neq y$, we must have $y \rank w'$ and thus $y \notin [w',w]_{\mathord{\rank}}$.
	This together with the fact that $z_j \permp z_{j+1}$ for each $j<J$ implies, via part \ref{item:unimp_noerror:permp:iii} of the definition of $\permp$, that $z_j \rankpp z_{j+1}$ for each $j<J$.
	It follows by transitivity of $\rankpp$ that $w' = z_1 \rankpp z_J = w$, as desired.
\end{proof}

%%%%%%%%%%%%%%%%%%%%%%
\subsection{Proof of \texorpdfstring{\Cref{theorem:is_lexicographic}}{Theorem \ref{theorem:is_lexicographic}} (§\ref{sec:charac:is2}, p. \pageref{theorem:is_lexicographic})}
\label{app:pf_proposition:is2}
%%%%%%%%%%%%%%%%%%%%%%

In this appendix, we first prove \Cref{theorem:is_lexicographic} using two lemmata (§\ref{app:pf_proposition:is2:overview}), and then prove these lemmata (§\ref{app:pf_proposition:is2:prelim}--§\ref{app:pf_proposition:is2:pf_is_oe_charac}).
Throughout, we label the alternatives $\mathcal{X} \equiv \{1,\dots,n\}$ so that $1 \pref \cdots \pref n$.

%%%%%%%%%%%%%%%%%%%%%%%%%%%%%%%%%%%
\subsubsection{Proof using lemmata}
\label{app:pf_proposition:is2:overview}

\begin{definition}
	\label{definition:Sigmak}
	Let $\Sigma_0$ be the set of all strategies.
	For every integer $k \in \{1,\dots,n-2\}$, let $\Sigma_k$ be the set of all strategies $\strat$ with the following property:
	for any majority will $\perm$ and alternative $j \leq k$, labelling the alternatives $\{j+1,\dots,n\} \equiv \{x_{j+1},\dots,x_n\}$ as
	\begin{equation*}
		x_{j+1} \rankfn{\strat}{\perm}
		\cdots
		\rankfn{\strat}{\perm} x_n ,
	\end{equation*}
	the first vote involving $j$ that $\strat$ offers under $\perm$ is on $\{j,x_{j+1}\}$;
	if $j$ loses, then a second vote involving $j$ is offered, namely on $\{j,x_{j+2}\}$;
	if $j$ loses again, then a third vote involving $j$ is offered, namely on $\{j,x_{j+3}\}$;
	and so on.
\end{definition}

The definition of $\Sigma_{n-2}$ describes a natural generalisation of insertion sort:
for each alternative $j$, given how the $\pref$-worse alternatives $\{x_{j+1},\dots,x_n\}$ are ultimately ranked,
the same votes involving $j$ are offered, in the same order, though not necessarily in adjacent periods.%
	\footnote{Note a subtlety in the definition: although we label $\{x_{j+1},\dots,x_n\}$ according to the outcome $\rankfn{\strat}{\perm}$ of $\strat$ under $\perm$, a strategy in $\Sigma_k$ need not (as insertion sort would) have totally ranked $\{j+1,\dots,n\}$ before offering votes involving $j$.}
Each $\Sigma_k$ for $k<n-2$ is defined by the same property restricted to those alternatives $j$ that are $\pref$-better than or equal to $k$,
so that $\Sigma_0 \supseteq \Sigma_1 \supseteq \cdots \supseteq \Sigma_{n-2}$.

\begin{lemma}
	\label{lemma:is_oe_charac}
	A strategy is outcome-equivalent to insertion sort iff it belongs to $\Sigma_{n-2}$.
\end{lemma}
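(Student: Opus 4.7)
The strategy is to exploit the characterization noted in the excerpt immediately after \Cref{definition:outcome_equivalence}: two strategies are outcome-equivalent iff they offer the same (unordered) set of votes under every tournament. Under this reduction, the lemma becomes the claim that $\strat \in \Sigma_{n-2}$ iff $\strat$ offers the same set of votes as insertion sort under every $\perm$.

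For the ($\Leftarrow$) direction, I would proceed by strong induction on $n$, with trivial base cases $n \leq 2$. In the inductive step, fix $\strat \in \Sigma_{n-2}$ on $\mathcal{X} = \{1,\dots,n\}$ and a tournament $\perm$. First, I would show that $\rankfn{\strat}{\perm}|_{\{2,\dots,n\}}$ coincides with the outcome of insertion sort on $\{2,\dots,n\}$ under $\perm|_{\{2,\dots,n\}}$. The key observation is that the conditions of $\Sigma_{n-2}$ for $j \in \{2,\dots,n-2\}$ translate, when viewed on the sub-problem $\{2,\dots,n\}$, into the conditions for $\Sigma_{(n-1)-2}$; invoking the inductive hypothesis then yields the desired match on the sub-problem. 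Second, I would apply $\Sigma_{n-2}$ with $j=1$ to conclude that $\strat$ offers the $1$-votes $\{1,x_2\},\{1,x_3\},\dots$ in precisely the same order and with the same stopping rule as insertion sort. Together, these two steps imply that $\strat$ and insertion sort offer the same set of votes under $\perm$.

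For the ($\Rightarrow$) direction, suppose $\strat$ is outcome-equivalent to insertion sort, so they offer the same set of votes under every $\perm$. I would argue by contradiction that for every $j \leq n-2$ and $\perm$, the $j$-votes are offered by $\strat$ in insertion sort's prescribed order. If not, there is a first discrepancy under some $\perm$, where $\strat$ offers $\{j,x_{j+k}\}$ before $\{j,x_{j+i}\}$ for some $i<k$. I would then construct a tournament $\perm'$ that differs from $\perm$ by flipping a single carefully chosen pair so as to induce a divergence at the discrepant step, and track the resulting histories to show that $\strat$ under $\perm'$ must offer a different set of votes than insertion sort under $\perm'$, contradicting the same-votes characterization of outcome-equivalence.

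The main obstacle is the sub-problem step in the ($\Leftarrow$) direction. Although the conditions of $\Sigma_{n-2}$ for $j \geq 2$ look syntactically like a $\Sigma_{(n-1)-2}$ condition on $\{2,\dots,n\}$, the interleaving by $\strat$ of $1$-votes with votes among $\{2,\dots,n\}$ means that transitivity of the form $x \rank 1 \rank y$ could in principle rank pairs $x,y \in \{2,\dots,n\}$ that are never voted on directly, potentially disrupting the reduction to the sub-problem. The cleanest way around this is probably to introduce an auxiliary strategy on $\{2,\dots,n\}$ that replays only the non-$1$-votes of $\strat$ (in their original order) and to argue, using $\Sigma_{n-2}$ for $j=1$, that the $1$-vote-induced transitivity relations on $\{2,\dots,n\}$ coincide with rankings already established by these non-$1$-votes, so no spurious orderings arise and the inductive hypothesis applies.
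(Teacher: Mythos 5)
There is a genuine gap in your necessity direction. The condition defining $\Sigma_{n-2}$ (\Cref{definition:Sigmak}) constrains \emph{all} votes involving $j$ up to $j$'s first win, not only votes against $\pref$-worse alternatives. Under your same-votes reduction, a violation therefore need not take the form you assume (``$\strat$ offers $\{j,x_{j+k}\}$ before $\{j,x_{j+i}\}$''): the first discrepant vote involving $j$ can be a vote $\{i,j\}$ with $i \pref j$, and such votes do belong to insertion sort's vote set under many tournaments (they arise when $i$ is inserted). Your flip recipe can be completed in the case you do treat -- flip the prescribed-but-skipped pair $\{j,x_{j+i}\}$ to a win for $j$; then insertion sort under the flipped tournament stops inserting $j$ before ever reaching $x_{j+k}$, while $\strat$, whose history is unchanged up to the discrepancy, still offers $\{j,x_{j+k}\}$, so the vote sets differ -- but the same flip says nothing decisive in the $\{i,j\}$ case: whether insertion sort offers $\{i,j\}$ under the flipped tournament is unclear, and $\strat$'s continuation is unconstrained. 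This is exactly the configuration the paper disposes of via \hyperref[lemma:worst-case]{Lemma $\natural$} and the first-order-stochastic-dominance bound in the proof of \Cref{lemma:best_Sigmak} (its necessity case \ref{item:pf_best_Sigmak_necessity:b}), crucially using that $\strat\in\Sigma_{k-1}$ for the critical $k$; your sketch neither proves an analogue of that lemma nor organises the induction over $j$ so that the conditions for more-preferred alternatives are available when needed.

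Your sufficiency direction also rests on two unproved supports. First, the auxiliary ``replay the non-$1$-votes'' object is not obviously a strategy on $\{2,\dots,n\}$: for a general member of $\Sigma_{n-2}$ the identity and order of the non-$1$-votes may depend on the outcomes of the interleaved $1$-votes, so the replay is not a function of sub-histories, and membership of the auxiliary object in $\Sigma_{(n-1)-2}$ has to be verified against every sub-tournament (equivalently, every extension to $\{1,\dots,n\}$) before the inductive hypothesis can be invoked. Second, the claim you flag -- that $1$-mediated transitivity never ranks a pair within $\{2,\dots,n\}$ that is not already ranked -- is true, but it is the crux rather than a routine patch: proving it is essentially proving the paper's \Cref{lemma:is2:prelim}. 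For comparison, the paper avoids the sub-problem reduction entirely: sufficiency is a downward induction on $k$ showing the two outcomes agree on $\{k,\dots,n\}$, with \Cref{lemma:is2:prelim} needed only to force the first vote to be $\{n-1,n\}$; necessity is obtained by piggybacking on the counting bound already established for \Cref{theorem:is_lexicographic}, rather than by a direct flip construction.
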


\begin{lemma}
	\label{lemma:best_Sigmak}
	Given $k \in \{1,\dots,n-2\}$, a strategy $\strat \in \Sigma_{k-1}$ is best for $k$ among $\Sigma_{k-1}$ iff it belongs to $\Sigma_k$.
\end{lemma}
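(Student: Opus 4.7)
My plan is to prove the lemma by (a) establishing an upper bound $N_k^\strat(\perm) \leq n - k - m^*(\strat, \perm) + 1$ valid for every $\strat \in \Sigma_{k-1}$, where $m^*$ is an ``insertion-sort slot'' defined below; (b) showing that strategies in $\Sigma_k$ saturate this bound; and (c) constructing, for each pair $\strat \in \Sigma_k$ and $\stratp \in \Sigma_{k-1}$, a bijection on tournaments that converts the pointwise bound into FOSD.

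\textbf{Structural step.} I would first show that for any $\strat \in \Sigma_{k-1}$ and tournament $\perm$, the restriction of $\rankfn{\strat}{\perm}$ to $\{k, k+1, \ldots, n\}$ is $\perm|_{\{k,\ldots,n\}}$-feasible. The reason is that the insertion-sort-for-$j$ pattern (for each $j \leq k-1$) adds only pairs containing $j$; the transitive consequences that pass through $j$ (chains $x_{j+a} \rank j \rank x_{j+b}$ with $a<m \leq b$) merely re-confirm the existing order of $\{j+1, \ldots, n\}$. Iterating over $j = 1, \ldots, k-1$, the relative order on $\{k, \ldots, n\}$ is determined solely by $\strat$'s votes within that subset, and hence is feasible under $\perm|_{\{k,\ldots,n\}}$ by \Cref{observation:feasibility_adjacency}.

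\textbf{Position bound and saturation.} Label the induced sub-ranking of $\{k+1, \ldots, n\}$ as $x_{k+1} \rank \cdots \rank x_n$, and let $q$ denote the position of $k$ in the sub-ranking of $\{k, \ldots, n\}$. Since the latter is $\perm|_{\{k,\ldots,n\}}$-feasible, the $\rank$-adjacency of $k$ and $x_{k+q}$ forces $k \perm x_{k+q}$, so $q \geq m^*(\strat, \perm) \coloneqq \min\{m \in \{1,\dots,n-k\} : k \perm x_{k+m}\}$ (with $m^* \coloneqq n-k+1$ if no such $m$). Thus $N_k^\strat(\perm) = n - k - q + 1 \leq n - k - m^*(\strat, \perm) + 1$. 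For $\strat \in \Sigma_k$, the insertion-sort-for-$k$ constraint forces $q = m^*$: $k$ loses to $x_{k+1}, \ldots, x_{k+m^*-1}$ in the prescribed votes, wins against $x_{k+m^*}$, and transitivity with $x_{k+m^*} \rank \cdots \rank x_n$ places $k$ at exactly slot $m^*$. Hence $N_k^\strat(\perm) = n - k - m^*(\strat, \perm) + 1$ for every $\strat \in \Sigma_k$.

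\textbf{Bijection and conclusion.} For the ``if'' direction, fix $\strat \in \Sigma_k$ and $\stratp \in \Sigma_{k-1}$. I would construct a bijection $\phi: \mathcal{W} \to \mathcal{W}$ such that $\strat$'s sub-ranking of $\{k+1, \ldots, n\}$ under $\phi(\perm)$ coincides with $\stratp$'s sub-ranking under $\perm$, while the relations $\{k \perm y\}_{y \in \{k+1, \ldots, n\}}$ are preserved. Then $m^*(\strat, \phi(\perm)) = m^*(\stratp, \perm)$, so $N_k^\strat(\phi(\perm)) = n - k - m^*(\stratp, \perm) + 1 \geq N_k^{\stratp}(\perm)$; FOSD follows from $|\{\perm : N_k^{\stratp}(\perm) \geq \ell\}| \leq |\{\perm : N_k^\strat(\phi(\perm)) \geq \ell\}| = |\{\permp : N_k^\strat(\permp) \geq \ell\}|$. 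For the ``only if'' direction, if $\stratp \in \Sigma_{k-1} \setminus \Sigma_k$, one locates a $\perm$ under which some vote involving $k$ violates the insertion-sort order, forcing $q(\stratp, \perm) > m^*(\stratp, \perm)$ and hence $N_k^{\stratp}(\perm) < n - k - m^*(\stratp, \perm) + 1 = N_k^\strat(\phi(\perm))$; this $\perm$ witnesses a strict inequality at $\ell = n-k-m^*(\stratp, \perm)+1$, breaking the FOSD tie. The main obstacle is the construction of $\phi$: it must simultaneously match sub-rankings across the two strategies and preserve $k$'s relations, while remaining invertible. I would build $\phi$ by pairing the $\strat$-fibre and $\stratp$-fibre over each $\perm|_{\{k+1,\ldots,n\}}$-feasible sub-ranking of $\{k+1, \ldots, n\}$---fibres that the structural step (applied recursively within $\{k+1, \ldots, n\}$) guarantees to be of equal cardinality---and then tensoring with the identity on the relations involving $k$.
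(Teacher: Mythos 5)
Your steps (a)–(b) are essentially sound, but note that the ``structural step'' is doing more work than your one-line justification acknowledges: to show that, for $\strat\in\Sigma_{k-1}$, impositions of transitivity through alternatives $j\leq k-1$ never create new pairs inside $\{k,\dots,n\}$, one needs an argument of the type of \Cref{lemma:is2:prelim} (applied to every $j\leq k-1$, and also covering votes between \emph{two} alternatives below $k$), not merely the remark about chains through a single $j$. The fatal problem, however, is step (c). The fibre-pairing bijection does not exist in general. First, the structural step cannot be ``applied recursively within $\{k+1,\dots,n\}$'' for a strategy in $\Sigma_{k-1}$: membership of $\Sigma_{k-1}$ places no restriction on votes involving $k$ itself, so transitivity through $k$ can decide pairs inside $\{k+1,\dots,n\}$; the sub-ranking of $\{k+1,\dots,n\}$ is then neither $\perm|_{\{k+1,\dots,n\}}$-feasible nor a function of $\perm|_{\{k+1,\dots,n\}}$. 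Second, the equal-cardinality claim for the fibres is false. Take $n=3$, $k=1$ (so $\Sigma_{k-1}=\Sigma_0$ is all strategies), and let $\stratp$ offer $\{1,2\}$, then $\{1,3\}$, then $\{2,3\}$ if still unranked. Fix the edges at $1$ to be ``$2$ beats $1$'' and ``$1$ beats $3$''. Then $\stratp$ ranks $2$ above $3$ by transitivity ($2\rank 1\rank 3$) for \emph{both} orientations of the edge $\{2,3\}$, whereas insertion sort's sub-ranking of $\{2,3\}$ follows that edge; conditional on these $k$-edges the fibres have sizes $2$ and $0$ for $\stratp$ versus $1$ and $1$ for insertion sort, so no bijection preserving the relations $\{1\perm y\}$ can match sub-rankings. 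With $\phi$ gone, both your FOSD conclusion and your ``only if'' argument (which routes the strictness through $\phi$) collapse; and retreating to ``\emph{some} bijection with the pointwise domination property'' is circular, since the existence of such a bijection is essentially equivalent to the FOSD inequality you are trying to prove.

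For comparison, the paper avoids matching tournaments across strategies altogether: it compares every $\strat\in\Sigma_{k-1}$ to a strategy-independent numerical bound. Defining $F_\ell$ as the set of tournaments under which $k$ loses its first $\ell-1$ votes and wins the $\ell^{\text{th}}$, it shows $\abs{F_\ell}/\abs{\mathcal{W}}\leq 2^{-\ell}$ with equality for $\strat\in\Sigma_k$, and uses \hyperref[lemma:worst-case]{Lemma $\natural$} (if $N_k^\strat(\mathord{\perm})>0$, then before its first win $k$ is pitted only against $\pref$-worse alternatives) to obtain $\{N_k^\strat\geq m\}\subseteq F_1\union\cdots\union F_{n-k-m+1}$, hence $\Pr(N_k^\strat\geq m)\leq\sum_{\ell=1}^{n-k-m+1}2^{-\ell}$, attained exactly on $\Sigma_k$; necessity is a case analysis with explicit witness tournaments making one of these inequalities strict. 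If you wish to salvage your $m^*$-based bound, you would need a universal-bound argument of this kind (e.g.\ showing that the distribution of $m^*$ under a uniformly drawn tournament is stochastically minimal for $\Sigma_k$ strategies), not a fibre-matching bijection.
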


\begin{proof}[Proof of \Cref{theorem:is_lexicographic}]
	Let $B_0 = \Sigma_0$ be the set of all strategies,
	and for each $k \in \{1,\dots,n-2\}$, let $B_k$ be the set of strategies in $B_{k-1}$ that are best for $k$ among $B_{k-1}$.%
		\footnote{Let $B_k \coloneqq \varnothing$ if $B_{k-1}$ is empty. (It is in fact non-empty, but we haven't proved it yet.)}
	A lexicographic strategy is precisely one that lives in $B_{n-2}$.

	By \Cref{lemma:is_oe_charac}, a strategy is outcome-equivalent to insertion sort iff it lives in $\Sigma_{n-2}$; so what must be shown is that $\Sigma_{n-2} = B_{n-2}$.
	We shall prove the stronger claim that $\Sigma_k = B_k$ for each $k \in \{0,\dots,n-2\}$ by (weak) induction on $k$.
	The base case $k=0$ holds by definition of $\Sigma_0$ and $B_0$.
	For the induction step, suppose that $\Sigma_{k-1} = B_{k-1}$; then $\Sigma_k = B_k$ by \Cref{lemma:best_Sigmak}.
\end{proof}

The remainder of this appendix is devoted to proving \Cref{lemma:is_oe_charac,lemma:best_Sigmak}.
We begin in §\ref{app:pf_proposition:is2:prelim} with two preliminary results, then prove \Cref{lemma:best_Sigmak} in §\ref{app:pf_proposition:is2:lemma:best_Sigmak_pf} and \Cref{lemma:is_oe_charac} in §\ref{app:pf_proposition:is2:pf_is_oe_charac}.

%%%%%%%%%%%%%%%%%%%%%%%%%%%%%%%%%%%
\subsubsection{Preliminary results}
\label{app:pf_proposition:is2:prelim}

The following lemma is used in the proof of \Cref{lemma:is_oe_charac}.

\begin{lemma}
	\label{lemma:is2:prelim}
	Given a $k \in \{1,\dots,n-2\}$, consider a strategy $\strat \in \Sigma_k$, a majority will $\perm$, an alternative $j \leq k$ and some $m \in \{1,\dots,n-j\}$.
	Suppose that under $\perm$,
	$\strat$ offers at least $m$ votes involving $j$,
	and that $j$ loses the first $m-1$ of these.
	Let $\ell$ be the $m^{\text{th}}$ alternative pitted against $j$, and let $\rank$ be the proto-ranking associated with the history after which the vote on $\{j,\ell\}$ occurs.
	Then $\ell \rank i$ for any $i \neq \ell$ such that $i > j$ and $j$ did not lose against $i$ prior to the vote on $\{j,\ell\}$.
\end{lemma}

\begin{proof}
	Suppose toward a contradiction that $\ell \nrank i$ for some $i \neq \ell$ with $i > j$ such that $j$ did not lose to $i$ prior to the vote on $\{j,\ell\}$.
	Then there exists a ranking $\permp$ such that $\mathord{\rank} \subseteq \mathord{\permp}$ and $i \permp \ell$.
	(Note that a ranking is precisely a transitive majority will.)
	Being a ranking, $\permp$ is the only $\permp$-reachable ranking (by \Cref{observation:reachability_adjacency} in \cref{app:extra:reachability_adjacency}, \cpageref{observation:reachability_adjacency}), so that $\mathord{\rankfn{\strat}{\permp}} = \mathord{\permp}$, and in particular $i \rankfn{\strat}{\permp} \ell$.

	Let $T$ be the period in which $\strat$ offers $\{j,\ell\}$ (i.e. the $m^{\text{th}}$ vote involving $j$) under $\perm$.
	Since $\mathord{\rank} \subseteq \mathord{\permp}$,
	the history of length $T-1$ generated by $\strat$ and $\permp$ is the same as that generated by $\strat$ and $\perm$.
	So in particular, under $\permp$,
	$\strat$ offers $\{j,\ell\}$ in period $T$,
	$\strat$ does not offer $\{j,i\}$ in an earlier period,
	and $j$ does not win a vote in any earlier period.
	Since $\strat \in \Sigma_k$, it follows that $\ell \rankfn{\strat}{\permp} i$, a contradiction.
\end{proof}

The proof of \Cref{lemma:best_Sigmak} relies on the following.

\begin{namedthm}[Lemma $\boldsymbol{\natural}$.]
	\label{lemma:worst-case}
	Given $k \in \{2,\dots,n-2\}$,
	if $N_k^\strat(\mathord{\perm}) > 0$ for some $\strat \in \Sigma_{k-1}$ and majority will $\perm$,
	then prior to winning its first vote, $k$ is pitted only against $\pref$-worse alternatives.
\end{namedthm}

\begin{proof}
	We prove the contra-positive.
	Let $k \in \{2,\dots,n-2\}$, $\strat \in \Sigma_{k-1}$ and a majority will $\perm$ be such that $\strat$ pits $k$ against some $j < k$ under $\perm$ and $k$ wins no vote before the one against $j$; we must show that $N_k^\strat(\mathord{\perm}) = 0$.
	Let $\rank$ be the proto-ranking associated with the history after which the vote on $\{j,k\}$ occurs.
	It suffices to show that $\ell \rank k$ for all $\ell > k$.

	\begin{namedthm}[Claim.]
		\label{claim:a}
		$j$ wins no vote prior to the one against $k$.
	\end{namedthm}

	\begin{proof}[Proof of the \protect{\hyperref[claim:a]{claim}}]%
		\renewcommand{\qedsymbol}{$\square$}
		Suppose toward a contradiction that the first alternative against which $j$ wins is $\ell \neq k$.
		Then $j \rank \ell$.
		Since $\strat \in \Sigma_{k-1}$, the vote on $\{j,\ell\}$ is the $m^{\text{th}}$ involving $j$, for some $m < n-j$.
		It follows by \Cref{lemma:is2:prelim} (above) that $\ell \rank k$, which together with $j \rank \ell$ and the transitivity of $\rank$ yields $j \rank k$.
		On the other hand, since $\{j,k\}$ is offered after a history with proto-ranking $\rank$, we must have $j \nrank k \nrank j$.
		Contradiction!
	\end{proof}%
	\renewcommand{\qedsymbol}{$\blacksquare$}

	Fix an $\ell > k$; we shall show that $\ell \rank k$.
	Since $\strat \in \Sigma_{k-1}$, the vote on $\{j,k\}$ is the $m^{\text{th}}$ involving $j$, for some $m < n-j$.
	It must be that $\strat$ offers $\{j,\ell\}$ prior to $\{j,k\}$, since otherwise \Cref{lemma:is2:prelim} would yield $k \rank \ell$, contradicting the hypothesis that $k$ wins no votes prior to the one against $j$.
	Since $j$ wins no vote prior to the one against $k$ and $m < n-j$, \Cref{lemma:is2:prelim} yields $\ell \rank k$, as desired.
\end{proof}

%%%%%%%%%%%%%%%%%%%%%%%%%%%%%%%%%%%
\subsubsection{Proof of \texorpdfstring{\Cref{lemma:best_Sigmak}}{Lemma \ref{lemma:best_Sigmak}} (§\ref{app:pf_proposition:is2:overview}, p. \pageref{lemma:best_Sigmak})}
\label{app:pf_proposition:is2:lemma:best_Sigmak_pf}

Let $\mathcal{W}$ denote the set of all majority wills (total and asymmetric relations) on $\mathcal{X}$.
We shall use the probabilistic notation $\Pr(E|F) \coloneqq \abs*{E \intersect F} / \abs*{F}$ for the fraction of majority wills in $F \subseteq \mathcal{W}$ that belong to $E \intersect F \subseteq F$, and similarly $\Pr(E) \coloneqq \Pr(E|\mathcal{W})$.
This corresponds the thought experiment in which the majority will $\perm$ is drawn uniformly at random from $\mathcal{W}$.

We must establish that membership of $\Sigma_k$ is necessary and sufficient for being best for $k$ among $\Sigma_{k-1}$.
We prove sufficiency and necessity in turn, making use of \hyperref[lemma:worst-case]{Lemma $\natural$} from the previous section.

\begin{proof}[Proof of sufficiency]
	Fix $k$, a strategy $\strat \in \Sigma_{k-1}$ and an $m \in \{1,\dots,n-k\}$.
	We shall derive an upper bound for $\Pr( N_k^\strat \geq m )$, then show that it is attained if $\strat \in \Sigma_k$.

	For each $\ell \in \{1,\dots,n-k\}$, let $F_\ell \subseteq \mathcal{W}$ be the set of majority wills under which $\strat$ offers at least $\ell$ votes involving alternative $k$,
	with alternative $k$ losing the first $\ell-1$ of these and winning the $\ell^{\text{th}}$.

	\begin{namedthm}[Claim.]
		\label{claim:pf_lemma:best_Sigmak}
		$\Pr( F_\ell ) \leq 1 / 2^\ell$ for each $\ell \in \{1,\dots,n-k\}$,
		with equality if $\strat \in \Sigma_k$.
	\end{namedthm}

	\begin{proof}[Proof of the \protect{\hyperref[claim:pf_lemma:best_Sigmak]{claim}}]%
		\renewcommand{\qedsymbol}{$\square$}
		A majority will $\perm$ lies in $F_\ell$ iff under $\strat$ and $\perm$,
		\begin{itemize}

			\item alternative $k$ loses its first vote (probability $1/2$),

			\item a second vote involving alternative $k$ occurs (probability $\leq 1$, with equality if $\strat \in \Sigma_k$) and $k$ loses (probability $1/2$),

			\dots

			\item an $(\ell-1)^{\text{th}}$ vote involving alternative $k$ occurs (probability $\leq 1$, with equality if $\strat \in \Sigma_k$) and $k$ again loses (probability $1/2$), and

			\item an $\ell^{\text{th}}$ vote involving alternative $k$ occurs (probability $\leq 1$, with equality if $\strat \in \Sigma_k$) and $k$ wins (probability $1/2$).

		\end{itemize}
		Thus
		\begin{equation*}
			\Pr(F_\ell)
			\leq \frac{1}{2}
			\times \left( 1 \times \frac{1}{2} \right)^{\ell-1}
			= \frac{1}{2^\ell} ,
			\quad \text{with equality if $\strat \in \Sigma_k$.}
			\qedhere
		\end{equation*}
	\end{proof}%
	\renewcommand{\qedsymbol}{$\blacksquare$}

	By \hyperref[lemma:worst-case]{Lemma $\natural$} (§\ref{app:pf_proposition:is2:prelim}, \cpageref{lemma:worst-case}),
	if $N_k^\strat(\mathord{\perm}) > 0$, then prior to winning its first vote, $k$ was only pitted against $\pref$-worse alternatives.
	There are $n-k$ of these: $\{k+1,\dots,n\}$.
	Thus if $N_k^\strat(\mathord{\perm}) \geq m$ holds,
	then alternative $k$ cannot have lost strictly more than $n-m-k$ votes and must have won at least one---so in particular, $\mathord{\perm} \in F_1 \union \cdots \union F_{n-k-m+1}$.
	Thus
	\begin{align}
		\Pr( N_1^\strat \geq m )
		&= \sum_{\ell=1}^{n-k-m+1} \Pr(F_\ell) \Pr(N_1^\strat \geq m|F_\ell)
		\nonumber
		\\
		&\leq \sum_{\ell=1}^{n-k-m+1} \Pr(F_\ell)
		\tag{$\sharp$}
		\label{eq:best_Sigma1:i}
		\\
		&\leq \sum_{\ell=1}^{n-k-m+1} \frac{1}{ 2^\ell } ,
		\tag{$\flat$}
		\label{eq:best_Sigma1:ii}
	\end{align}
	where \eqref{eq:best_Sigma1:ii} holds by the \hyperref[claim:pf_lemma:best_Sigmak]{claim}.

	Now suppose that $\strat \in \Sigma_k$;
	we shall show that \eqref{eq:best_Sigma1:i} and \eqref{eq:best_Sigma1:ii} hold with equality, so that $\strat$ attains the bound.
	For \eqref{eq:best_Sigma1:ii}, this follows from the \hyperref[claim:pf_lemma:best_Sigmak]{claim}.
	For \eqref{eq:best_Sigma1:i}, fix an $\ell \in \{1,\dots,n-k-m+1\}$ and a $\mathord{\perm} \in F_\ell$; we must show that $N_k^\strat(\mathord{\perm}) \geq m$.
	Label $\{k+1,\dots,n\} \equiv \{x_{k+1},\dots,x_n\}$ so that
	\begin{equation*}
		x_{k+1} \rankfn{\strat}{\perm} \dots \rankfn{\strat}{\perm} x_n .
	\end{equation*}
	Since $\mathord{\perm} \in F_\ell$, we have by definition of $\Sigma_k$ that $k \rankfn{\strat}{\perm} x_{\ell+1}$.
	Thus $k \rankfn{\strat}{\perm} x_{\ell'}$ for each $\ell' \in \{\ell+1,\dots,n\}$,
	so that $N_k^\strat(\mathord{\perm}) \geq n-\ell \geq m$.
\end{proof}

\begin{proof}[Proof of necessity]
	Take a strategy $\strat$ in $\Sigma_{k-1} \setminus \Sigma_k$.
	Since $\strat$ belongs to $\Sigma_{k-1}$, it satisfies the inequalities \eqref{eq:best_Sigma1:i} and \eqref{eq:best_Sigma1:ii} in the sufficiency argument.
	Suppose that one or the other holds strictly for some $m \in \{1,\dots,n-k\}$, so that $\strat$ fails to attain the bound in the sufficiency proof.
	Since any $\stratp \in \Sigma_k$ attains the bound for every $m$ by the (just-proved) sufficiency part, it follows that $\strat$ is not best for $k$ among $\Sigma_{k-1}$.
	It therefore suffices to find an $m \in \{1,\dots,n-k\}$ such that either \eqref{eq:best_Sigma1:i} or \eqref{eq:best_Sigma1:ii} holds strictly.

	Since $\strat \in \Sigma_{k-1} \setminus \Sigma_k$,
	there is a majority will $\perm$ such that, labelling the alternatives $\{k+1,\dots,n\} \equiv \{x_{k+1},\dots,x_n\}$ so that
	\begin{equation*}
		x_{k+1} \rankfn{\strat}{\perm} \dots \rankfn{\strat}{\perm} x_n,
	\end{equation*}
	one of the following holds under $\perm$:
	\begin{enumerate}[label=(\alph*)]

		\item \label{item:pf_best_Sigmak_necessity:a}
		$\strat$ pits $k$ against some $j \neq x_{k+1}$ prior to pitting it against $x_{k+1}$.

		\item \label{item:pf_best_Sigmak_necessity:b}
		For some $\ell \in \{k+1,\dots,n-k\}$, $\strat$ offers at least $\ell$ votes involving $k$,
		the first $\ell-1$ of which are against $x_{k+1},\dots,x_\ell$ and are all lost by $k$,
		and the $\ell^{\text{th}}$ of which is against some $j \neq x_{\ell+1}$.

		\item \label{item:pf_best_Sigmak_necessity:c}
		For some $\ell \in \{k+1,\dots,n-k\}$, $\strat$ offers exactly $\ell-1$ votes involving $k$, against $x_{k+1},\dots,x_\ell$, each of which is lost by $k$.

	\end{enumerate}

	\emph{Case \ref{item:pf_best_Sigmak_necessity:a}.}
	We shall exhibit a majority will $\mathord{\permp} \in F_1$ such that $N_k^\strat(\mathord{\permp}) < n-k$, so that \eqref{eq:best_Sigma1:i} holds strictly for $m = n-k$.
	This is trivial if $N_k^\strat(\mathord{\perm})=0$ (let $\mathord{\permp} \coloneqq \mathord{\perm}$), so suppose that $N_k^\strat(\mathord{\perm})>0$.
	Let $j \neq x_{k+1}$ be the first alternative pitted against $k$ by $\strat$ under $\perm$, and let $\rank$ be the proto-ranking associated with the history after which this occurs.
	Clearly $j \nrank k \nrank x_{k+1}$.
	Since $\strat \in \Sigma_{k-1}$ and $N_k^\strat(\mathord{\perm})>0$, \hyperref[lemma:worst-case]{Lemma $\natural$} (§\ref{app:pf_proposition:is2:prelim}, \cpageref{lemma:worst-case}) implies that $j > k$.
	So $x_{k+1} \rankfn{\strat}{\perm} j$ (since $j \in \{k+1,\dots,n\} = \{x_{k+1},\dots,n\}$ and $j \neq x_{k+1}$), and thus $j \nrank x_{k+1}$.

	It follows that there is a ranking $\permp$ such that $\mathord{\rank} \subseteq \mathord{\permp}$ and $x_{k+1} \permp k \permp j$.%
		\footnote{To see why, observe that the transitive closure $\rankp$ of $\mathord{\rank} \union \{(x_{k+1},k)\}$ is a proto-ranking since $k \nrank x_{k+1}$.
		Since $j \neq x_{k+1}$ and $j \nrank x_{k+1}$,
		we have $j \nrankp k$ by \Cref{observation:tr_cl} (\cref{app:pf_charac:noerror_efficient}, \cpageref{observation:tr_cl}),
		and thus the transitive closure $\rankpp$ of $\mathord{\rankp} \union \{(k,j)\}$ is also a proto-ranking.
		Let $\permp$ be any ranking that contains $\rankpp$.}
	(Note that a ranking is precisely a transitive majority will.)
	Let $T$ be the period in which $\strat$ offers $\{j,k\}$ under $\perm$.
	Since $\mathord{\rank} \subseteq \mathord{\permp}$, the history of length $T-1$ generated by $\strat$ and $\permp$ is the same as that generated by $\strat$ and $\perm$, and thus $\{j,k\}$ is the first pair involving $k$ that $\strat$ offers under $\permp$.
	Thus $\mathord{\permp} \in F_1$ since $k \permp j$.
	Being a ranking, $\permp$ is the only $\permp$-reachable ranking (by \Cref{observation:reachability_adjacency} in \cref{app:extra:reachability_adjacency}, \cpageref{observation:reachability_adjacency}), so $\mathord{\rankfn{\strat}{\permp}} = \mathord{\permp}$.
	Thus $N_k^\strat(\mathord{\permp}) < n-k$ since $x_{k+1} \permp k$.

	\emph{Case \ref{item:pf_best_Sigmak_necessity:b}.}
	If $j > k$, then the case-\ref{item:pf_best_Sigmak_necessity:a} argument yields a $\mathord{\permp} \in F_\ell$ such that $N_k^\strat(\mathord{\permp}) < n-k-\ell+1$, so that \eqref{eq:best_Sigma1:i} holds strictly for $m = n-k-\ell+1$.
	Suppose instead that $j < k$.
	Let $\permp$ be the majority will that agrees with $\perm$ on every pair, except that $k \permp j$.
	Clearly $\mathord{\permp} \in F_\ell$.
	By (the contra-positive of) \hyperref[lemma:worst-case]{Lemma $\natural$}, we have $N_k^\strat(\mathord{\permp}) = 0$.
	Thus $\Pr(N_k^\strat \geq 1|F_\ell) < 1$, so that \eqref{eq:best_Sigma1:i} holds strictly for $m = 1$.

	\emph{Case \ref{item:pf_best_Sigmak_necessity:c}.}
	Let $E$ be the set of majority wills under which $\strat$ offers at least $\ell-1$ votes involving $k$, the first $\ell-1$ of which $k$ loses.
	Then $\Pr(E) > 0$, and the probability that $\strat$ offers at least $\ell$ votes conditional on $E$ is strictly less than 1.
	The argument for the \hyperref[claim:pf_lemma:best_Sigmak]{claim} in the sufficiency proof therefore yields $\Pr(F_\ell) < 1/2^\ell$,
	so that \eqref{eq:best_Sigma1:ii} holds strictly for e.g. $m = 1$.
\end{proof}

%%%%%%%%%%%%%%%%%%%%%%%%%%%%%%%%%%%
\subsubsection{Proof of \texorpdfstring{\Cref{lemma:is_oe_charac}}{Lemma \ref{lemma:is_oe_charac}} (§\ref{app:pf_proposition:is2:overview}, p. \pageref{lemma:is_oe_charac})}
\label{app:pf_proposition:is2:pf_is_oe_charac}

We must show that membership of $\Sigma_{n-2}$ is necessary and sufficient for outcome-equivalence to insertion sort.
For the sufficiency part, we shall make use of \Cref{lemma:is2:prelim} in §\ref{app:pf_proposition:is2:lemma:best_Sigmak_pf}.

\begin{proof}[Proof of sufficiency]
	Let $\strat \in \Sigma_{n-2}$, fix a majority will $\perm$,
	and let $\rank$ ($\rankp$) be the outcome of $\strat$ (of insertion sort) under $\perm$.
	We will show for each $k \in \{n-2,\dots,1\}$ that $\rank$ agrees with $\rankp$ on $\{k,\dots,n\}$, using induction on $k$.
	For the base case $k=n-2$, let $\{j,\ell\}$ be the first pair offered by $\strat$, where $j<\ell$; it suffices to show that $j = n-1$.
	Suppose to the contrary; then by \Cref{lemma:is2:prelim} (§\ref{app:pf_proposition:is2:lemma:best_Sigmak_pf}, \cpageref{lemma:is2:prelim}), prior to the first vote, $\ell$ is already ranked above every $i > j$ such that $i \neq \ell$, which is absurd.
	The induction step is immediate from the fact that $\strat \in \Sigma_{n-2}$.
\end{proof}

The proof of necessity refers to the argument for \Cref{lemma:best_Sigmak} in §\ref{app:pf_proposition:is2:lemma:best_Sigmak_pf} above.

\begin{proof}[Proof of necessity]
	Let $\strat \notin \Sigma_{n-2}$; we must show that it is not outcome-equivalent to insertion sort.
	Note that there is a (unique) $k \in \{1,\dots,n-2\}$ such that $\strat \in \Sigma_{k-1} \setminus \Sigma_k$.
	Recall the bound in the proof of the sufficiency part of \Cref{lemma:best_Sigmak} (§\ref{app:pf_proposition:is2:lemma:best_Sigmak_pf} above).
	Since insertion sort belongs to $\Sigma_k$, it attains the bound for every $m \in \{1,\dots,n-k\}$ by the \Cref{lemma:best_Sigmak} sufficiency argument, and thus all of its outcome-equivalents do.
	By the \Cref{lemma:best_Sigmak} necessity argument, $\strat$ fails to attain the bound for some $m \in \{1,\dots,n-k\}$, so is not among the outcome-equivalents of insertion sort.
\end{proof}

\crefalias{section}{supplsec}
\crefalias{subsection}{supplsec}
\crefalias{subsubsection}{supplsec}
%%%%%%%%%%%%%%%%%%%%%%
%%%%%%%%%%%%%%%%%%%%%%
\section*{Supplemental appendices}
\label{suppl}
\addcontentsline{toc}{section}{Supplemental appendices}
%%%%%%%%%%%%%%%%%%%%%%
%%%%%%%%%%%%%%%%%%%%%%

%%%%%%%%%%%%%%%%%%%%%%
\subsection{Proofs of \texorpdfstring{\Cref{proposition:regretfree_efficient_tight,proposition:regretfree_errors_tight}}{Propositions \ref{proposition:regretfree_efficient_tight} and \ref{proposition:regretfree_errors_tight}} (§\ref{sec:charac}, \texorpdfstring{\cpageref{proposition:regretfree_efficient_tight,proposition:regretfree_errors_tight}}{pp. \pageref{proposition:regretfree_efficient_tight} and \pageref{proposition:regretfree_errors_tight}})}
\label{suppl:pf_tightness}
%%%%%%%%%%%%%%%%%%%%%%

In this appendix, we establish tightness for the characterisations of regret-freeness in \Cref{theorem:regretfree_efficient,theorem:regretfree_errors}.
We begin in §\ref{suppl:pf_tightness:lemma} with a lemma, then use it to deduce \Cref{proposition:regretfree_errors_tight} (§\ref{suppl:pf_tightness:regretfree_errors}) and \Cref{proposition:regretfree_efficient_tight} (§\ref{suppl:pf_tightness:regretfree_efficient}).

%%%%%%%%%%%%%%%%%%%%%%
\subsubsection{A lemma}
\label{suppl:pf_tightness:lemma}
%%%%%%%%%%%%%%%%%%%%%%

\begin{definition}
	\label{definition:make_error}
	Given a proto-ranking $\rank$ and alternatives $x \pref y$ and $z \neq w$, say that \emph{$\{z,w\}$ makes $\{x,y\}$ an error at $\rank$} iff both $x \nrank y \nrank x$ and $z \nrank w \nrank z$, and one of the following holds:
	\begin{itemize}

		\item $x \pref z \pref y$, $y \nrank z \nrank x$, and $w \in \{x,y\}$.

		\item $z \pref y$, $x \rank z$ and $w = y$.

		\item $x \pref z$, $z \rank y$ and $w = x$.

	\end{itemize}
\end{definition}

If $\{z,w\}$ makes $\{x,y\}$ an error, then offering $\{x,y\}$ either misses an opportunity or takes a risk at $\rank$, and the chair `should' offer $\{z,w\}$ instead.%
	\footnote{This is heuristic, as offering $\{z,w\}$ might itself miss an opportunity or take a risk at $\rank$.}

Recall from \cref{app:pf_charac:noerror_efficient} (\cpageref{definition:missed_op}) the definition of a missed opportunity.

\begin{lemma}
	\label{lemma:tightness_lemma}
	Let $\rank$ be a proto-ranking, and let $A \subseteq \mathcal{X}^2$ be a non-empty set of pairs of distinct alternatives.
	Suppose that for any pair $\{x,y\} \in A$, there is a pair $\{z,w\} \in A$ that makes $\{x,y\}$ an error at $\rank$.
	Then $\rank$ contains a missed opportunity.
\end{lemma}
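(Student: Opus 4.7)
My plan is to argue by contradiction: assume $\rank$ contains no missed opportunity. Starting from any $\{x_0, y_0\} \in A$ and iterating the hypothesis yields an infinite chain $\{x_0, y_0\}, \{x_1, y_1\}, \ldots$ in $A$ in which $\{x_{i+1}, y_{i+1}\}$ makes $\{x_i, y_i\}$ an error for each $i$. Since $A \subseteq \mathcal{X}^2$ is finite, the chain eventually revisits a pair, so after relabelling I may assume a cycle $\{x_0, y_0\} \to \cdots \to \{x_n, y_n\} = \{x_0, y_0\}$ of length $n \geq 2$---a pair cannot make itself an error, as the strict inequalities and irreflexivity of $\rank$ in \cref{definition:make_error} rule out $\{z, w\} = \{x, y\}$.

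I would classify each step by which of the three cases of \cref{definition:make_error} it invokes, splitting the first (``opportunity-missing'') case according to whether $w = x_i$ or $w = y_i$. Each of the four sub-types has a distinctive signature: the second case preserves $y_{i+1} = y_i$ and yields $x_i \rank x_{i+1}$; the third case preserves $x_{i+1} = x_i$ and yields $y_{i+1} \rank y_i$; the two sub-types of the first case preserve one coordinate while strictly $\pref$-moving the other endpoint into the open $\pref$-interval between $x_i$ and $y_i$, and generate several $\nrank$-facts linking the new endpoint with the old pair.

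The cycle must contain at least one first-case step. Otherwise only the second and third cases occur: the $x$-coordinate then changes only at second-case steps (where $x_i \rank x_{i+1}$) and is held fixed in between, so chaining these relations around the cycle and applying transitivity of $\rank$ gives $x_0 \rank x_0$, contradicting irreflexivity---unless no second-case step occurs; symmetrically, no third-case step, leaving $n = 0$. A first-case step of the $w = y_i$ sub-type strictly $\pref$-decreases $x$, so for $x$ to return to $x_0$ around the cycle some second-case step must $\pref$-increase $x$: such a step gives $x_{i+1} \pref x_i$ together with $x_i \rank x_{i+1}$, a reversal of $\rank$ against $\pref$ on the pair $(x_{i+1}, x_i)$. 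Symmetrically, a first-case step of the $w = x_i$ sub-type forces a reversal $(y_i, y_{i+1})$ via some third-case step.

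The hard part is to convert this reversal into an actual missed opportunity. Under the standing assumption, any reversal $(a, b)$ with $a \pref b$ and $b \rank a$ has every $u \in (a, b)_{\pref}$ satisfying $b \rank u$ or $u \rank a$. I would trace the reversal back along the cycle to the first-case step that necessitated it, and use the $\nrank$-facts produced there---possibly extended along the cycle via the $\rank$-relations from intervening second- and third-case steps---to exhibit either a direct $\rank$-versus-$\nrank$ conflict on a single pair, or an intermediate alternative $u \in (a, b)_{\pref}$ that is unranked with both $a$ and $b$. The latter would be a missed opportunity in $\rank$, contradicting the standing assumption. The core of the argument is a careful bookkeeping of which $\nrank$-facts survive propagation along the cycle and how they interact with the $\rank$-relations from the second and third cases.
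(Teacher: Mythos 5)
Your set-up is sound as far as it goes: a pair cannot make itself an error, iterating the hypothesis over the finite set $A$ yields a cycle of pairs, your four-way classification of steps and their signatures is accurate, a cycle consisting only of second- and third-case steps is impossible, and the balancing argument does produce a ``reversal'', i.e.\ a pair $a \pref b$ with $b \rank a$. But the proposal stops exactly where the content of \cref{lemma:tightness_lemma} lies. A reversal is not a missed opportunity: by \cref{definition:missed_op} you still need an alternative strictly $\pref$-between $a$ and $b$ that is unranked with \emph{both} $a$ and $b$, and your final paragraph only announces that you ``would'' find one by tracing the reversal back and doing ``careful bookkeeping''. That conversion is the crux of the lemma and it is not routine: $\nrank$-facts do not propagate along $\rank$-chains (from $u \nrank a$ and $a \rank a'$ nothing follows about $u$ versus $a'$), and the reversal pair you exhibit at a second- or third-case step need not itself admit an unranked witness---in general the missed opportunity involves a different pair of alternatives, determined by the global $\pref$-structure of the cycle rather than by the facts generated at any single first-case step.

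For comparison, the paper closes precisely this gap with a minimality device that your sketch lacks. It passes to the cyclic sequence of the $\pref$-better coordinates $x_1,\dots,x_T$ of the pairs, which satisfies: if $x_t \pref x_{t+1}$ then $x_{t+1} \nrank x_t$, and if $x_{t+1} \pref x_t$ then $x_t \rank x_{t+1}$. It then takes a \emph{minimal} cyclic sequence with these properties, shows its terms are distinct, starts it at its $\pref$-best term $x_1$, and identifies the missed opportunity as a specific pair $(x_1,x_{t'})$ with witness $x_{t'-1}$: the ranking $x_{t'} \rank x_1$ follows by chaining the second property along the tail, one required non-ranking follows from the first property, and the remaining non-ranking $x_{t'-1} \nrank x_1$ follows from minimality (otherwise a strictly shorter cycle would satisfy the same properties). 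Your plan contains no analogue of this minimality argument and never identifies the pair or the witness, so the decisive step is missing; to complete the proof you would essentially need to add an argument of this kind.
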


\begin{proof}
	Let $\rank$ and $A$ satisfy the hypothesis.
	Then there is a pair $\{z,w\} \in A$ and another pair $\{z',w'\} \in A$ that makes $\{z,w\}$ an error at $\rank$.
	Assume (wlog) that $z \pref w$ and $z' \pref w'$.
	Since $\{z,w\} \neq \{z',w'\}$, we must have either $z \neq z'$ or $w \neq w'$.
	Assume that $z \neq z'$; the case $w \neq w'$ is similar.

	\begin{namedthm}[First claim.]
		\label{claim:tightness_lemma:1}
		There exists a sequence $(x_t)_{t=1}^T$ in $\mathcal{X}$ with $T \geq 2$ and $x_1 \neq x_2$ such that for every $t \leq T$,
		writing $x_{T+1} \coloneqq x_1$,
		\begin{enumerate}[label=(\roman*)]

			\item \label{item:claim:tightness_lemma:1:i}
			if $x_t \pref x_{t+1}$ then $x_{t+1} \nrank x_t$, and

			\item \label{item:claim:tightness_lemma:1:ii}
			if $x_{t+1} \pref x_t$ then $x_t \rank x_{t+1}$.

		\end{enumerate}
	\end{namedthm}

	\begin{proof}[Proof of the \protect{\hyperref[claim:tightness_lemma:1]{first claim}}]%
		\renewcommand{\qedsymbol}{$\square$}
		Define $\{x_1,y_1\} \coloneqq \{z,w\}$ and $\{x_2,y_2\} \coloneqq \{z',w'\}$.
		By the hypothesis of the lemma,
		there is a pair $\{x_3,y_3\} \in A$ with (wlog) $x_3 \pref y_3$ that makes $\{x_2,y_2\}$ an error at $\rank$,
		a $\{x_4,y_4\} \in A$ with $x_4 \pref y_4$ that makes $\{x_3,y_3\}$ an error at $\rank$,
		and so on.
		Since $A$ is finite, $\{x_1,y_1\}$ makes $\{x_T,y_T\}$ an error for some $T \in \N$.
		We have $T \geq 2$ and $x_1 \neq x_2$ by construction, and \ref{item:claim:tightness_lemma:1:i}--\ref{item:claim:tightness_lemma:1:ii} must hold because $\{x_{t+1},y_{t+1}\}$ makes $\{x_t,y_t\}$ an error at $\rank$.
	\end{proof}%
	\renewcommand{\qedsymbol}{$\blacksquare$}

	Let $(x_t)_{t=1}^T$ be a minimal sequence satisfying the conditions of the \hyperref[claim:tightness_lemma:1]{first claim} (one with no strict subsequence that satisfies the conditions).

	\begin{namedthm}[Second claim.]
		\label{claim:tightness_lemma:2}
		$x_t \neq x_s$ for all distinct $t,s \in \{1,\dots,T\}$.
	\end{namedthm}

	\begin{proof}[Proof of the \protect{\hyperref[claim:tightness_lemma:2]{second claim}}]%
		\renewcommand{\qedsymbol}{$\square$}
		Suppose toward a contradiction that $x_t = x_{t+1}$;
		then the sequence $x_1,\dots,x_{t-1},x_{t+1},\dots,x_T$ satisfies the conditions of the \hyperref[claim:tightness_lemma:1]{first claim}, contradicting the minimality of $(x_t)_{t=1}^T$.
		Assume for the remainder that $x_t \neq x_{t+1}$ for every $t \in \{1,\dots,T\}$.

		Suppose toward a contradiction that $x_t = x_s$, where $t+1<s$.
		Then the sequence $x_{t+1},\dots,x_s$ satisfies the conditions of the \hyperref[claim:tightness_lemma:1]{first claim}, which is absurd since $(x_t)_{t=1}^T$ is minimal.
	\end{proof}
	\renewcommand{\qedsymbol}{$\blacksquare$}

	In light of the \hyperref[claim:tightness_lemma:2]{second claim}, we may re-label the sequence $(x_t)_{t=1}^T$ so that $x_1 \pref x_t$ for every $t \in \{2,\dots,T\}$.
	Let $t' \leq T$ be the least $t \geq 2$
	such that $x_T \pref x_{T-1} \pref \cdots \pref x_t$.
	(So $t'=T$ exactly if $x_{T-1} \pref x_T$.)
	We shall show that $\{x_1,x_{t'}\}$ is a missed opportunity in $\rank$; in particular, that $t' \geq 3$ and
	\begin{enumerate}[label=(\alph*)]

		\item \label{item:claim:tightness_lemma:1:a}
		$x_1 \pref x_{t'-1} \pref x_{t'}$,

		\item \label{item:claim:tightness_lemma:1:b}
		$x_{t'} \rank x_1$, and

		\item \label{item:claim:tightness_lemma:1:c}
		$x_{t'} \nrank x_{t'-1} \nrank x_1$.

	\end{enumerate}

	For \ref{item:claim:tightness_lemma:1:b},
	if $t'=T$ then $x_{t'} = x_T \rank x_1$ by property \ref{item:claim:tightness_lemma:1:ii},
	and if not then $x_{t'} \rank \cdots \rank x_T \rank x_1$ by property \ref{item:claim:tightness_lemma:1:ii}, whence $x_{t'} \rank x_1$ by transitivity of $\rank$.
	The second half of \ref{item:claim:tightness_lemma:1:a} (i.e. $x_{t'-1} \pref x_{t'}$) holds by definition of $t'$.
	The first half of \ref{item:claim:tightness_lemma:1:c} (i.e. $x_{t'} \nrank x_{t'-1}$) then follows by property \ref{item:claim:tightness_lemma:1:i}.
	Since $x_{t'} \rank x_1$, it follows that $t'-1 \neq 1$, which is to say that $t' \geq 3$.
	The first half of \ref{item:claim:tightness_lemma:1:a} (i.e. $x_1 \pref x_{t'-1}$) then holds by construction.
	Finally, the second half of \ref{item:claim:tightness_lemma:1:c} (i.e. $x_{t'-1} \nrank x_1$) must hold since otherwise the sequence $(x_t)_{t=1}^{t'-1}$ would satisfy the conditions of the \hyperref[claim:tightness_lemma:1]{first claim}, contradicting the minimality of $(x_t)_{t=1}^T$.
\end{proof}

%%%%%%%%%%%%%%%%%%%%%%
\subsubsection{Proof of \texorpdfstring{\Cref{proposition:regretfree_errors_tight}}{Proposition \ref{proposition:regretfree_errors_tight}} (p. \pageref{proposition:regretfree_errors_tight})}
\label{suppl:pf_tightness:regretfree_errors}
%%%%%%%%%%%%%%%%%%%%%%

At a history at which the chair has committed no errors, the proto-ranking clearly contains no missed opportunities.
The following therefore implies \Cref{proposition:regretfree_errors_tight}.

\begin{namedthm}[\Cref{proposition:regretfree_errors_tight}$\boldsymbol{^\star}$.]
	\label{proposition:prod_gen}
	Let $\rank$ be a non-total proto-ranking containing no missed opportunities.
	Then there exist distinct $x,y \in \mathcal{X}$ such that $x \nrank y \nrank x$ and offering a vote on $\{x,y\}$ neither misses an opportunity nor takes a risk at $\rank$.
\end{namedthm}

\begin{proof}
	Let $\rank$ be a non-total proto-ranking, and suppose that for any distinct $x,y \in \mathcal{X}$ with $x \nrank y \nrank x$, offering a vote on $\{x,y\}$ either misses an opportunity or takes a risk at $\rank$. We shall show that $\rank$ contains a missed opportunity.

	Let $A$ be the set of all pairs $\{x,y\} \subseteq \mathcal{X}$ with $x \neq y$ and $x \nrank y \nrank x$.
	The set $A$ is non-empty since $\rank$ is not total.
	By hypothesis, for any $\{x,y\} \in A$, offering $\{x,y\}$ either misses an opportunity or takes a risk at $\rank$, implying that some $\{z,w\} \in A$ makes $\{x,y\}$ an error at $\rank$.
	It follows by \Cref{lemma:tightness_lemma} (§\ref{suppl:pf_tightness:lemma}, \cpageref{lemma:tightness_lemma}) that $\rank$ contains a missed opportunity.
\end{proof}

%%%%%%%%%%%%%%%%%%%%%%
\subsubsection{Proof of \texorpdfstring{\Cref{proposition:regretfree_efficient_tight}}{Proposition \ref{proposition:regretfree_efficient_tight}} (p. \pageref{proposition:regretfree_efficient_tight})}
\label{suppl:pf_tightness:regretfree_efficient}
%%%%%%%%%%%%%%%%%%%%%%

\begin{lemma}
	\label{lemma:pf_tightness:regretfree_efficient}
	Fix a majority will $\perm$, let $\rank$ be a $\perm$-reachable $\perm$-efficient ranking,
	and let $\mathord{\rankp} \subseteq \mathord{\rank}$ be a non-total proto-ranking containing no missed opportunities.
	Then there exist distinct $x,y \in \mathcal{X}$ such that $x \nrankp y \nrankp x$,
	$\perm$ and $\rank$ agree on $\{x,y\}$,
	and offering $\{x,y\}$ does not miss an opportunity or take a risk at $\rankp$.
\end{lemma}

\begin{proof}[Proof of \Cref{proposition:regretfree_efficient_tight}]
	Fix a majority will $\perm$ and a $\perm$-reachable $\perm$-efficient ranking $\rank$.
	By \Cref{proposition:regretfree_errors_tight} (\cpageref{proposition:regretfree_errors_tight}; already proved), it suffices to find a terminal history $((x_t,y_t))_{t=1}^T$, with associated proto-rankings $(\mathord{\rankt})_{t=0}^T$,%
		\footnote{Recall that $\mathord{\rankzero}=\varnothing$ and that $\mathord{\rankt}$ is the transitive closure of $\Union_{s=1}^t \{(x_s,y_s)\}$, for each $t$.}
	such that
	\begin{itemize}

		\item for every $t \in \{1,\dots,T\}$, $x_t \perm y_t$ and $x_t \rank y_t$, and

		\item for every $t \in \{2,\dots,T\}$, offering $\{x_t,y_t\}$ does not miss an opportunity or take a risk at $\ranktone$.

	\end{itemize}
	Such a terminal history is obtained by repeatedly applying \Cref{lemma:pf_tightness:regretfree_efficient}.
\end{proof}

\begin{proof}[Proof of \Cref{lemma:pf_tightness:regretfree_efficient}]
	We shall prove the contra-positive.
	Fix a majority will $\perm$ and a $\perm$-reachable $\perm$-efficient ranking $\rank$, and let $\mathord{\rankp} \subseteq \mathord{\rank}$ be a non-total proto-ranking.
	Suppose that for any distinct $x,y \in \mathcal{X}$ with $x \nrankp y \nrankp x$ such that $\perm$ and $\rank$ agree on $\{x,y\}$, offering $\{x,y\}$ misses an opportunity or takes a risk at $\rankp$.
	We will show that $\rankp$ contains a missed opportunity.

	Let $\mathcal{A}$ be the set of all pairs $\{x,y\} \subseteq \mathcal{X}$ such that $x \pref y$, $x \nrankp y \nrankp x$, and there is no $z \in \mathcal{X}$ such that $x \rank z \rank y$.
	(So $\mathcal{A}$ is a set of two-element subsets of $\mathcal{X}$.)
	The set $\mathcal{A}$ is non-empty since it includes any $\rank$-adjacent pair $\{x,y\}$ with $x \nrankp y \nrankp x$, and there must be such a pair since $\rankp$ is non-total and $\mathord{\rankp} \subseteq \mathord{\rank}$.
	By \Cref{lemma:tightness_lemma} (§\ref{suppl:pf_tightness:lemma}, \cpageref{lemma:tightness_lemma}), it suffices to show that for any pair $\{x,y\} \in \mathcal{A}$, there is a pair $\{z,w\} \in \mathcal{A}$ that makes $\{x,y\}$ an error at $\rankp$.

	So fix a pair $\{x,y\} \in \mathcal{A}$.
	We claim that $\perm$ and $\rank$ must agree on $\{x,y\}$.
	If $x,y$ are $\rank$-adjacent, then this holds by \Cref{observation:reachability_adjacency} (\cref{app:extra:reachability_adjacency}, \cpageref{observation:reachability_adjacency}) since $\rank$ is $\perm$-reachable.
	If $x,y$ are not $\rank$-adjacent, then since no $z \in \mathcal{X}$ satisfies $x \rank z \rank y$, it must be that $y \rank x$.
	Since $x \pref y$ and $\rank$ is $\perm$-efficient, it follows that $y \perm x$, so that $\perm$ and $\rank$ agree on $\{x,y\}$.

	It follows from the (contra-positive) hypothesis that offering $\{x,y\}$ either misses an opportunity or takes a risk at $\rankp$.
	Consider each in turn.

	\emph{Case 1: $\{x,y\}$ misses an opportunity.}
 	In this case there is a $z \in \mathcal{X}$ satisfying $x \pref z \pref y$ and $y \nrankp z \nrankp x$.
	Since $\{x,y\} \in \mathcal{A}$, we must have either $z \rank x$ or $y \rank z$.
	Assume that $z \rank x$; the case $y \rank z$ is analogous.
	Since $\mathord{\rankp} \subseteq \mathord{\rank}$, we have $x \nrankp z \nrankp x$.
	Thus the pair $\{x,z\}$ lives in $\mathcal{A}$ and makes $\{x,y\}$ an error at $\rankp$.

	\emph{Case 2: $\{x,y\}$ takes a risk.}
	Assume that there is a $z \in \mathcal{X}$ such that $z \pref y$, $x \rankp z$ and $y \nrankp z$; the case in which $x \pref z$, $z \rankp y$ and $z \nrankp x$ is similar.
	Then $\{y,z\}$ makes $\{x,y\}$ an error at $\rankp$.
	To see that $\{y,z\}$ belongs to $\mathcal{A}$, observe that (i) $z \pref y$ and $y \nrankp z$, 
	that (ii) $z \nrankp y$ since otherwise $x \rankp z$ and the transitivity of $\rankp$ would imply the falsehood $x \rankp y$, and that (iii) $y \rank z$ since $\{x,y\} \in \mathcal{A}$ and $x \rank z$ (as $x \rankp z$ and $\mathord{\rankp} \subseteq \mathord{\rank}$), so that there is no $z' \in \mathcal{X}$ such that $z \rank z' \rank y$.
\end{proof}

%%%%%%%%%%%%%%%%%%%%%%%%%%%%%%%%%%%
\subsection{Relation to ranking methods}
\label{suppl:ranking_method}
%%%%%%%%%%%%%%%%%%%%%%%%%%%%%%%%%%%

In this appendix, we investigate the link with the social choice literature mentioned in §\ref{sec:intro:lit} (\cpageref{sec:intro:lit}).
We recast the chair's problem as a choice among ranking methods,
characterise the constraint set of this problem,
and compare its solutions to ranking methods in the literature.

A \emph{ranking method} is a map that assigns to each majority will a ranking.
Each strategy $\strat$ induces a ranking method,
namely the one whose value at a majority will $\perm$ is the outcome of $\strat$ under $\perm$.
Call a ranking method $\rho$ \emph{feasible} iff it is induced by some strategy,
and \emph{regret-free} iff $\rho(\mathord{\perm})$ is $\perm$-unimprovable for every $\perm$.
Clearly the chair's problem in §\ref{sec:environment} can be re-formulated
as a choice between ranking methods,
where the constraint set consists of the feasible ranking methods
and the objective is to choose a regret-free one.

For a majority will $\perm$ and rankings $\mathord{\rank},\mathord{\rankp}$,
say that $\rank$ is \emph{more aligned with $\perm$ than} $\rankp$
iff for any pair $x,y \in \mathcal{X}$ of alternatives with $x \perm y$, if $x \rankp y$ then also $x \rank y$.
This is exactly the definition in the text (§\ref{sec:environment:preferences}, \cpageref{definition:more_aligned}), except that we allow $\perm$ to be any majority will (not necessarily a ranking).

\begin{definition}
	\label{definition:faithful}
	A ranking method $\rho$ is \emph{faithful} iff
	for every majority will $\perm$,
	no ranking $\mathord{\rank} \neq \rho(\mathord{\perm})$ is more aligned with $\perm$ than $\rho(\mathord{\perm})$.
\end{definition}

Faithfulness clearly admits a normative interpretation.
It is a natural strengthening of Condorcet consistency,
the requirement that $\rho(\mathord{\perm})$ rank $x$ highest if $x \perm y$ for every alternative $y \neq x$.
The following shows that it also has a positive interpretation:

\begin{observation}
	\label{observation:Wfeas_faithful}
	A ranking method $\rho$ is faithful
	iff $\rho(\mathord{\perm})$ is $\perm$-reachable for every majority will $\perm$.
\end{observation}

\begin{proof}
	Fix a ranking method $\rho$ and a majority will $\perm$, and write $\mathord{\rank} \coloneqq \rho(\mathord{\perm})$.
	If $\rank$ is $\perm$-reachable, then any $\mathord{\rankp} \neq \mathord{\rank}$ fails to be more aligned with $\perm$ since it must rank some $\rank$-adjacent pair $x \rank y$ as $y \rankp x$,
	where $x \perm y$ by \Cref{observation:reachability_adjacency} (\cref{app:extra:reachability_adjacency}, \cpageref{observation:reachability_adjacency}).
	If $\rank$ is not $\perm$-reachable, then by \Cref{observation:reachability_adjacency} there is an $\rank$-adjacent pair $x \rank y$ such that $y \perm x$,
	so the ranking $\mathord{\rankp} \neq \mathord{\rank}$ that agrees with $\rank$ on every pair but $x,y$ is more aligned with $\perm$.
\end{proof}

By \Cref{observation:Wfeas_faithful}, any feasible ranking method must be faithful.
The converse does not hold, because feasibility also imposes restrictions across majority wills.
To describe these constraints, we introduce a second property:

\begin{definition}
	\label{definition:consistent}
	A ranking method $\rho$ is \emph{consistent} iff
	whenever $\rho(\mathord{\perm}) \neq \rho(\mathord{\permp})$ for two majority wills $\perm$ and $\permp$,
	there are alternatives $x,y \in \mathcal{X}$ such that $x \perm y \permp x$ and
	\begin{equation*}
		x \mathrel{\rho(\mathord{\permpp})} y
		\quad \text{iff} \quad
		x \permpp y
		\quad \text{for every majority will $\mathord{\permpp} \supseteq \mathord{\perm} \intersect \mathord{\permp}$.}
	\end{equation*}
\end{definition}

This property is mathematically natural, but we do not think that it has any normative appeal.
Instead, it captures constraints that the rules of the interaction impose on the chair:

\begin{proposition}
	\label{proposition:meth_feas}
	A ranking method is feasible
	iff it is faithful and consistent.
\end{proposition}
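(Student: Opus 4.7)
The ``iff'' decomposes into two directions.

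\textbf{(Feasibility $\Rightarrow$ faithfulness and consistency.)} Suppose $\rho$ is induced by a strategy $\strat$. Faithfulness follows from \Cref{observation:Wfeas_faithful} since $\rho(\mathord{\perm})$ is by definition $\perm$-feasible. For consistency, let $\perm, \permp$ be tournaments with $\rho(\mathord{\perm}) \neq \rho(\mathord{\permp})$, and identify the first period $T$ at which the terminal histories generated by $\strat$ under $\perm$ and $\permp$ diverge. The histories coincide up to period $T-1$, so $\perm$ and $\permp$ agree on every pair voted on before $T$; at $T$, $\strat$ offers a common pair $\{x,y\}$ whose vote outcomes differ, forcing $x \perm y \permp x$ (WLOG). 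For any $\permpp \supseteq \mathord{\perm} \intersect \mathord{\permp}$, $\permpp$ agrees with $\perm$ on those earlier pairs, so $\strat$ generates the identical length-$(T-1)$ history under $\permpp$ and offers $\{x,y\}$ at period $T$; since no later vote can reverse the relative ranking of $x$ and $y$, $x \rho(\mathord{\permpp}) y$ iff $x \permpp y$.

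\textbf{(Faithfulness and consistency $\Rightarrow$ feasibility.)} I construct a strategy $\strat$ inducing $\rho$ recursively, building its decision tree top-down. At each non-terminal history $h$ reached in the construction, with proto-ranking $\rank_h$, let $D_h$ be the set of tournaments agreeing with the votes in $h$---equivalently, the tournaments reaching $h$ under the partial $\strat$. Note that $\rho(\mathord{\perm})$ extends $\mathord{\rank_h}$ for every $\perm \in D_h$, since $\rho(\mathord{\perm})$ is a ranking hence transitive, and contains each explicit vote of $h$. The definition of $\strat(h)$ rests on the following \emph{key claim}: if $\rho|_{D_h}$ is non-constant, there exists an $\rank_h$-unranked pair $\{x,y\}$ such that $x \rho(\mathord{\perm}) y$ iff $x \perm y$ for every $\perm \in D_h$. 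A short preliminary observation---using faithfulness and the freedom to vary $\perm$ on unranked pairs---shows that $\rho|_{D_h}$ is constant iff $\rank_h$ is total (i.e., $h$ is terminal), so the non-constant case is precisely when a next vote is needed. Setting $\strat(h) \coloneqq \{x,y\}$, successor histories have proto-rankings strictly extending $\rank_h$, and the recursion terminates after finitely many steps with $\rho(\mathord{\perm})$ as the outcome.

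The \textbf{main obstacle} is the key claim. I would begin by taking $\perm, \permp \in D_h$ with $\rho(\mathord{\perm}) \neq \rho(\mathord{\permp})$ and invoking consistency to produce a candidate pair $\{x,y\}$ with alignment on $E \coloneqq \{\permpp \in D_h : \mathord{\permpp} \supseteq \mathord{\perm} \intersect \mathord{\permp}\}$; the pair is $\rank_h$-unranked because $\mathord{\rank_h} \subseteq \mathord{\perm} \intersect \mathord{\permp}$ whilst $x \perm y \permp x$. The crux is then extending the alignment from $E$ to all of $D_h$. My plan is to argue by contradiction: a hypothetical $\permpp^\star \in D_h \setminus E$ violating alignment on $\{x,y\}$ would imply $\rho(\mathord{\permpp^\star}) \neq \rho(\mathord{\perm})$ (since they rank $\{x,y\}$ oppositely), so consistency applied to $(\perm, \permpp^\star)$ yields a new pivotal pair; iterating this step while using \Cref{observation:Wfeas_faithful} (faithfulness) to restrict admissible pivots to pairs adjacent in the relevant output rankings, one expects either to re-identify $\{x,y\}$ as the global pivot---contradicting $\permpp^\star$'s alleged violation---or to decrement a well-founded parameter such as the Hamming distance from $\permpp^\star$ to $E$, reaching a contradiction in finitely many steps.
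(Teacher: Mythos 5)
Your necessity direction is fine and matches the paper's argument. The problem is the sufficiency direction: your whole construction rests on the \emph{key claim}, and what you give for it is a plan rather than a proof. Applying consistency to an arbitrary pair $\mathord{\perm},\mathord{\permp} \in D_h$ with different outputs only yields a pair $\{x,y\}$ aligned on $E = \{\mathord{\permpp} : \mathord{\permpp} \supseteq \mathord{\perm} \intersect \mathord{\permp}\}$, and nothing forces that particular pair to be aligned on the rest of $D_h$. Your proposed repair---iterating consistency against a violating $\permpp^\star$ while decrementing ``the Hamming distance from $\permpp^\star$ to $E$''---is not a workable argument as stated: the pivot that consistency produces for $(\mathord{\perm},\mathord{\permpp^\star})$ bears no stated relation to $\{x,y\}$, its alignment set is again only the set of tournaments containing $\mathord{\perm} \intersect \mathord{\permpp^\star}$ (not $D_h$), and you never show that any well-founded parameter actually decreases. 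So the decisive step of the sufficiency proof is genuinely missing.

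The idea you are missing makes the extension step unnecessary by choosing \emph{which} two tournaments consistency is applied to. At each non-terminal $h$, take $\permarg{h}$ and $\permparg{h}$ that agree with every vote recorded in $h$ and \emph{disagree on every pair not voted on in $h$}. Then $\mathord{\permarg{h}} \intersect \mathord{\permparg{h}}$ is exactly the set of voted pairs, so the set $\{\mathord{\permpp} : \mathord{\permpp} \supseteq \mathord{\permarg{h}} \intersect \mathord{\permparg{h}}\}$ is precisely your $D_h$, and the pair delivered by consistency is automatically aligned on all of $D_h$---no iteration needed. One must also check that $\rho(\mathord{\permarg{h}}) \neq \rho(\mathord{\permparg{h}})$ whenever $h$ is non-terminal, so that consistency has bite; this follows from faithfulness via \Cref{observation:Wfeas_faithful} together with \Cref{observation:feasibility_adjacency} (if the two outputs coincided, every adjacent pair of the common ranking would have to be a voted pair, forcing the proto-ranking to be total), which is essentially the argument you already sketch for your preliminary observation. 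With that choice, the unrankedness of the selected pair and the verification that the recursion outputs $\rho(\mathord{\perm})$ go through as you outline, and the construction coincides with the paper's.
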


Call a ranking method $\rho$ \emph{efficient}
iff $\rho(\mathord{\perm})$ is $\perm$-efficient for every majority will $\perm$.
($\perm$-efficiency is defined in §\ref{sec:is}, \cpageref{sec:is:efficiency}.)
By \Cref{theorem:regretfree_efficient} (\cpageref{theorem:regretfree_efficient}),
a feasible ranking method is regret-free iff it is efficient.
Thus:

\begin{corollary}
	\label{corollary:meth_rf}
	A ranking method is feasible and regret-free
	iff it is faithful, consistent and efficient.
\end{corollary}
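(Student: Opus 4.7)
The plan is to combine \Cref{proposition:meth_feas} (which handles feasibility) with \Cref{theorem:regretfree_efficient} (which handles regret-freeness among feasible objects), after translating the latter from the language of strategies into the language of ranking methods.

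First, I would observe that the definitions of regret-freeness and efficiency for ranking methods are essentially identical to those for strategies, applied outcome-by-outcome. Concretely, if a ranking method $\rho$ is feasible, then by definition it is induced by some strategy $\strat$, meaning that $\rho(\mathord{\perm})$ equals the outcome of $\strat$ under $\perm$ for every tournament $\perm$. Therefore $\rho$ is regret-free (as a ranking method) iff $\strat$ is regret-free (as a strategy), and similarly $\rho$ is efficient iff $\strat$ is efficient. Invoking \Cref{theorem:regretfree_efficient} on $\strat$ then gives: for any feasible ranking method $\rho$, $\rho$ is regret-free iff $\rho$ is efficient.

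Given this equivalence, the corollary follows by a short chain of implications. For the forward direction, suppose $\rho$ is feasible and regret-free. Then $\rho$ is faithful and consistent by \Cref{proposition:meth_feas}, and efficient by the equivalence above. For the reverse direction, suppose $\rho$ is faithful, consistent and efficient. Faithfulness and consistency give feasibility by \Cref{proposition:meth_feas}, and efficiency together with feasibility gives regret-freeness by the equivalence above.

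There is no real obstacle here: the whole content lies in \Cref{proposition:meth_feas} and \Cref{theorem:regretfree_efficient}, and the only verification needed is the one-line observation that, for a feasible $\rho$ induced by $\strat$, regret-freeness and efficiency are attributes of the common outcome map $\perm \mapsto \rho(\mathord{\perm})$, so the notions coincide for $\rho$ and $\strat$. Once that is noted, the corollary is a direct conjunction of the two previously proved results.
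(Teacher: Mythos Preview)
Your proposal is correct and matches the paper's approach exactly: the paper states just before the corollary that ``by \Cref{theorem:regretfree_efficient}, a feasible ranking method is regret-free iff it is efficient,'' and then deduces the corollary from this together with \Cref{proposition:meth_feas}. Your additional remark about translating between strategies and ranking methods via the shared outcome map is precisely the justification the paper leaves implicit.
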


While faithfulness has a normative interpretation,
consistency and efficiency are `positive' in nature:
the former is a constraint imposed by the rules of the game,
while the latter is defined in terms of the chair's self-interested preference $\pref$.
This makes feasible and regret-free ranking methods quite different from those studied in the literature, which are characterised by purely normative axioms (e.g. \textcite{Rubinstein1980} for the Copeland method).
Indeed, standard ranking methods such as those of Copeland and Kemeny--Slater are neither consistent nor efficient, though the latter is faithful.

\begin{proof}[Proof of \Cref{proposition:meth_feas}]
	For necessity, let $\rho$ be feasible.
	Then $\rho$ is faithful by \Cref{observation:Wfeas_faithful}.
	To show that it is consistent, let $\strat$ be a strategy inducing $\rho$, and fix majority wills $\perm$ and $\permp$ such that $\rho(\mathord{\perm}) \neq \rho(\mathord{\permp})$.
	Let $t$ be the first period in which the history generated by $\strat$ and $\perm$ differs from that generated by $\strat$ and $\permp$,
	and let $\{x,y\}$ be the pair offered in this period.
	Then $\perm$ and $\permp$ disagree on $\{x,y\}$.
	Furthermore, the pair $\{x,y\}$ is clearly offered in period $t$ of the history generated by $\strat$ and any $\mathord{\permpp} \supseteq \mathord{\perm} \intersect \mathord{\permp}$, so that $x \mathrel{\rho(\mathord{\permpp})} y$ iff $x \permpp y$.

	For sufficiency, let $\rho$ be faithful and consistent; we shall construct a strategy that induces $\rho$.
	For each history $h$, let $\permarg{h}$ and $\permparg{h}$ be majority wills such that
	\begin{enumerate}[label=(\alph*)]
	
		\item \label{bullet:revealed_W:a}
		if $h = ((x_t,y_t))_{t=1}^T$,
		then $x_t \permarg{h} y_t$ and $x_t \permparg{h} y_t$ for each $t \in \{1,\dots,T\}$, and
	
		\item \label{bullet:revealed_W:b}
		$\permarg{h}$ and $\permparg{h}$ disagree on any pair that is not voted on in $h$.
	
	\end{enumerate}
	Since $\rho$ is faithful, $\rho(\mathord{\permarg{h}})$ is $\permarg{h}$-reachable and $\rho(\mathord{\permparg{h}})$ is $\permparg{h}$-reachable
	by \Cref{observation:Wfeas_faithful}.
	Thus by \Cref{observation:reachability_adjacency} (\cref{app:extra:reachability_adjacency}, \cpageref{observation:reachability_adjacency}),
	we have $\rho(\mathord{\permarg{h}}) = \rho(\mathord{\permparg{h}})$ iff $h$ is terminal.
	Since $\rho$ is consistent,
	we may for each non-terminal history $h$ choose a pair $\strat(h) \coloneqq \{x,y\} \subseteq \mathcal{X}$ that satisfies
	\begin{enumerate}[label=(\alph*)]

		\setcounter{enumi}{2}
	
		\item \label{bullet:revealed_W:c}
		$x \permarg{h} y \permparg{h} x$ and
	
		\item \label{bullet:revealed_W:d}
		$x \permpp y$ iff $x \mathrel{\rho(\mathord{\permpp})} y$ for any majority will $\mathord{\permpp} \supseteq \mathord{\permarg{h}} \intersect \mathord{\permparg{h}}$.
	
	\end{enumerate}

	\begin{namedthm}[Claim.]
		\label{claim:meth_feas}
		Let $h = ((x_t,y_t))_{t=1}^T$ be a history of length $T \geq 1$
		such that $\{x_1,y_1\} = \strat(\varnothing)$ and
		$\{x_t,y_t\} = \strat\left( ((x_s,y_s))_{s=1}^{t-1} \right)$
		for each $t \in \{2,\dots,T\}$.
		Then (i) for any majority will $\permpp$ with $x_t \permpp y_t$ for each $t \in \{1,\dots,T\}$,
		we have $x_t \mathrel{\rho(\mathord{\permpp})} y_t$ for each $t \in \{1,\dots,T\}$, and (ii) the pair $\strat(h)$ is unranked by the transitive closure of $\Union_{t=1}^T \{(x_t,y_t)\}$.
	\end{namedthm}

	\begin{proof}[Proof of the \protect{\hyperref[claim:meth_feas]{claim}}]%
		\renewcommand{\qedsymbol}{$\square$}
		For the first part, fix a $t \in \{1,\dots,T\}$ and a majority will $\permpp$ such that $x_s \permpp y_s$ for each $s \in \{1,\dots,T\}$.
		Define $h' \coloneqq ((x_s,y_s))_{s=1}^{t-1}$ (meaning $h' = \varnothing$ if $t=1$), noting that $\strat(h') = \{x_t,y_t\}$.
		We have $\mathord{\permpp} \supseteq \mathord{\permarg{h'}} \intersect \mathord{\permparg{h'}}$ since $\permarg{h'}$ and $\permparg{h'}$ satisfy \ref{bullet:revealed_W:b},
		whence $x_t \mathrel{\rho(\mathord{\permpp})} y_t$ by \ref{bullet:revealed_W:d}.

		For the second part, we have $x_t \mathrel{\rho(\mathord{\permarg{h}})} y_t$ and $x_t \mathrel{\rho(\mathord{\permparg{h}})} y_t$ for every $t \in \{1,\dots,T\}$ by \ref{bullet:revealed_W:a} and the first part of the \hyperref[claim:meth_feas]{claim},
		implying that $\rho(\mathord{\permarg{h}})$ and $\rho(\mathord{\permparg{h}})$ (being transitive) agree on every pair ranked by the transitive closure of $\Union_{t=1}^T \{(x_t,y_t)\}$.
		Since $\rho(\mathord{\permarg{h}})$ and $\rho(\mathord{\permparg{h}})$ disagree on the pair $\strat(h)$ by \ref{bullet:revealed_W:c} and \ref{bullet:revealed_W:d}, it follows that $\strat(h)$ is unranked by the transitive closure.
	\end{proof}%
	\renewcommand{\qedsymbol}{$\blacksquare$}

	By the second part of the \hyperref[claim:meth_feas]{claim}, $\strat$ is a well-defined strategy.%
		\footnote{We actually defined $\strat$ only on the path. Off the path, any behaviour will do.}
	To show that it induces $\rho$, fix a majority will $\perm$, and let $h = ((x_t,y_t))_{t=1}^T$ be the terminal history generated by $\strat$ and $\perm$;
	we must demonstrate that $\rho(\mathord{\perm})$ is the transitive closure of $\Union_{t=1}^T \{(x_t,y_t)\}$.
	Since both are rankings, it suffices to show that $x_t \mathrel{\rho(\mathord{\perm})} y_t$ for every $t \in \{1,\dots,T\}$.
	And this follows from the \hyperref[claim:meth_feas]{claim} since $x_t \perm y_t$ for every $t \in \{1,\dots,T\}$.
\end{proof}

%%%%%%%%%%%%%%%%%%%%%%
\subsection{\texorpdfstring{How many $\boldsymbol{\perm}$-reachable rankings are $\boldsymbol{\perm}$-unimprovable?}{How many W-reachable rankings are W-unimprovable?}}
\label{suppl:frac}
%%%%%%%%%%%%%%%%%%%%%%

This appendix contains two results.
In §\ref{suppl:frac:benefit},
we show that for a given majority will $\perm$,
every $\perm$-reachable ranking is $\perm$-unimprovable iff $\perm$ is transitive.
In §\ref{suppl:frac:frac},
we show that on average across majority wills $\perm$,
only a small fraction of $\perm$-reachable rankings are $\perm$-unimprovable
if there are many alternatives.

%%%%%%%%%%%%%%%%%%%%%%%%%%%%%%%%%%%
\subsubsection{When is agenda-setting valuable?}
\label{suppl:frac:benefit}

Given the majority will $\perm$, the value of agenda-setting lies in being able to reach a $\perm$-unimprovable ranking rather than some (necessarily $\perm$-reachable) ranking that is not $\perm$-unimprovable.
This motivates the following definition:

\begin{definition}
	\label{definition:chair_benefit}
	Given her preference $\pref$ (a ranking),
	the chair \emph{benefits from agenda-setting} under a majority will $\perm$
	iff there exists a $\perm$-reachable ranking that is not $\perm$-unimprovable.
\end{definition}

\begin{proposition}
	\label{proposition:chair_benefit}
	For a majority will $\perm$, the following are equivalent:
	\begin{enumerate}

		\item \label{proposition:chair_benefit:i}
		$\perm$ is not a ranking (i.e. is not transitive).

		\item \label{proposition:chair_benefit:ii}
		For some $\pref$, the chair benefits from agenda-setting under $\perm$.

		\item \label{proposition:chair_benefit:iii}
		For every $\pref$, the chair benefits from agenda-setting under $\perm$.

	\end{enumerate}
\end{proposition}

In words, agenda-setting is valuable precisely because it allows the chair to exploit Condorcet cycles: the chair benefits whenever there is a cycle in $\perm$, and otherwise does not benefit.

\begin{proof}
	\ref{proposition:chair_benefit:iii} immediately implies \ref{proposition:chair_benefit:ii}.
	To see that \ref{proposition:chair_benefit:ii} implies \ref{proposition:chair_benefit:i}, consider the contra-positive: if $\perm$ is a ranking, then it is clearly the only $\perm$-reachable ranking, so the chair does not benefit from agenda-setting for any $\pref$.

	To prove that \ref{proposition:chair_benefit:i} implies \ref{proposition:chair_benefit:iii}, fix any ranking $\pref$ and any majority will $\perm$ that is not a ranking;
	it suffices to exhibit distinct $\perm$-reachable rankings $\rank$ and $\rankp$ such that $\rank$ is more aligned with $\pref$ than $\rankp$.
	To that end let $\rank$ be a $\perm$-efficient ranking (these are easily seen to exist).
	Similarly let $\rankp$ be a \emph{$\perm$-anti-efficient} ranking,
	i.e. one such that $x \apref y$ and $x \perm y$ implies $x \rank y$.

	To show that $\rank$ is more aligned with $\pref$ than $\rankp$,
	take $x,y \in \mathcal{X}$ with $x \pref y$.
	If $x \perm y$, then $x \rank y$ since $\rank$ is $\perm$-efficient.
	If instead $y \perm x$, then $y \rankp x$ since $\rankp$ is $\perm$-anti-efficient.
	Thus $x \rankp y$ implies $x \rank y$.

	It remains only to show that $\rank$ and $\rankp$ are distinct.
	Since $\perm$ is not a ranking, there must be $x,y,z \in \mathcal{X}$ such that $x \perm y \perm z \perm x$.
	Suppose wlog that $x \pref z$.
	There are three cases.
	If $x \pref y \pref z$, then $x \rank y \rank z$ and $z \rankp x$.
	If $y \pref x \pref z$, then $y \rank z$ and $z \rankp x \rankp y$.
	If $x \pref z \pref y$, then $x \rank y$ and $y \rankp z \rankp x$.
	In each case,
	$\mathord{\rank} \neq \mathord{\rankp}$
	by transitivity.
\end{proof}

%%%%%%%%%%%%%%%%%%%%%%%%%%%%%%%%%%%
\subsubsection{\texorpdfstring{Most $\boldsymbol{\perm}$-reachable rankings are not $\boldsymbol{\perm}$-unimprovable}{Most W-reachable rankings are not W-unimprovable}}
\label{suppl:frac:frac}

The following asserts that when there are enough alternatives,
only a small fraction of a typical $\perm$'s $\perm$-reachable rankings are $\perm$-unimprovable.

\begin{proposition}
	\label{proposition:frac}
	For each $n \in \N$,
	let $\ranksup{n}$ ($\permsup{n}$) denote a uniform random draw from the set of all rankings (majority wills) on $\mathcal{X}_n \coloneqq \{1,\dots,n\}$,
	with $\ranksup{n}$ and $\permsup{n}$ independent.
	Then
	\begin{equation*}
		\Pr\left( \text{$\ranksup{n}$ is $\permsup{n}$-unimprovable} \middle|
		\text{$\ranksup{n}$ is $\permsup{n}$-reachable} \right)
		\to 0
		\quad \text{as $n \to \infty$.}
	\end{equation*}
\end{proposition}

\begin{proof}
	Fix any $n \geq 5$, and define $K_n \coloneqq \floor{ (n-1)/4 }$.
	Further fix a ranking $\rank$ and a majority will $\perm$ on $\mathcal{X}_n$,
	and label the alternatives $\mathcal{X}_n = \{x_1,\dots,x_n\}$ so that $x_1 \rank \dots \rank x_n$.
	Given $k \in \{1,\dots,K_n\}$, say that $\rank$ \emph{admits a local $\perm$-improvement at $(x_{4k-2},x_{4k-1},x_{4k})$} iff both
	\begin{itemize}

		\item $x_{4k-3} \perm x_{4k} \perm x_{4k-2}$ and
		$x_{4k-1} \perm x_{4k+1}$, and

		\item $x_{4k} \pref x_{4k-1}$ and $x_{4k} \pref x_{4k-2}$.

	\end{itemize}
	If $\rank$ admits a local $\perm$-improvement at $(x_{4k-2},x_{4k-1},x_{4k})$,
	then it fails to be $\perm$-unimprovable since the ranking
	\begin{equation*}
		x_1 \rankp \cdots
		\rankp x_{4k-3} \rankp x_{4k} \rankp x_{4k-2}
		\rankp x_{4k-1} \rankp x_{4k+1}
		\rankp \cdots \rankp x_n
	\end{equation*}
	is then $\perm$-reachable (by \Cref{observation:reachability_adjacency} in \cref{app:extra:reachability_adjacency} (\cpageref{observation:reachability_adjacency}))
	and more aligned with $\pref$.

	For each $n \geq 5$, let $\left( X^n_k \right)_{k=1}^n$ be random variables such that
	\begin{equation*}
		\left\{ X^n_1, \dots, X^n_n \right\} = \mathcal{X}_n
		\quad \text{and} \quad
		X^n_1 \ranksup{n} \dots \ranksup{n} X^n_n
		\quad \text{a.s.}
	\end{equation*}
	The events `$X^n_{4k} \pref X^n_{4k-1}$ and $X^n_{4k} \pref X^n_{4k-2}$' are independent across $k \in \{1,\dots,K_n\}$
	and each have probability $1/4$.
	It follows by \Cref{observation:reachability_adjacency}
	that conditional on $\ranksup{n}$ being $\permsup{n}$-reachable, the events
	\begin{equation*}
		\text{`$\ranksup{n}$ admits a local $\permsup{n}$-improvement at $\left( X^n_{4k-2}, X^n_{4k-1}, X^n_{4k} \right)$'}
	\end{equation*}
	are independent across $k \in \{1,\dots,K_n\}$
	and have probability $(1/2)^5$.
	Thus
	\begin{equation*}
		\Pr\left( \text{$\ranksup{n}$ is $\permsup{n}$-unimprovable} \middle|
		\text{$\ranksup{n}$ is $\permsup{n}$-reachable} \right)
		\leq \left( 1 - (1/2)^5 \right)^{K_n} ,
	\end{equation*}
	which vanishes as $n \to \infty$ since $K_n = \floor{ (n-1)/4 }$ diverges.
\end{proof}

%%%%%%%%%%%%%%%%%%%%%%
\subsection{A characterisation of our \texorpdfstring{`}{'}transitive' protocol}
\label{suppl:trans_charac}
%%%%%%%%%%%%%%%%%%%%%%

In this appendix, we show that among all possible rules of interaction between the chair and committee that lead to a ranking,
the `transitive' protocol studied in this paper (described in §\ref{sec:environment:interaction}) is the only one (up to restriction) that denies the chair arbitrary power and that allows votes only on pairs.
This protocol is thus the natural one, given the restriction to pairwise votes.
Non-binary votes raise issues that are beyond the scope of this paper.%
	\footnote{Unlike in the binary case, there is no `most natural' non-binary protocol.
	In particular, reasonable protocols can differ in what they deem the committee to have `decided' in a vote on three or more alternatives in which none won an outright majority.}

A \emph{ballot} is a set of two or more alternatives.
An \emph{election} is $(B,V)$, where $B$ is a ballot and $V$ is a map $\{1,\dots,I\} \to B$ specifying what alternative each voter votes for.
An \emph{electoral history} is a finite sequence of elections with distinct ballots.
For two (distinct) electoral histories $h,h'$, we write $h \sqsubseteq \mathrel{(\sqsubset)} h'$ iff $h$ is a truncation of $h'$.

A \emph{protocol} specifies for each (permitted) electoral history either
(1) a set of ballots that the chair is permitted to offer or
(2) a ranking.
Formally:

\begin{definition}
	\label{definition:protocol}
	A \emph{protocol} is $(\mathcal{H},\rho)$, where
	\begin{enumerate}

		\item $\mathcal{H}$ is a non-empty set of electoral histories such that
		\begin{itemize}

			\item if $h'$ belongs to $\mathcal{H}$, then so does any $h \sqsubseteq h'$, and

			\item if $h = ( (B_1,V_1),\dots,(B_t,V_t) )$
			belongs to $\mathcal{H}$,
			then so does $h' = ( (B_1,V_1),\dots,(B_t,V_t') )$
			for any $V_t' : \{1,\dots,n\} \to B_t$.

		\end{itemize}

		Call $h \in \mathcal{H}$ \emph{terminal} (in $\mathcal{H}$) iff there is no $h' \sqsupset h$ in $\mathcal{H}$.

		\item $\rho$ is a map that assigns a ranking to each terminal $h \in \mathcal{H}$.

	\end{enumerate}
\end{definition}

Call an electoral history \emph{binary} iff each ballot has exactly two elements.
A \emph{binary protocol} $(\mathcal{H},\rho)$ is one whose $\mathcal{H}$ consists of binary electoral histories.
For any binary electoral history $h = \left( \left( \{x_s,y_s\}, V_s \right) \right)_{s=1}^t$,
where wlog $\abs*{ \{ i : V_s(i)=x_s \} } > I/2$ for each $s \in \{1,\dots,t\}$,
let $\rankh$ denote the transitive closure of $\Union_{s=1}^t \{(x_s,y_s)\}$.%
	\footnote{If $h$ is the empty electoral history, then $\mathord{\rankh} = \varnothing$.}
The \emph{transitive protocol} is the binary protocol that permits the chair to offer a ballot $\{x,y\}$ after binary electoral history $h$ exactly if the pair $x,y$ is unranked by $\rankh$, and assigns the ranking $\rankh$ to each terminal $h$.%
	\footnote{Explicitly it is $(\mathcal{H}^\star,\rho^\star)$, where $\mathcal{H}^\star$ consists of all binary electoral histories $h'$ such that
	\begin{equation*}
		h 
		\sqsubset ( (\{x_1,y_1\},V_1), \dots, (\{x_t,y_t\},V_t) )
		\sqsubseteq h'
		\quad \text{implies} \quad
		x_t \nrankh y_t \nrankh x_t ,
	\end{equation*}
	(so that $h \in \mathcal{H}$ is terminal iff $\rankh$ is a ranking,)
	and $\rho^\star(h) \coloneqq \mathord{\rankh}$ for each terminal $h \in \mathcal{H}^\star$.}

To deny the chair arbitrary power, we focus on protocols that rank $x$ above $y$ whenever $x$ won an outright majority and $y$ was also on the ballot:

\begin{definition}
	\label{definition:cs}
	A protocol $(\mathcal{H},\rho)$ satisfies \emph{committee sovereignty}
	iff for any terminal $h = ( (B_t,V_t) )_{t=1}^T \in \mathcal{H}$
	such that $\abs*{ \{ i : V_t(i) = x \} } > I/2$ and $y \in B_t \setminus \{x\}$ for some $t \in \{1,\dots,T\}$,
	we have $x \mathrel{\rho(h)} y$.
\end{definition}

For binary protocols, committee sovereignty is equivalent to imposing transitivity after every vote:

\begin{observation}
	\label{observation:transitive_charac}
	A binary protocol $(\mathcal{H},\rho)$ satisfies committee sovereignty
	iff $\rho(h) \supseteq \mathord{\rankh}$ for every terminal $h \in \mathcal{H}$.
\end{observation}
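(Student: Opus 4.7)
The plan is to unpack both definitions and note that for binary ballots, committee sovereignty simply says ``each winning pair lies in $\rho(h)$'', while $\rankh$ is by construction the smallest transitive relation containing those pairs; the equivalence then follows because $\rho(h)$ is itself transitive (being a ranking).

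More precisely, fix a binary protocol $(\mathcal{H},\rho)$ and a terminal $h = \left( \left( \{x_s,y_s\}, V_s \right) \right)_{s=1}^t \in \mathcal{H}$, labelled so that $\abs*{ \{ i : V_s(i)=x_s \} } > I/2$ for each $s \leq t$. For the ``if'' direction, suppose $\rho(h) \supseteq \mathord{\rankh}$. Since each ballot has exactly two elements, committee sovereignty reduces to the requirement that $x_s \mathrel{\rho(h)} y_s$ for every $s \leq t$, and this is immediate since $(x_s,y_s) \in \mathord{\rankh} \subseteq \mathord{\rho(h)}$ by definition of $\rankh$.

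For the ``only if'' direction, suppose committee sovereignty holds. Then $(x_s,y_s) \in \mathord{\rho(h)}$ for every $s \leq t$, so $\mathord{\rho(h)} \supseteq \Union_{s=1}^t \{(x_s,y_s)\}$. Since $\rho(h)$ is a ranking, it is transitive, and therefore contains the transitive closure of any relation it contains; in particular $\mathord{\rho(h)} \supseteq \mathord{\rankh}$.

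There is no real obstacle here: the statement is essentially a restatement of the minimality of transitive closure, combined with the observation that on binary ballots the condition in \Cref{definition:cs} collapses to ``the winning alternative is ranked above the losing one''. The only thing to be careful about is to invoke that $\rho(h)$ is a ranking (hence transitive) rather than an arbitrary binary relation, which is what makes transitive closure containment automatic.
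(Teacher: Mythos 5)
Your proof is correct and follows essentially the same route as the paper's: committee sovereignty gives $\rho(h) \supseteq \Union_{s=1}^t \{(x_s,y_s)\}$, and transitivity of $\rho(h)$ together with minimality of the transitive closure yields $\rho(h) \supseteq \mathord{\rankh}$, while the converse is immediate since each winning pair lies in $\rankh$. No gaps.
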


That is, any pair linked by a chain of majorities ($x \rankh y$)
must be ranked accordingly ($x \mathrel{\rho(h)} y$),
and so cannot be offered for a vote.%
	\footnote{Formally: if $x \rankh y$, then no terminal $h' \sqsupseteq h$ can feature the ballot $\{x,y\}$ (except in $h$).
	For otherwise there would be a terminal $h'$ in which $y$ beats $x$ in a vote,
	so that $x \rankhp y \rankhp x$,
	which is impossible since $\rho(h') \supseteq \mathord{\rankhp}$ and $\rho(h')$ is a ranking.}

\begin{proof}
	Let $(\mathcal{H},\rho)$ be binary and satisfy committee sovereignty, and
	take a terminal $h = ( ( \{x_t,y_t\}, V_t ) )_{t=1}^T \in \mathcal{H}$,
	where wlog $x_t \rankh y_t$ for each $t \in \{1,\dots,T\}$.
	Then $\rho(h) \supseteq \Union_{t=1}^T \{(x_t,y_t)\}$ by committee sovereignty,
	whence $\rho(h) \supseteq \mathord{\rankh}$
	because $\rho(h)$ is transitive
	and $\rankh$ is by definition the smallest transitive relation containing $\Union_{t=1}^T \{(x_t,y_t)\}$.

	For the converse, let $(\mathcal{H},\rho)$ be binary with $\rho(h) \supseteq \mathord{\rankh}$ for every terminal $h \in \mathcal{H}$.
	Take any terminal $h = ( (\{x_t,y_t\},V_t) )_{t=1}^T \in \mathcal{H}$ and suppose that $\abs*{ \{ i : V_t(i) = x_t \} } > I/2$; we must show that $x_t \mathrel{\rho(h)} y_t$.
	Since $x_t \rankh y_t$, this follows immediately from $\rho(h) \supseteq \mathord{\rankh}$.
\end{proof}

More is needed to deny the chair excessive power: she must also be required to offer enough ballots to give the committee a fair say.
To formalise this, write $x \sayh y$ for an electoral history $h = ( (B_t,V_t) )_{t=1}^T$ iff
\begin{equation*}
	x,y \in B_t
	\quad \text{and} \quad
	\abs*{ \{ i : V_t(i) = x \} }
	\geq \abs*{ \{ i : V_t(i) = y \} }
\end{equation*}
for some $t \in \{1,\dots,T\}$,
and say that $h$ \emph{gives the committee a say on $x,y$} iff
$\{z_1,z_L\} = \{x,y\}$ for some sequence $z_1 \sayh z_2 \sayh \cdots \sayh z_L$ of alternatives.

\begin{definition}
	\label{definition:dl}
	A protocol $(\mathcal{H},\rho)$ satisfies \emph{democratic legitimacy}
	iff every terminal $h \in \mathcal{H}$
	gives the committee a say on each pair of alternatives.
\end{definition}

Write $\tau(\mathcal{H})$ for the terminal elements of $\mathcal{H}$.
A protocol $(\mathcal{H},\rho)$ is a \emph{restriction} of $(\mathcal{H}',\rho')$
iff $\tau(\mathcal{H}) \subseteq \tau(\mathcal{H}')$ and $\rho = \rho'|_{\tau(\mathcal{H})}$.%
	\footnote{$\tau(\mathcal{H}) \subseteq \tau(\mathcal{H}')$ holds exactly if $\mathcal{H} \subseteq \mathcal{H}'$ \emph{and} any $h \in \tau(\mathcal{H})$ is terminal in $\mathcal{H}'$.}
To wit, anything the chair can do under $(\mathcal{H},\rho)$, she can also do under $(\mathcal{H}',\rho')$.

\begin{proposition}
	\label{proposition:transitive_charac}
	A protocol is binary and satisfies committee sovereignty and democratic legitimacy
	iff it is a restriction of the transitive protocol.
\end{proposition}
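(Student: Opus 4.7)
The plan is to prove the two directions separately. For the ``if'' direction, suppose $(\mathcal{H},\rho)$ is a restriction of $(\mathcal{H}^\star,\rho^\star)$. Binariness of $\mathcal{H}$ is inherited from $\mathcal{H}^\star$. For committee sovereignty, I would observe that $\rho(h)=\rho^\star(h)=\mathord{\rankh}$ for each terminal $h\in\mathcal{H}$ and invoke \Cref{observation:transitive_charac}. For democratic legitimacy, I would use that each terminal $h\in\mathcal{H}$ is also terminal in $\mathcal{H}^\star$, making $\rankh$ total; then for any pair $x,y$, wlog $x\rankh y$, and unpacking the transitive closure yields a chain $x=z_1,\dots,z_L=y$ in which each consecutive pair was directly linked by a winning vote, so $z_k\sayh z_{k+1}$ for each $k<L$ and $h$ gives the committee a say on $x,y$.

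For the ``only if'' direction, fix $(\mathcal{H},\rho)$ satisfying the three hypotheses and take an arbitrary terminal $h=((\{a_t,b_t\},V_t))_{t=1}^{T}\in\mathcal{H}$, where $a_t$ denotes the strict-majority winner at step $t$ (well defined since the ballot is binary and $I$ is odd). The goal is to show $\rho(h)=\mathord{\rankh}$ and that $h$ is also terminal in $\mathcal{H}^\star$; these together yield $\tau(\mathcal{H})\subseteq\tau(\mathcal{H}^\star)$ and $\rho=\rho^\star|_{\tau(\mathcal{H})}$. First I would use democratic legitimacy, together with binariness and $I$ odd, to show $\rankh$ is total: any $\sayh$-chain $z_1\sayh\cdots\sayh z_L$ translates, step by step, into a chain of strict-majority victories $z_k\rankh z_{k+1}$, which transitivity then closes. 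Next, \Cref{observation:transitive_charac} gives $\rho(h)\supseteq\mathord{\rankh}$, and since $\rankh$ is total while $\rho(h)$ is a ranking (hence acyclic), equality follows.

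The technical heart of the argument is showing $h\in\mathcal{H}^\star$, i.e.\ that at each period $t$ the pair $\{a_t,b_t\}$ is $\ranksup{h_{t-1}}$-unranked. Suppose not. If $b_t\ranksup{h_{t-1}} a_t$, then $b_t\rankh a_t$ (via $\ranksup{h_{t-1}}\subseteq\mathord{\rankh}$) combined with the direct winning vote $a_t\rankh b_t$ yields a cycle in $\rho(h)\supseteq\mathord{\rankh}$, contradicting that $\rho(h)$ is a ranking. The subtler case is $a_t\ranksup{h_{t-1}} b_t$: here I would use the second bullet in the definition of protocol to replace $V_t$ with a $V_t'$ in which $b_t$ wins strict majority, producing a length-$t$ history $h'\in\mathcal{H}$; because ballots in an electoral history are distinct, $h'$ extends within $\mathcal{H}$ in at most $\binom{|\mathcal{X}|}{2}-t$ further steps to some terminal $h''$, against whose ranking $\rho(h'')\supseteq\ranksup{h''}\supseteq\ranksup{h_{t-1}}\cup\{(b_t,a_t)\}$ the same cycle argument applies. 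Cleanly setting up this auxiliary terminal history, so that committee sovereignty witnesses the cycle at a \emph{terminal} history, will be the main obstacle.
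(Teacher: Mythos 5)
Your proposal is correct and follows essentially the same route as the paper's proof: the "if" direction by noting the properties are preserved under restriction, and the "only if" direction by using democratic legitimacy to make $\rankh$ total, \Cref{observation:transitive_charac} to get $\rho(h)\supseteq\mathord{\rankh}$ (hence equality), and—at the key step—the vote-replacement closure property of protocols to flip the period-$t$ vote, extend to a terminal history, and let committee sovereignty produce the contradiction. Your only departures are cosmetic (you verify the $\mathcal{H}^\star$-membership conditions along terminal histories rather than proving $\mathcal{H}\subseteq\mathcal{H}^\star$ outright, and you split off the case $b_t\ranksup{h_{t-1}}a_t$, which indeed needs no auxiliary history), and the "main obstacle" you flag is handled exactly as you sketch, since bounded history length guarantees a terminal extension.
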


Thus any binary protocol that does not give the chair arbitrary power must be the transitive protocol, possibly with limitations on what unranked pairs the chair may offer at some histories.
Neglecting such limitations as ad hoc, we arrive at the transitive protocol.

\begin{proof}
	Any restriction of the transitive protocol $(\mathcal{H}^\star,\rho^\star)$
	satisfies the three properties since $(\mathcal{H}^\star,\rho^\star)$ does
	and the properties are preserved under restriction.
	For the converse, let $(\mathcal{H},\rho)$ satisfy the three properties;
	we must show that $\tau(\mathcal{H}) \subseteq \tau(\mathcal{H}^\star)$ and $\rho = \rho^\star|_{\tau(\mathcal{H})}$.

	To establish $\tau(\mathcal{H}) \subseteq \tau(\mathcal{H}^\star)$,
	we show separately that $\mathcal{H} \subseteq \mathcal{H}^\star$
	and that any $h \in \tau(\mathcal{H}) \subseteq \mathcal{H}^\star$ is terminal in $\mathcal{H}^\star$.
	For the former, fix a pair of electoral histories
	$h \sqsubset h' = ( ( \{x_s,y_s\}, V_s ) )_{s=1}^t \in \mathcal{H}$.
	We must show that the pair $x_t,y_t$ is unranked by $\rankh$,
	so suppose toward a contradiction that $x_t \rankh y_t$.
	Then we must have $x_t \mathrel{\rho(h'')} y_t$ for any terminal $h'' \in \mathcal{H}$ such that $h'' \sqsupseteq h$ since $\rho(h'') \supseteq \mathord{\rankhpp} \supseteq \mathord{\rankh}$ by \Cref{observation:transitive_charac}.
	In particular, this must hold for any terminal $h'' \in \mathcal{H}$
	with first $t-1$ elections $(\{x_1,y_1\},V_1), \dots, (\{x_{t-1},y_{t-1}\},V_{t-1})$ and $t^\text{th}$ election $(\{x_t,y_t\},V_t')$, where $V_t'$ satisfies $\abs*{ \{ i : V_t'(i) = y_t \} } > I/2$.
	But for such an $h''$, committee sovereignty of $(\mathcal{H},\rho)$ clearly demands that $y_t \mathrel{\rho(h'')} x_t$---a contradiction.

	For the latter, let $h \in \tau(\mathcal{H}) \subseteq \mathcal{H}^\star$;
	we must show that $h$ is terminal in $\mathcal{H}^\star$, meaning precisely that $\rankh$ is total.
	Since $h$ is binary, a pair $x,y$ is ranked by $\rankh$ iff $h$ gives the committee a say on $x,y$.
	And $h$ gives the committee a say on every pair since $(\mathcal{H},\rho)$ satisfies democratic legitimacy.

	To show that $\rho = \rho^\star|_{\tau(\mathcal{H})}$, fix an $h \in \tau(\mathcal{H})$.
	Then $\rho(h) \supseteq \mathord{\rankh} = \rho^\star(h)$ by \Cref{observation:transitive_charac} and the definition of $\rho^\star$,
	and the containment must be an equality since $\rho(h)$ and $\rho^\star(h)$ are both rankings.
\end{proof}

%%%%%%%%%%%%%%%%%%%%%%%%%%%%%%%%%%%
\subsection{The limits of strategic voting}
\label{suppl:ic}
%%%%%%%%%%%%%%%%%%%%%%%%%%%%%%%%%%%

In this appendix, we show that sincere voting is the unique regret-free strategy of each voter.
In fact, we show something stronger:
deviating from sincere voting results in a no better (a worse) outcome against any (some) strategies of the chair and the other voters,
in the `more aligned' sense.

Let each voter $i \in \{1,\dots,I\}$ have a strict preference $\prefi$ over the alternatives $\mathcal{X}$.
A \emph{strategy} $\strati$ of a voter specifies, after each history and for every offered pair $x,y$, whether $x$ or $y$ should be voted for.
(Since a history records only which pairs were offered and which alternative won in each pair, not who voted how,
this definition of a strategy rules out complex path-dependence.
We shall relax that stricture below.)

A voter's \emph{sincere strategy} is the one that always instructs her to vote for whichever alternative she likes better.
For a strategy $\strat$ of the chair and strategies $\stratv{1},\dots,\stratv{I}$ of the voters, write $\mathord{\rank}(\strat,\stratv{1},\dots,\stratv{I})$ for the outcome (the ranking that results).

\begin{definition}
	\label{definition:obv_b}
	Let $\strati,\stratvp{i}$ be strategies of voter $i$, and $\strat,\stratv{-i}$ strategies of the chair and the other voters.
	$\stratvp{i}$ is \emph{obviously better than} $\stratv{i}$ against $\strat,\stratv{-i}$ iff $\mathord{\rank}(\strat,\stratvp{i},\stratv{-i})$
	is distinct from, and more aligned with $\prefi$ than,
	$\mathord{\rank}(\strat,\stratv{i},\stratv{-i})$.
\end{definition}

When one strategy is obviously better than another, it yields a better outcome no matter what voter $i$'s exact preference over rankings, given only the weak assumption that voter $i$ weakly prefers rankings more aligned with her preference $\prefi$ over alternatives.
By contrast, comparing strategies that are not related by `obviously better than' involves trade-offs.

\begin{definition}
	\label{definition:ds}
	A strategy $\stratv{i}$ of a voter is \emph{dominant} iff for any alternative strategy $\stratvp{i}$,
	\begin{enumerate}[label=(\alph*)]

		\item[\namedlabel{item:ds_all}{$(\mathord{\nexists})$}]
		there exist no strategies $\strat,\stratv{-i}$ of the chair and other voters against which $\stratvp{i}$ is obviously better than $\stratv{i}$, and

		\item[\namedlabel{item:ds_some}{$(\mathord{\exists})$}]
		there exist strategies $\strat,\stratv{-i}$ of the chair and other voters against which $\stratv{i}$ is obviously better than $\stratvp{i}$.

	\end{enumerate}
\end{definition}

Dominance is strong. (Albeit not as strong as conventional dominance, since `obviously better' is only a partial ordering.)
Observe that there can be at most one dominant strategy.
In fact, there is exactly one:

\begin{proposition}
	\label{proposition:ic}
	For each voter, the sincere strategy is (uniquely) dominant.
\end{proposition}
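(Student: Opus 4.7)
The plan is to verify the two clauses of \Cref{definition:ds} for the sincere strategy $\strati$; uniqueness then follows, since for any $\widetilde{\strati} \neq \strati$ the profile furnished by clause $(\mathord{\exists})$ for $\strati$ would violate clause $(\mathord{\nexists})$ for $\widetilde{\strati}$.

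For $(\mathord{\nexists})$, fix an alternative strategy $\stratip$ and any profile $(\strat,\stratv{-i})$; let $\mathord{\rank}$ and $\mathord{\rankp}$ denote the outcomes under sincere and $\stratip$ respectively. If $\mathord{\rank} = \mathord{\rankp}$, distinctness fails and we are done; otherwise consider the first step at which the two paths diverge. Since $\strat$ and $\stratv{-i}$ depend only on history and the histories coincide up to this step, the chair offers the same pair $\{x,y\}$ and voters $-i$ cast identical votes---so the divergence must arise from voter $i$ voting differently and being pivotal. With $x \prefi y$, sincere votes for $x$ and $\stratip$ votes for $y$; pivotality yields $x \rank y$ and $y \rankp x$ in the final rankings. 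Since $x \prefi y$, the pair $(x,y)$ is ranked `right' by $\rank$ and `wrong' by $\rankp$, so $\rankp$ is not more aligned with $\prefi$ than $\rank$, i.e.\ $\stratip$ is not obviously better than $\strati$.

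For $(\mathord{\exists})$, pick any $(h^-,\{x,y\})$ at which $\strati$ and $\stratip$ prescribe different votes, with $x \prefi y$. Design $\strat$ to offer the pairs of $h^-$ in the order recorded in $h^-$, then to offer $\{x,y\}$, and finally to exhaust the remaining pairs by insertion sort on the current proto-ranking (using $\prefi$ in place of the chair's preference). Design $\stratv{-i}$ so that at each step of $h^-$ all $I-1$ other voters cast the prescribed winning vote (rendering voter $i$ non-pivotal, so $h^-$ is reached identically under both $\strati$ and $\stratip$---this uses $I-1 \geq (I+1)/2$ for $I \geq 3$), at $\{x,y\}$ the other voters split $(I-1)/2$--$(I-1)/2$ (making voter $i$ pivotal), and at every subsequent step all $I-1$ of them vote for the $\prefi$-preferred alternative (so it wins regardless of voter $i$). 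The boundary case $I=1$ reduces to a direct simplification, taking $h^-$ to be empty or otherwise $\strati$-reachable.

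Under sincere, $x$ wins the pivotal vote, so $\rank$ is the transitive closure of the proto-ranking at $h^-$ together with $(x,y)$ and the subsequent $\prefi$-consistent votes; under $\stratip$, $y$ wins and $\rankp$ is the analogous closure with $(y,x)$ in place of $(x,y)$. Distinctness of $\rank$ and $\rankp$ is immediate from their disagreement on $(x,y)$. The main obstacle is to verify that $\rank$ is more aligned with $\prefi$ than $\rankp$: for any pair $(a,b)$ with $a \prefi b$ and $a \rankp b$, one must conclude $a \rank b$. A case analysis on the provenance of $a \rankp b$ in $\rankp$ handles the easy cases---if the relation descends from the common prefix, $\rank$ inherits it directly; if it arises from a $\prefi$-consistent post-pivotal vote, $\rank$ obtains it by the same mechanism (adapting the chair's agenda if necessary). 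The delicate case is when $a \rankp b$ follows by transitivity through the edge $(y,x)$ in $\rankp$; to conclude $a \rank b$, one must exhibit an analogous chain through $(x,y)$ in the $\prefi$-consistent direction, exploiting that the chair's post-pivotal agenda eventually ranks every pair $\prefi$-consistently. This last step is the principal technical difficulty and will require careful bookkeeping over the transitivity chains.
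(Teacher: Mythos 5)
Your $(\mathord{\nexists})$ argument is correct and is the same as the paper's (first divergence of the proto-rankings, pivotality of voter $i$, sincerity pins down the direction). The problem is $(\mathord{\exists})$: you defer exactly the step where the whole difficulty of the proposition sits, and with your construction it is not routine, because the alignment claim is sensitive to how the continuation interacts with imposed transitivity. Concretely, take $\mathcal{X}=\{x,a,b,y\}$ with $x \prefi a \prefi b \prefi y$, and suppose the deviation history $h^-$ has ranked $b$ above $x$ and $y$ above $a$ (with $\{x,y\}$ and $\{a,b\}$ still unranked), the deviation occurring on $\{x,y\}$. In the sincere branch, adding $(x,y)$ closes transitively to the \emph{total} ranking $b \rank x \rank y \rank a$, so in particular $b \rank a$ although $a \prefi b$. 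In the deviation branch $\{a,b\}$ is still unranked, and if the continuation ever ranks it the $\prefi$-consistent way---e.g.\ the chair offers $\{a,b\}$ while the others vote per $\prefi$---then $a \rankp b$, and the sincere outcome is \emph{not} more aligned with $\prefi$ than the deviation outcome, so the profile fails to witness $(\mathord{\exists})$. In this instance your insertion-sort continuation happens to offer $\{b,y\}$ first and escapes, but you give no argument that it always escapes (nor is ``insertion sort started from a non-empty proto-ranking'' a defined object in the paper); the ``careful bookkeeping over the transitivity chains'' you postpone is precisely the content that needs proving, and nothing in your write-up shows it can be carried out.

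The paper avoids this difficulty entirely with the \hyperref[corollary:extension]{extension corollary}: extend the proto-ranking at the deviation point to a ranking $\rank$ in which $x$ and $y$ are \emph{adjacent} with $x \rank y$, let $\rankp$ be $\rank$ with $x$ and $y$ transposed, and let half of the other voters vote according to $\rank$ and half according to $\rankp$. Since $\rank$ and $\rankp$ agree on every pair except $\{x,y\}$, every other vote is unanimous among the $I-1$ other voters, so whatever the chair does after the pivotal vote the outcomes are forced to be exactly $\rank$ under $\strati$ and $\rankp$ under $\stratip$; they differ only on the adjacent pair $\{x,y\}$ with $x \prefi y$, so distinctness and alignment are immediate. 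Either adopt that device or supply the missing continuation lemma; as written, property $(\mathord{\exists})$ is not established.
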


\Cref{proposition:ic} remains true, with the same proof, if the definition of dominance is strengthened to allow the alternative strategy $\stratip$ to be an `extended strategy' that can condition on who voted how in previous periods.

\begin{proof}
	Fix a voter $i$, and let $\stratv{i}^\star$ be her sincere strategy.
	We must establish properties \ref{item:ds_all} and \ref{item:ds_some} in the definition of dominance.

	\emph{Property \ref{item:ds_all}:}
	Fix strategies $\strat,\stratv{-i}$ of the chair and other voters and a non-sincere strategy $\stratvp{i}$ of voter $i$, and suppose that $\mathord{\rankp} \coloneqq \mathord{\rank}(\strat,\stratvp{i},\stratv{-i})$ is distinct from $\mathord{\rank} \coloneqq \mathord{\rank}(\strat,\stratv{i}^\star,\stratv{-i})$;
	we must show that $\rankp$ is not more aligned with $\prefi$ than $\rank$.
	Let $T$ be the first period in which the proto-rankings $\rankT$ and $\rankpT$ differ, and let $\{x,y\}$ be the pair voted on in this period, where (wlog) $x \rankT y$ and $y \rankpT x$.
	The two strategy profiles generate the same length-$(T-1)$ history $h$ (by definition of $T$), and thus the same period-$T$ votes $\stratv{j}(h)$ by the other voters $j \neq i$.
	So voter $i$ is pivotal after history $h$, and since $\stratv{i}^\star$ is sincere it must be that $x \prefi y$.
	Thus $\mathord{\rankp} \supseteq \mathord{\rankpT}$ is not more aligned with $\prefi$ than $\mathord{\rank} \supseteq \mathord{\rankT}$.

	\emph{Property \ref{item:ds_some}:}
	Take any non-sincere strategy $\stratvp{i}$.
	Choose strategies $\strat',\stratv{-i}'$ such that $\stratvp{i}$ votes non-sincerely along the terminal history induced by the strategy profile $(\strat',\stratvp{i},\stratv{-i}')$, and let $T$ be the first period in which this occurs.
	Then the proto-ranking in period $T-1$ is the same under the strategy profiles $(\strat',\stratvp{i},\stratv{-i}')$ and $(\strat',\stratv{i}^\star,\stratv{-i}')$; call it $\rankTone$.
	Write $\{x,y\}$ for the pair of alternatives that are voted on in period $T$, where (wlog) $x \prefi y$.

	\begin{namedthm}[Extension claim.]
		\label{claim:extension}
		Let $\rankpp$ be a proto-ranking, and let $x,y \in \mathcal{X}$ be distinct alternatives such that $x \nrankpp y \nrankpp x$.
		Then there is a ranking $\mathord{\rankp} \supseteq \mathord{\rankpp}$ such that $x,y$ are $\rankp$-adjacent with $x \rankp y$.
	\end{namedthm}

	\begin{proof}[Proof of the \protect{\hyperref[claim:extension]{extension claim}}]%
		\renewcommand{\qedsymbol}{$\square$}
		Let $\rankpp$ and $x,y \in \mathcal{X}$ satisfy the hypothesis.
		Then $\mathord{\rankpp} \union \{x,y\}^2$ admits a complete and transitive extension $\rel$ by the \hyperref[lemma:extension]{extension lemma} in \cref{app:pf_charac:unimp_noerror}.
		Note that $x \rel y \rel x$.
		It follows that by appropriately breaking indifferences in $\rel$, we may obtain a ranking $\mathord{\rankp} \supseteq \mathord{\rankpp} \union \{(x,y)\}$ such that $x,y$ are $\rankp$-adjacent with $x \rankp y$.
	\end{proof}%
	\renewcommand{\qedsymbol}{$\blacksquare$}

	By the \hyperref[claim:extension]{extension claim}, there exists a ranking $\mathord{\rank} \supseteq \mathord{\rankTone}$ with $x \rank y$ and $x,y$ $\rank$-adjacent.
	Let $\rankp$ be exactly $\rank$, except with the positions of $x$ and $y$ reversed.
	Clearly $\rank$ is more aligned with $\prefi$ than $\rankp$, and the two are distinct.

	It thus suffices to find strategies $\strat$ and $\stratv{-i}$ such that $\mathord{\rank}(\strat,\stratv{i}^\star,\stratv{-i}) = \mathord{\rank}$ and $\mathord{\rank}(\strat,\stratvp{i},\stratv{-i}) = \mathord{\rankp}$.
	For the chair, let $\strat \coloneqq \strat'$.
	As for $\stratv{-i}$, let half of the other voters $j \in I \setminus \{i\}$ vote according $\rank$ (i.e. vote for $z$ over $w$ iff $z \rank w$), and the rest vote according to $\rankp$.
\end{proof}

%%%%%%%%%%%%%%%%%%%%%%
\subsection{Extension: indecisive votes}
\label{suppl:indecisive}
%%%%%%%%%%%%%%%%%%%%%%

In this appendix, we allow the vote on a pair of alternatives to be indecisive, in which case the chair may choose how they are ranked.
(This occurs e.g. when the chair is a voting member of the committee.)
To that end, we re-interpret $x \perm y$ to mean that the chair is \emph{permitted} to rank $x$ above $y$, and allow for the possibility that both $x \perm y$ and $y \perm x$.
A vote on $\{x,y\}$ with $x \perm y$ is \emph{indecisive} if also $y \perm x$, and \emph{decisive} otherwise.

The `majority will' $\perm$ must still be total and irreflexive, but not necessarily asymmetric.
By appeal to an argument similar to that for \Cref{fact:arise_charac} (\cref{app:extra:arise_charac}, \cpageref{fact:arise_charac}), \emph{any} total and irreflexive relation $\perm$ should be considered.

A history still records what pairs were offered and how each pair was ranked, and a \emph{strategy} now specifies not only what pair to offer after each history, but also how to rank them if the vote is indecisive.
Note that a history does not record whether a vote was decisive or not, and thus that we rule out strategies that condition on this information.
We show in \cref{suppl:non-simple} how this restriction may be dropped.

Regret-free and efficient strategies are defined as before, with `for any majority will $\perm$' replaced by `for any total and irreflexive $\perm$'.
By \Cref{lemma:Wefficient_Wunimprovable} (\cpageref{lemma:Wefficient_Wunimprovable}), efficiency still implies regret-freeness.

When the chair offers $\{x,y\}$ with $x \pref y$ and the vote is indecisive, we say that she \emph{ranks in her interest} iff she ranks $x$ above $y$, and \emph{against her interest} otherwise.
Augment the definition of insertion sort in §\ref{sec:is} so that the chair ranks in her interest whenever a vote is indecisive.
\Cref{theorem:is_efficient} (§\ref{sec:is}, \cpageref{theorem:is_efficient}) remains true, with the same proof: insertion sort is efficient, and thus regret-free.

The characterisations of regret-free strategies (\Cref{theorem:regretfree_efficient,theorem:regretfree_errors} in §\ref{sec:charac}, \cpageref{theorem:regretfree_efficient,theorem:regretfree_errors}) extend as follows:

\begin{namedthm}[Theorem (\ref{theorem:regretfree_efficient}+\ref{theorem:regretfree_errors})$\boldsymbol{'}$.]
	For a strategy $\strat$, the following are equivalent:
	\hspace{1pt}
	\begin{enumerate}[label=(\alph*)]

		\item \label{item:charac_indecisive:regret-free}
		$\strat$ is regret-free.

		\item \label{item:charac_indecisive:efficient}
		$\strat$ is efficient.

		\item \label{item:charac_indecisive:noerror}
		$\strat$ never misses an opportunity, takes a risk, or ranks against the chair's interest.

	\end{enumerate}
\end{namedthm}

\begin{proof}
	We establish the implications depicted in \Cref{fig:charac} (\cpageref{fig:charac}).
	The proof that \ref{item:charac_indecisive:noerror} implies \ref{item:charac_indecisive:efficient} given in \cref{app:pf_charac:noerror_efficient} applies essentially unchanged.
	That \ref{item:charac_indecisive:efficient} implies \ref{item:charac_indecisive:regret-free} follows from \Cref{lemma:Wefficient_Wunimprovable} (\cpageref{lemma:Wefficient_Wunimprovable}).

	It remains to show that \ref{item:charac_indecisive:regret-free} implies \ref{item:charac_indecisive:noerror}.
	To establish that regret-free strategies never miss an opportunity or take a risk, it suffices to replicate the argument in \cref{app:pf_charac:unimp_noerror}.
	To show that a regret-free strategy must not rank against the chair's interest, we prove the contra-positive.
	Let $\strat$ be a strategy that ranks against the chair's interest under some majority will $\perm$; we shall find a majority will $\permp$ such that the outcome $\rank$ of $\strat$ under $\permp$ fails to be $\permp$-unimprovable.
	In particular, we shall exhibit a $\permp$-reachable ranking $\mathord{\rankp} \neq \mathord{\rank}$ that is more aligned with $\pref$ than $\rank$.

	Let $T$ be the first period in which $\strat$ ranks against the chair's interest under $\perm$.
	Write $\rankTone$ for the associated end-of-period-$(T-1)$ proto-ranking, and let $\{x,y\}$ be the pair offered in period $T$.
	By hypothesis, $x \perm y \perm x$, and the chair chooses to rank $y$ above $x$.

	By the \hyperref[claim:extension]{extension claim} in \cref{suppl:ic} (\cpageref{claim:extension}), there exists a ranking $\mathord{\rankp} \supseteq \mathord{\rankTone} \union \{(x,y)\}$ such that $x,y$ are $\rankp$-adjacent.
	Define a majority will $\permp$ by $\mathord{\permp} \coloneqq \mathord{\rankp} \union \{(y,x)\}$, and denote by $\rank$ the outcome of $\strat$ under $\permp$.
	Clearly $\rankp$ is $\permp$-reachable.
	It remains to show that $\mathord{\rank} \neq \mathord{\rankp}$ and that $\rankp$ is more aligned with $\pref$ than $\rank$.

	For the former, since $x \rankp y$, it suffices to show that $y \rank x$.
	To this end, observe that that $\mathord{\rankTone} \subseteq \mathord{\rankp} \subseteq \mathord{\permp}$.
	Thus the history of length $T-1$ generated by $\strat$ and $\permp$ is the same as that generated by $\strat$ and $\perm$, which means in particular that $\{x,y\}$ is offered in period $T$, and that $y$ is ranked above $x$ if the vote is indecisive.
	Under $\permp$, the vote is indeed indecisive ($x \permp y \permp x$), and thus $y \rank x$ as desired.

	To show that $\rankp$ is more aligned with $\pref$ than $\rank$,
	observe that $\permp$ agrees with $\rankp$ on every pair $\{z,w\} \nsubseteq \{x,y\} = [x,y]_{\mathord{\rankp}}$.
	Thus by \Cref{lemma:unimp_noerror:1} in \cref{app:pf_charac:unimp_noerror} (\cpageref{lemma:unimp_noerror:1}), $\rank$ and $\rankp$ agree on every pair $\{z,w\} \neq \{x,y\}$.
	Since $x \pref y$ and $x \rankp y$, it follows that $\rankp$ is more aligned with $\pref$ than $\rank$.
\end{proof}

All of the remaining results also extend:
the characterisations of regret-freeness are tight (\Cref{proposition:regretfree_efficient_tight,proposition:regretfree_errors_tight} in §\ref{sec:charac}, \cpageref{proposition:regretfree_efficient_tight,proposition:regretfree_errors_tight}),
the outcome-equivalents of insertion sort are (include) the lexicographic (amendment) strategies (\Cref{theorem:is_lexicographic} and \Cref{proposition:is_amendment} in §\ref{sec:charac}, \cpageref{theorem:is_lexicographic,proposition:is_amendment}),
and sincere voting is dominant (\Cref{proposition:ic} in \cref{suppl:ic}).

%%%%%%%%%%%%%%%%%%%%%%
\subsection{Extension: strategies with extended history-dependence}
\label{suppl:non-simple}
%%%%%%%%%%%%%%%%%%%%%%

By definition, a strategy does not condition on who voted how in the past.
To relax this restriction, let an \emph{extended history} be a sequence of pairs offered and votes cast by each member on each pair, and let an \emph{extended strategy} assign to each extended history an unranked pair of alternatives.

Recall from \cref{app:extra:arise_charac} (\cpageref{definition:voting_profile}) the definition of voting profiles.
The \emph{outcome} of an extended strategy $\strat$ under a voting profile $(\mathord{\votei})_{i=1}^I$ is the final ranking that results.
A strategy is \emph{regret-free} (\emph{efficient}) iff its outcome under every voting profile $(\mathord{\votei})_{i=1}^I$ is $\perm$-unimprovable ($\perm$-efficient), where $\perm$ denotes the majority will of $(\mathord{\votei})_{i=1}^I$.

Insertion sort is clearly an extended strategy, so is efficient by \Cref{theorem:is_efficient} (§\ref{sec:is}, \cpageref{theorem:is_efficient}).
Our characterisation of regret-freeness (\Cref{theorem:regretfree_efficient,theorem:regretfree_errors} in §\ref{sec:charac}, \cpageref{theorem:regretfree_efficient,theorem:regretfree_errors}) remains valid:

\begin{namedthm}[Theorem (\ref{theorem:regretfree_efficient}+\ref{theorem:regretfree_errors})$\boldsymbol{''}$.]
	For an extended strategy $\strat$, the following are equivalent:
	\hspace{1pt}
	\begin{enumerate}[label=(\alph*)]

		\item \label{item:charac_non-simple:regret-free}
		$\strat$ is regret-free.

		\item \label{item:charac_non-simple:efficient}
		$\strat$ is efficient.

		\item \label{item:charac_non-simple:noerror}
		$\strat$ never misses an opportunity or takes a risk.

	\end{enumerate}
\end{namedthm}

\begin{proof}
	We prove the implications depicted in \Cref{fig:charac} (\cpageref{fig:charac}).
	That \ref{item:charac_non-simple:noerror} implies \ref{item:charac_non-simple:efficient} follows from the argument in \cref{app:pf_charac:noerror_efficient}, which applies unchanged to extended strategies.
	That \ref{item:charac_non-simple:efficient} implies \ref{item:charac_non-simple:regret-free} follows from \Cref{lemma:Wefficient_Wunimprovable} (\cpageref{lemma:Wefficient_Wunimprovable}).

	To show that \ref{item:charac_non-simple:regret-free} implies \ref{item:charac_non-simple:noerror}, we prove the contra-positive by augmenting the argument in \cref{app:pf_charac:unimp_noerror}.
	Take an extended strategy $\strat$ that misses an opportunity or takes a risk under some voting profile $(\mathord{\votei})_{i=1}^I$, and let $t$ be the first period in which this occurs.
	Let $\perm$ be the majority will of $(\mathord{\votei})_{i=1}^I$.
	Construct an alternative majority will $\permp$ exactly as in the proof in \cref{app:pf_charac:unimp_noerror}.
	Construct in addition a voting profile $(\mathord{\votepi})_{i=1}^I$ whose majority will is $\permp$, and such that the extended history up to time $t$ under $\strat$ and $(\mathord{\votepi})_{i=1}^I$ is the same as under $\strat$ and $(\mathord{\votei})_{i=1}^I$.
	The argument in \cref{app:pf_charac:unimp_noerror} ensures that the outcome of $\strat$ under $(\mathord{\votepi})_{i=1}^I$ fails to be $\permp$-unimprovable.
	Thus $\strat$ fails to be regret-free.
\end{proof}

\end{appendices}

%______________________________________________________________________________

%    ____  _ _     _ _                             _
%   | __ )(_) |__ | (_) ___   __ _ _ __ __ _ _ __ | |__  _   _
%   |  _ \| | '_ \| | |/ _ \ / _` | '__/ _` | '_ \| '_ \| | | |
%   | |_) | | |_) | | | (_) | (_| | | | (_| | |_) | | | | |_| |
%   |____/|_|_.__/|_|_|\___/ \__, |_|  \__,_| .__/|_| |_|\__, |
%                            |___/          |_|          |___/

% \pagebreak
\printbibliography[heading=bibintoc]

%______________________________________________________________________________

\end{document}